\newtheorem{thm}{Theorem}
\newtheorem{cor}{Corollary}
\newtheorem{Lem}{Lemma}
\newtheorem{prop}{Proposition}
\DeclareMathOperator*{\esssup}{ess\,sup}
\DeclareMathOperator*{\essinf}{ess\,inf}
\newcommand{\argmin}{\arg\!\min}
\begin{document}

\title{A Direct Approach for Sparse Quadratic Discriminant Analysis}

\author{\name Binyan Jiang\email by.jiang@polyu.edu.hk \\
       \addr Department of Applied Mathematics\\
      The Hong Kong Polytechnic University\\
       Hong Kong, China
       \AND
              \name Xiangyu Wang \email xiangyuw@google.com \\
       \addr Google LLC\\
       1600 Amphitheatre Pkwy \\
      Mountain view, CA 94043, USA 
       \AND
       \name Chenlei Leng \email C.Leng@warwick.ac.uk \\
       \addr Department of Statistics\\
       University of Warwick\\ and Alan Turing Institute\\
       Coventry, CV4 7AL, UK}

\editor{Saharon Rosset}

\maketitle

\begin{abstract}
Quadratic discriminant analysis (QDA) is a standard tool for classification due to its simplicity and flexibility. Because the number of its parameters scales quadratically with the number of the variables, QDA is not practical, however, when the dimensionality is relatively large.  To address this,  we propose a novel procedure named DA-QDA for QDA in
 analyzing high-dimensional data. Formulated in a simple and coherent framework, DA-QDA aims to directly estimate the key quantities in the Bayes discriminant function including quadratic interactions and a linear index of the variables for classification. Under  appropriate sparsity assumptions, we establish consistency results for estimating the interactions and the linear index, and further demonstrate that the misclassification rate of our procedure converges to the optimal Bayes risk, even when the
 dimensionality is exponentially high with respect to the sample size.   An efficient algorithm based on the alternating direction method
  of multipliers (ADMM) is developed for finding interactions, which  is much faster than its competitor in the literature. The promising performance of DA-QDA is illustrated via extensive simulation studies and the analysis of four real datasets.
\end{abstract}

\begin{keywords}
  Bayes Risk, Consistency, High Dimensional Data, Linear Discriminant Analysis, Quadratic Discriminant Analysis, Sparsity
\end{keywords}

\section{Introduction}

Classification is a central topic in statistical learning and data analysis. Due to its simplicity for producing quadratic decision boundaries, quadratic discriminant analysis (QDA) has become an important technique for classification, adding an extra layer of flexibility to the linear discriminant analysis (LDA); see \cite{Hastie:etal:2009}. Despite its usefulness, the number of the parameters needed by QDA scales squarely with that of the variables, making it quickly inapplicable for problems with large or even moderate dimensionality. This problem is extremely eminent in the era of big data, as one often encounters datasets with the dimensionality larger, often times substantially larger, than the sample size. This paper aims to develop a novel classification approach named DA-QDA, short for \underline{D}irect \underline{A}pproach for \underline{QDA} to make QDA useful for analyzing ultra-high dimensional data.

For ease of presentation, we focus on binary problems where observations are from two classes.
Suppose that the observations from class $1$ follow $X \sim N(\mu_1, \Sigma_1)$ and those from class $2$ satisfy $Y
\sim N(\mu_2, \Sigma_2)$, where $\mu_k \in \mathbb{R}^p, k=1, 2$ are the mean vectors and $\Sigma_k \in \mathbb{R}^{p\times p}, k=1, 2$ are the two covariance matrices.  Compared with LDA,  it is assumed  that $\Sigma_1 \not= \Sigma_2$ in QDA, which gives rise to a class boundary that is quadratic in terms of the variables. Bayes' rule classifies a new observation $z$ to class $1$ if $\pi_1 f(z|\mu_1,\Sigma_1) > \pi_2 f(z|\mu_2, \Sigma_2)$, where $f(z|\mu,\Sigma)$ is the probability density function of a multivariate normal distribution with mean $\mu$ and variance $\Sigma$, and $\pi_1$ and $\pi_2$ are the two prior probabilities. Following simple algebra, the Bayes discriminant
function for a new observation $z$ is seen as
\[ D(z)=(z-\mu)^T \Omega (z-\mu)+\delta^T (z-\mu)+ \eta ,
\]
where $\mu=(\mu_1+\mu_2)/2$ is the mean of the two centroids,
 $\Omega=\Sigma_2^{-1}-\Sigma_1^{-1}$  is the difference of the two precision matrices,
 $\delta =
(\Sigma_1^{-1}+\Sigma_2^{-1})(\mu_1-\mu_2)$, and  $\eta=2\log (\pi_1/\pi_2)+\frac{1}{4}(\mu_1-\mu_2)^T\Omega(\mu_1-\mu_2)+\log |\Sigma_2|-\log |\Sigma_1|$; see for example \cite{Anderson:1958aa}. Note that the discriminant function becomes that of LDA when $\Sigma_1=\Sigma_2=\Sigma$. Completely analogous to a two-way interaction model in linear regression, $\delta$ in  $D(z)$ can be seen as a linear index of the variables whose nonzero entries play the role of main effects, whereas
the nonzero entries in $\Omega$ can be understood as interactions of second-order between the variables. Although there are other ways to represent the discriminant function, $D(z)$ is used as it is a quadratic function of $z-\mu$, making the discriminant function location-invariant with respect to the coordinates.
For easy reference,
we shall call subsequently the parameters $\Omega, \delta, \mu$, and $\eta$ in the Bayes discriminant function collectively as Bayes components.

\subsection{Our contributions}
We highlight the main contributions of this paper as follows.
\begin{itemize}
\item[1.] DA-QDA is the first \textit{direct} approach for sparse QDA in a high dimensional setup. That is, $\Omega, \delta, \mu$, and $\eta$  in the Bayes discriminant function are directly estimated with only sparse assumptions on $\Omega$ and $\delta$ but not on other intermediate quantities;
\item[2.] For estimating $\Omega$, an intermediate step of DA-QDA and a problem of interest in its own right, we develop a new algorithm which is much more computationally and memory efficient than its competitor. See Section 2.1;
\item[3.] We develop new theory to show the theoretical attractiveness of the DA-QDA. In particular,  the theory for estimating $\delta$ is new. See Section 3;
\item[4.] The problem of finding the right intercept $\eta$ is of considerable interest but a general theory on estimated $\eta$ is lacking \citep{Hastie:etal:2009}. We provide a first theory for the convergence property of our estimated $\eta$. See Section 3.4;
\item[5.] In extensive simulation study and real data analysis, the DA-QDA approach outperforms many of its competitors, especially when variables under 
considerable interact. See Section 4.
\end{itemize}

\subsection{Literature review}
As more and more modern datasets are high-dimensional,  the problem of classification in this context has received increasing attention as the usual practice of using empirical estimates for the Bayes components is no longer applicable.
\cite{Bickel2004} first highlighted that LDA is equivalent to random guessing in the worst case scenario when the dimensionality is larger than the sample size. Scientifically and practically in many problems,  however, the components in the Bayes discriminant function can be assumed sparse. In the problem we study in this paper, loosely speaking, the notion of sparsity entertains that the two-way interaction representation of the model only admits a small number of main effects and interactions.
In the past few years, a plethora of methods built on suitable sparsity assumptions have been proposed to estimate the main effects as in LDA; see for example \cite{Shao}, \cite{Cai}, \cite{Fan2012}, \cite{Mai:2012aa}, \cite{Mai:Zou:2013}, and \cite{Jiang:etal:2015}. Other related linear methods for high-dimensional classification can be found in \cite{Leng:2008}, \cite{Witten:Tibshirani:2011}, \cite{Pan:etal:2015}, \cite{Mai:etal:2015}, among others.

As pointed out by \cite{Fan:2015aa} and \cite{Sun:Zhao:2015},  it has been increasingly recognized that the assumption of a common covariance matrix across different classes, needed by LDA, can be restrictive in many practical problems. The extra layer of flexibility offered by QDA that deals with two-way variable interactions makes it extremely attractive for such problems. \cite{Li:2013aa} studied sparse QDA by making sparsity assumptions on $\mu_2-\mu_1, \Sigma_1, \Sigma_2$ and $\Sigma_1-\Sigma_2$ and proposed their sparse estimates. The assumptions made are not directly on the key quantities needed in the discriminant function $D(z)$. In addition, good estimates of these four quantities do not necessarily translate to better classification, a phenomenon similarly argued and observed by \cite{Cai} and \cite{Mai:2012aa} for LDA. \cite{Fan:2015aa} proposed a screening method to identify interactions when $\Omega$ admits a two block sparse structure after permutation, before applying penalized logistic regression on the identified interactions and all the main effects to estimate a sparser model. Their method cannot deal with problems where the support of $\Omega$ is in general positions. 
Further,  the use of a separate second-step penalized logistic regression to determine important interactions and main effects is less appealing from a methodological perspective.  \cite{Fan:etal:2015} suggested a Rayleigh quotient based method for which all the fourth cross-moments of the predictors have to be estimated. Despite all these efforts, a direct yet simple approach for QDA with less stringent assumptions than in \cite{Li:2013aa} for high-dimensional analysis is missing.

The proposed DA-QDA approach in this paper aims to overcome the difficulties mentioned above. In particular, compared with \cite{Li:2013aa}, we only make sparsity assumptions on $\Omega$ and $\delta$ and estimate these two quantities directly in DA-QDA. Compared to \cite{Fan:2015aa}, we allow the interactions in $\Omega$ in general positions, without resorting to a second stage approach for interactions and main effects selection. Compared with \cite{Fan:etal:2015}, we operate directly on QDA for which only second cross-moments of the variables are needed.

DA-QDA can also be understood as a novel attempt to select interactions in the discriminant function that correspond to the nonzero entries in $\Omega$. The problem of interaction selection is a problem of its own importance and has been studied extensively recently for regression problems. See, for example,  \cite{Hao:Zhang:2014} and references therein. 
The problem of estimating $\Omega$ alone has also attracted attention lately in a different context.  To understand how the structure of a  network differs between different conditions and to find the common structures of two different Gaussian graphical models, \cite{Zhao:2014aa} proposed a direct approach for estimating $\Omega$ by formulating their procedure via the Dantzig selector. A severe limitation is that their linear programming procedure needs to deal with $O(p^2)$ constraints, and the memory requirement by the large constraint matrix is of the order $O(p^4)$. As a result, an iteration of the algorithm in \cite{Zhao:2014aa} requires $O(sp^4)$ computations, where $s$ is the cardinality of the support of $\Omega$.
 Apparently, their method does not scale well to high dimensional data. In \cite{Zhao:2014aa}, problems with maximum size $p=120$ were attempted and it was reported that a long time was needed to run their method. In contrast,
we use a lasso formulation and develop a new algorithm based on the alternating direction methods of multipliers (ADMM)
 for estimating $\Omega$. The memory requirement of our algorithm is of the order $O(p^2)$ and its computational cost is of the order $O(p^3)$ per iteration, enabling DA-QDA to easily handle much larger problems.

The rest of the paper is organized as follows. Section 2 outlines the main DA-QDA methodology for estimating $\Omega$ and $\delta$. A novel algorithm based on ADMM for estimating $\Omega$ is developed. Section 3 investigates the theory of DA-QDA and provides various consistency results for estimating $\Omega$, $\delta$, and $\eta$, as well as establishing the consistency of the misclassification risk relative to the Bayes risk. Section 4 presents extensive numerical studies and analysis of four real datasets. Comparison with other classification methods demonstrates that DA-QDA is very competitive in estimating the sparsity pattern and the parameters of interest.
We provide a short discussion and outline future directions of research in Section 5. All the proofs are relegated to the Appendix.

\section{DA-QDA Methodology}

To obtain an estimator for the Bayes discriminant function $D(z)$, we propose direct estimators for the two of its Bayes components $\Omega=\Sigma_2^{-1}-\Sigma_1^{-1}$ and $\delta=(\Sigma_1^{-1}+\Sigma_2^{-1})(\mu_1-\mu_2)$ under appropriate sparsity assumptions.
Given data $X_{j},~j=1,..,n_1$ from class 1 and $Y_{k},~k=1,...,n_2$ from class 2, we can estimate $\mu_i$ and $\Sigma_i,~i=1,2,$ via their sample versions as
\[  \hat\mu_1 = \frac{1}{n_1}\sum_{j=1}^{n_1} X_{j},~\hat\mu_2 = \frac{1}{n_2}\sum_{j=1}^{n_2} Y_{j};\]
\[\hat\Sigma_1= \frac{1}{n_1} \sum_{j=1}^{n_1} (X_{j}-\hat\mu_1)(X_{j}-\hat\mu_1)^T,~\hat\Sigma_2 = \frac{1}{n_2} \sum_{j=1}^{n_2} (Y_{j}-\hat\mu_2)(Y_{j}-\hat\mu_2)^T.
\]
When $p>>\max\{n_1, n_2\}$, $\hat\Sigma_1$ and $\hat\Sigma_2$ are degenerate and cannot be directly used for estimating $\Omega$.
Denote the DA-QDA estimates of $\Omega$ as $\hat\Omega$ and $\delta$ as $\hat{\delta}$ which will be obtained as in (\ref{eq:DA-QDA}) and (\ref{eq:delta}) respectively.
For a given scalar $\eta$, our DA-QDA procedure
 classifies a new observation $z$ in to class $1$ if
 \begin{eqnarray}
 \left[z-\frac{\hat{\mu}_1+\hat{\mu}_2}{2}\right]^T \hat{\Omega}  \left[z-\frac{\hat{\mu}_1+\hat{\mu}_2}{2}\right]+\hat{\delta}^T  \left[z-\frac{\hat{\mu}_1+\hat{\mu}_2}{2}\right]+ \eta>0,
 \label{eq:DA-QDA00}
\end{eqnarray}
and classifies $z$ into class $2$ otherwise. From \eqref{eq:DA-QDA00}, we emphasize again that the nonzero entries in $\hat\Omega$ are the interactions of the variables that contribute to the classification rule, while the nonzero entries in $\hat\delta$ are the main effects of the variables that are used for classification. In the linear discriminant analysis when $\Sigma_1=\Sigma_2$, the rule in \eqref{eq:DA-QDA00} becomes the LDA rule which is linear in the variables. As $\eta=2\log (\pi_1/\pi_2)+\frac{1}{4}(\mu_1-\mu_2)^T\Omega(\mu_1-\mu_2)+\log |\Sigma_2|-\log |\Sigma_1|$ is a scalar, we can choose $\eta$ as $\hat\eta$ using a simple grid search, bypassing the need to estimate the determinants of $\Sigma_1$ and $\Sigma_2$.  This is the strategy implemented in Section 4 and its analytical justification is provided in Section 3.4. Thus in the following, we shall focus on the estimation of $\Omega$ and $\delta$, under certain sparsity assumptions on these two quantities.

 \subsection{Estimating $\Omega$}
Recall $\Omega=\Sigma_2^{-1}-\Sigma_1^{-1}$. It may be attempting to first estimate $\Sigma_1^{-1}$ and $\Sigma_2^{-1}$  as intermediate quantities before taking their difference. It is known, however, that accurate estimation of a covariance matrix or its inverse can be difficult in general in high dimensions unless additional assumptions are imposed (cf. \cite{Bickel:2008aa}). Because $\Omega$ is the quantity of interest that appears in the Bayes' rule, we propose to estimate it directly. To proceed, we note that
 $\Sigma_2\Omega\Sigma_1=\Sigma_1\Omega\Sigma_2=\Sigma_1-\Sigma_2$. If we define a loss function as $Tr\left(\Omega^T\Sigma_1\Omega \Sigma_2\right) /2-Tr\left( \Omega(\Sigma_1-\Sigma_2) \right)$, the loss function is minimized when $\Omega$ satisfies $\Sigma_2\Omega\Sigma_1=\Sigma_1-\Sigma_2$  or $\Omega=\Sigma_2^{-1}-\Sigma_1^{-1}$.
This simple observation motivates the following penalized loss formulation for estimating $\Omega$ by replacing $\Sigma_j, j=1,2$ by their empirical estimates as
\begin{equation}
\hat{\Omega}=\arg\min_{\Omega\in R^{p\times p}} ~\frac{1}{2}Tr\left(\Omega^T\hat\Sigma_1\Omega \hat\Sigma_2\right) -Tr\left( \Omega(\hat\Sigma_1-\hat\Sigma_2) \right)+\lambda\| \Omega \|_1,
\label{eq:DA-QDA}
\end{equation}
where $\|\Omega\|_1$ is the $\ell_1$ penalty of the vectorized $\Omega$ to encourage sparsity and $\lambda$ is the tuning parameter. To obtain a symmetric estimator for $\Omega$, we may simply use $\hat{\Omega}_0=\frac{1}{2}(\hat{\Omega}+\hat{\Omega}^T)$ after $\hat\Omega$ is obtained.
Because the second derivative of the above loss function is $\hat\Sigma_2\otimes\hat\Sigma_1$ which is nonnegative definite, the formulation in \eqref{eq:DA-QDA} is a  convex problem and can be solved by a convex optimization algorithm.

We now develop an ADMM algorithm to solve for $\hat\Omega$ in \eqref{eq:DA-QDA} \citep{Boyd:etal:2011,Zhang:2014aa}.
First write the optimization problem in \eqref{eq:DA-QDA} as 
\begin{equation}
\min_{\Omega\in R^{p\times p}} ~\frac{1}{2}Tr\left(\Omega^T\hat\Sigma_1\Omega \hat\Sigma_2\right) -Tr\left( \Omega(\hat\Sigma_1-\hat\Sigma_2) \right)+\lambda\| \Psi\|_1, ~s.t. ~\Psi=\Omega.
\label{eq:DA-QDA1}
\end{equation}
From this, we can form the augmented Lagrangian as
\begin{eqnarray*}
 L(\Omega,\Psi,\Lambda)&=&
\frac{1}{2}Tr\left(\Omega^T\hat\Sigma_1\Omega \hat\Sigma_2\right) - Tr\left( \Omega(\hat\Sigma_1-\hat\Sigma_2) \right)+\lambda\| {\Psi} \|_1\\
&&+Tr\left(\Lambda (\Omega-{\Psi})\right)+\frac{\rho}{2}\|\Omega-\Psi\|_F^2,
\end{eqnarray*}
where $\|\cdot\|_F$ is the Frobenius norm of a matrix and $\rho$ is a parameter in the ADMM algorithm. See Section 4 for more details.
Given the current estimate $\Omega^k, \Psi^k, \Lambda^k$, we update successively
\begin{eqnarray} 
\Omega^{k+1}=\argmin_{\Omega\in R^{p\times p}} L(\Omega, \Psi^k, \Lambda^k),\label{eq:Lambda} \\
\Psi^{k+1}=\argmin_{\Psi\in R^{p\times p}}L(\Omega^{k+1}, \Psi, \Lambda^k),\label{eq:Omega} \\
\Lambda^{k+1}=\Lambda^k+\rho(\Omega^{k+1}-{\Psi}^{k+1}). \label{eq:Gamma}
\end{eqnarray}
Write $\hat\Sigma_i=U_iD_iU_i^T$ as the eigenvalue decomposition of $\hat\Sigma_i$ where $D_i=\mbox{diag}(d_{i1},\cdots,d_{ip}),~i=1,2$. Denote $A^k=(\hat\Sigma_1-\hat\Sigma_2)-\Lambda^k+\rho{\Psi}^k$ and organize the diagonals of $(D_2\otimes D_1+\rho I)^{-1} $ in a matrix $B$, where $B_{jk}=1/(d_{1j}d_{2k}+\rho)$.
The following proposition provides explicit solutions for \eqref{eq:Lambda} and \eqref{eq:Omega} which ensures efficient updation of our algorithm in each step.
\begin{prop}\label{solution}
Given $\Psi^k$, $\Lambda^k, \rho$ and $\lambda$, the solution for (\ref{eq:Lambda}) is given as:
 \begin{equation*}
 \Omega^{k+1} = U_1 [B \circ (U_1^T A^kU_2)] U_2^T;
 \end{equation*}
Given $\Omega^{k+1}, \Lambda^k$ and $\rho$, the solution for (\ref{eq:Omega}) is given as:
 \begin{equation} \Psi^{k+1}= S(\Omega^{k+1}+\frac{\Lambda^k}{\rho}, \frac{\lambda}{\rho}),\label{eq:UpdateGamma}
  \end{equation}
where $S$ is known as the soft-thresholding operator on a matrix. Namely,
  the $(i,j)$ entry of $S(A, b)$ for a matrix $A=(a_{ij})$ is $\mbox{sign}(a_{ij})(|a_{ij}|-b)_+$ where $(c)_+=c$ for $c>0$ and $(c)_+=0$
 otherwise.
\end{prop}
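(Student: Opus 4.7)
The plan is to handle the two ADMM subproblems independently by writing down their first-order optimality conditions, invoking convexity to guarantee that the stationary point is the unique global minimizer, and then rearranging each equation into the claimed closed form.

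For the $\Omega$-step (\ref{eq:Omega}), only the quadratic-in-$\Omega$ trace term, the linear trace term, the multiplier term, and the quadratic penalty depend on $\Omega$. Using symmetry of $\hat\Sigma_1$ and $\hat\Sigma_2$, the derivative of $\tfrac12 Tr(\Omega^T\hat\Sigma_1\Omega\hat\Sigma_2)$ is $\hat\Sigma_1\Omega\hat\Sigma_2$, and setting the overall gradient to zero gives the Stein-type matrix equation
\[
\hat\Sigma_1\,\Omega\,\hat\Sigma_2 + \rho\,\Omega \;=\; (\hat\Sigma_1-\hat\Sigma_2)-\Lambda^k+\rho\Psi^k \;=\; A^k.
\]
Substituting the spectral decompositions $\hat\Sigma_i=U_iD_iU_i^T$ and conjugating by $U_1^T(\cdot)U_2$ converts this into $D_1\widetilde\Omega D_2 + \rho\widetilde\Omega = U_1^T A^k U_2$ with $\widetilde\Omega=U_1^T\Omega U_2$. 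Since $D_1,D_2$ are diagonal, the equation decouples entrywise into $\widetilde\Omega_{ij}(d_{1i}d_{2j}+\rho)=(U_1^TA^kU_2)_{ij}$, so $\widetilde\Omega=B\circ(U_1^TA^kU_2)$; undoing the rotation yields the stated formula for $\Omega^{k+1}$. Uniqueness is automatic because vectorization reveals the Hessian to be $\hat\Sigma_2\otimes\hat\Sigma_1+\rho I$, which is strictly positive definite for $\rho>0$.

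For the $\Psi$-step (\ref{eq:Gamma}), discarding terms independent of $\Psi$ and completing the square reduces the objective to
\[
\lambda\|\Psi\|_1 + \tfrac{\rho}{2}\bigl\|\Psi-(\Omega^{k+1}+\Lambda^k/\rho)\bigr\|_F^2 + \text{const},
\]
which is separable across the entries of $\Psi$ and minimized entrywise by scalar soft-thresholding at level $\lambda/\rho$, establishing (\ref{eq:UpdateGamma}).

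The only substantive step is the $\Omega$-update: the key observation is that $\hat\Sigma_1\Omega\hat\Sigma_2+\rho\Omega=A^k$ corresponds, after vectorization, to the linear operator $\hat\Sigma_2\otimes\hat\Sigma_1+\rho I$, which is simultaneously diagonalized by $U_2\otimes U_1$; once this is recognized, the Hadamard-product form with the reciprocal matrix $B$ is essentially forced, and one only needs to verify indexing conventions so that $B_{jk}=1/(d_{1j}d_{2k}+\rho)$ pairs with the correct entry of $U_1^TA^kU_2$. The $\Psi$-update is then a textbook application of the proximal mapping for the vectorized $\ell_1$ norm.
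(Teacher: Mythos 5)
Your proof is correct and follows essentially the same route as the paper: both derive the stationarity condition $\hat\Sigma_1\Omega\hat\Sigma_2+\rho\Omega=A^k$, diagonalize it using the spectral decompositions of $\hat\Sigma_1,\hat\Sigma_2$ to obtain the Hadamard-product solution, and reduce the $\Psi$-step to entrywise soft-thresholding as the proximal map of the $\ell_1$ penalty. The only cosmetic difference is that you conjugate the matrix equation directly by $U_1^T(\cdot)U_2$, whereas the paper vectorizes and manipulates the Kronecker form $(\hat\Sigma_2\otimes\hat\Sigma_1+\rho I)^{-1}$ explicitly; these are the same diagonalization in different notation.
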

Note that for a given $\rho$, when updating $\Omega$, we only need to update $A^k$ which involves simple matrix subtraction, and then use matrix multiplication. Therefore the update in \eqref{eq:Lambda} can be efficiently implemented.
Following is a brief derivation on how we obtain the explicit solutions given in Proposition \ref{solution}.
For \eqref{eq:Lambda}, note that the derivative of $L$  with respect to $\Omega$ is
\[ \hat\Sigma_1\Omega\hat\Sigma_2-(\hat\Sigma_1-\hat\Sigma_2)+\Lambda^k+\rho(\Omega-\Psi^k)=(\hat\Sigma_1\Omega\hat\Sigma_2+\rho\Omega)-(\hat\Sigma_1-\hat\Sigma_2)+\Lambda^k-\rho\Psi^k,
\] which can be written as
\[ (\hat\Sigma_2 \otimes \hat\Sigma_1+\rho I) vec(\Omega)=vec\left((\hat\Sigma_1-\hat\Sigma_2)-\Lambda^k+\rho\Psi^k\right),
\]
where $vec$ is the vector operator. We have
\[
 vec(\Omega)=(\hat\Sigma_2 \otimes \hat\Sigma_1+\rho I) ^{-1}vec\left((\hat\Sigma_1-\hat\Sigma_2)-\Lambda^k+\rho\Psi^k\right).
 \]
Using the equality $vec(AXB)=(B^T\otimes A)vec(X)$, and
 \begin{eqnarray*} 
 (\hat\Sigma_2 \otimes \hat\Sigma_1+\rho I) ^{-1}&=&[(U_2\otimes U_1)(D_2\otimes D_1+\rho I)(U_2^T\otimes U_1^T)]^{-1}\\
 &=&(U_2\otimes U_1)(D_2\otimes D_1+\rho I)^{-1}(U_2^T\otimes U_1^T),
 \end{eqnarray*}
we have
 \begin{eqnarray*}
 &&(\hat\Sigma_2 \otimes \hat\Sigma_1+\rho I) ^{-1} vec(A^k)\\
 &=&[(U_2\otimes U_1)(D_2\otimes D_1+\rho I)(U_2^T\otimes U_1^T)]^{-1}vec(A^k)\\
&=&(U_2\otimes U_1)(D_2\otimes D_1+\rho I)^{-1} (U_2^T \otimes U_1^T)vec(A^k)\\
&=&(U_2\otimes U_1)(D_2\otimes D_1+\rho I)^{-1} vec(U_1^T A^kU_2)\\&=&(U_2\otimes U_1) vec\left(B \circ (U_1^T A^kU_2)\right)\\
&=&vec\left(U_1 [B \circ (U_1^T A^kU_2)] U_2^T\right)
 \end{eqnarray*}
 where $\circ$ is the Hadamard product.
 Therefore,
 \begin{equation*}
 \Omega = U_1 [B \circ (U_1^T A^kU_2)] U_2^T.
 \end{equation*}
 Next we examine \eqref{eq:Omega}. Ignoring terms that are independent of $\Psi$, we just need to minimize
 \[ \frac{\rho}{2}Tr({\Psi}^T{\Psi})-\rho Tr((\Omega^{k+1})^T{\Psi})-Tr((\Lambda^k)^T{\Psi})+\lambda\| {\Psi}\|_1,
  \]
  and the solution can be easily seen as \eqref{eq:UpdateGamma}. Again, the update for $\Gamma$ can be efficiently implemented.

 Our algorithm can be now summarized as following.
 \begin{itemize}
 \item[1.] Initialize $\Omega$, ${\Psi}$ and $\Lambda$. Fix $\rho$. Compute SVD $\hat\Sigma_1=U_1D_1U_1^T$ and $\hat\Sigma_2=U_2D_2U_2^T$, and compute $B$ where $B_{jk}=1/(d_{1j}d_{2k}+\rho)$.
  Repeat steps 2-4 until convergence;
 \item[2.] Compute $A =(\hat\Sigma_1-\hat\Sigma_2)-\Lambda+\rho{\Psi}$ . Then update $\Omega$ as $\Omega = U_1 [B \circ (U_1^T AU_2)] U_2^T$;
 \item[3.] Update ${\Psi}$ by soft-thresholding $\Omega+\frac{\Lambda}{\rho}$ elementwise by $\frac{\lambda}{\rho}$;
 \item[4.] Update $\Lambda$ by $\Lambda\leftarrow \Lambda+\rho(\Omega-{\Psi})$.
 \end{itemize}
 Note that the algorithm involves singular value decomposition of $\hat\Sigma_1$ and $\hat\Sigma_2$ only once. The rest of the algorithm only involves matrix addition and multiplication. Thus, the algorithm is extremely efficient. Compared with \cite{Zhao:2014aa} whose  algorithm has computational complexity of the order at least $O(p^4)$ and a memory requirement of $O(p^4)$,  our algorithm has a memory requirement of the order $O(p^2)$ and computational complexity of $O(p^3)$. As a result, our method can handle much larger problems. 
 As a first order method for convex problems, the convergence of ADMM algorithms is in general of rate $O(k^{-1})$, where $k$ is the number of iterations. Convergence analysis of ADMM algorithms under different assumptions has been well established in some very recent optimization literatures; see for example, \cite{Nishihara2015}, \cite{Hong2017} and \cite{Chen2017}. By verifying the assumptions in \cite{Hong2017}, we can established similar linear convergence results for our algorithm; see Lemma \ref{linearADMM} in the Appendix for more details.

 \subsection{The linear index $\delta$}
 After having estimated $\Omega$ as $\hat{\Omega}$,  we discuss the estimation of the linear index $\delta=(\Sigma_1^{-1}+\Sigma_2^{-1})(\mu_1-\mu_2)$.  We develop a procedure that avoids estimating $\Sigma_1^{-1}$ and $\Sigma_2^{-1}$.
 First note that
\[\Sigma_1\delta=\Sigma_1(\Sigma_1^{-1}+\Sigma_2^{-1})(\mu_1-\mu_2) =2(\mu_1-\mu_2)+\Sigma_1\Omega(\mu_1-\mu_2),\]
\[\Sigma_2\delta=\Sigma_2(\Sigma_1^{-1}+\Sigma_2^{-1})(\mu_1-\mu_2) =2(\mu_1-\mu_2)-\Sigma_2\Omega(\mu_1-\mu_2),\]
and
\[ (\Sigma_1+\Sigma_2)\delta=4(\mu_1-\mu_2)+(\Sigma_1-\Sigma_2)\Omega(\mu_1-\mu_2).\]
The last equation is the derivative of $\delta^T (\Sigma_1 +\Sigma_2)\delta/2-\{4(\mu_1-\mu_2)+(\Sigma_1-\Sigma_2)\Omega(\mu_1-\mu_2)\}^T \delta$. Motivated by this,
we estimate $\delta$ by a direct method using the lasso regularization, similar to the one in \cite{Mai:2012aa}, as
\begin{equation}
\hat\delta=
\arg\min_{\delta\in R^p} ~\frac{1}{2} \delta^T (\hat\Sigma_1 +\hat\Sigma_2)\delta-\hat{\gamma}^T \delta +\lambda_\delta \| \delta\|_1,\label{eq:delta}
\end{equation}
where $\hat{\gamma}=4(\hat\mu_1-\hat\mu_2)+(\hat\Sigma_1-\hat\Sigma_2)\hat\Omega(\hat\mu_1-\hat\mu_2)$, $\|\cdot\|_1$ is the vector $\ell_1$ penalty and $\lambda_\delta$ is a tuning parameter. The optimization in \eqref{eq:delta} is a standard lasso problem and is easy to solve using existing lasso algorithms. We remark that \eqref{eq:delta} is much more challenging to analyze theoretically than the method in \cite{Mai:2012aa}, since the accuracy of $\hat\Omega$ as an estimator of $\Omega$ has to be carefully quantified in $\hat\gamma$. 

We emphasize that our framework is extremely flexible and can accommodate additional constraints. As a concrete example, let's consider enforcing the so-called strong heredity principle in that an interaction is present unless the corresponding main effects are both present, i.e.  if $\Omega_{jk}\not=0$ then $\delta_j\not=0$ and $\delta_k\not=0$;
 see for example \cite{Hao:Zhang:2015}. Denote $\mathcal{I} \subset \{1, \ldots, p\}$ as the set such that for any $j, k\in \mathcal{I}$ there exists some $\hat\Omega_{jk}\not=0$. We can change the penalty in \eqref{eq:delta} as $\|\delta_{\mathcal{I}^C}\|_1$ such that the variables in $\mathcal{I}$ are not penalized. Due to space limitation, this line of research will not be studied in the current paper.

\section{Theory}
We show that our method can consistently select the true nonzero interaction terms in $\Omega$ and the true nonzero terms in $\delta$. In addition, we provide explicit upper bounds for the estimation error under $l_\infty$ norm. For classification, we further show that the misclassification rate of our DA-QDA rule converges to the optimal Bayes risk under some sparsity assumptions. For simplicity in this section we assume that $n_1 \asymp n_2$ and write $n=\min\{n_1,n_2\}-1$. 
Instead of assuming $\mu_2-\mu_1$, $\Sigma_1,\Sigma_2$ and $\Sigma_1-\Sigma_2$ to be sparse as in \cite{Li:2013aa}, we only assume that $\Omega$ and $\delta$ are sparse.
For the estimation of $\Omega$, the rate in Corollary \ref{cor1} is similar to the one in Theorem 3 of \cite{Zhao:2014aa}. However, as we pointed out previously, our method is computationally much more efficient and scales better to large-dimensional problems. In addition, our work is the first direct estimation approach for sparse QDA. Importantly, the results for estimating $\delta$ are new.

Note that when estimating $\delta$ as in (\ref{eq:delta}), we have used $\hat{\Omega}$ as a plug-in estimator for $\Omega$. Consequently, from Corollaries \ref{cor1} and \ref{cor2}, the error rate of $\hat{\delta}$ in estimating $\delta $ is a factor times of that of $\hat{\Omega}$ in estimating $\Omega$. However, in the DA-QDA discriminant function defined as in (\ref{eq:DA-QDA00}), $\hat{\Omega}$ appears in the first term which is a product of three components while $\hat{\delta}$ appears in the second term which is a product of two components. As a consequence, the overall estimation error rates of these two terms become equal. This implies that even though the estimating error of $\hat{\Omega}$ might aggregate in the estimation of $\delta$, it does not affect the convergence rate of the classification error at all. Below we provide theory for estimating $\Omega$, $\delta$, and $\eta$, as well as quantifying the overall misclassification error rate.

\subsection{Theory for estimating $\Omega$}
We first introduce some notation.
We assume that $\Omega=(\Omega_{ij})_{1\leq i,j\leq p}$ is sparse with support $S=\{(i,j): \Omega_{ij}\neq 0\}$ and we use $S^c$ to denote the complement of $S$. Let $d$ be the maximum node degree in $\Omega$. For a vector $x=(x_1,\ldots,x_p)^T$, the $l_q$ norm is defined as $|x|_q=(\sum_{i=1}^p|x_i|^q)^{1/q}$ for any $1\leq q<\infty$ and the $l_\infty$ norm is defined as $|x|_\infty=\max_{1\leq i\leq p}|x_i|$. For any matrix $M=(m_{ij})_{1\leq i,j\leq p}$, its entrywise $l_1$ norm is defined as $||M||_1=\sum_{1\leq i,j\leq p}|m_{ij}|$  and its entrywise $l_\infty$ norm is written as $||M||_\infty=\max_{1\leq i,j\leq p}|m_{ij}|$. We use $||M||_{1,\infty}=\max_{1\leq i\leq p }{\sum_{j=1}^p|m_{ij}|}$ to denote the $l_1/l_\infty$ norm induced matrix-operator norm. We denote $\Gamma=\Sigma_2\otimes \Sigma_1$ and $\hat{\Gamma}=\hat{\Sigma}_2\otimes\hat{\Sigma}_1$. Write $\Sigma_k=(\sigma_{kij})_{1\leq i,j\leq p}$, $\hat{\Sigma}_k=(\hat{\sigma}_{kij})_{1\leq i,j\leq p}$ for $k=1,2$. By the definition of Kronecker product, $\Gamma$ is a $p^2\times p^2$ matrix indexed by vertex pairs in that $\Gamma_{(i,j),(k,l)}=\sigma_{1ik}\sigma_{2jl}$.
Denote $\Delta_i=\hat{\Sigma}_i-\Sigma_i$ for $i=1,2$, $\Delta_\Gamma=\hat{\Gamma}-\Gamma, \Delta_{\Gamma^T}=\hat{\Gamma}^T-\Gamma^T$, and $\epsilon_i=||\Delta_i||_{\infty}$, $\epsilon=\max\{\epsilon_1,\epsilon_2\}$. $B=\max\{||\Sigma_1||_{\infty},||\Sigma_2||_{\infty}\} $, $B_\Sigma=\max\{||\Sigma_1||_{1,\infty},||\Sigma_2||_{1,\infty}\} $ and $B_\Gamma=||\Gamma_{S,S}^{-1}||_{1,\infty}$, $B_{\Gamma^T}=||(\Gamma_{S,S}^{T})^{-1}||_{1,\infty}$, $B_{\Gamma,\Gamma^T}=\max\{B_\Gamma,B_{\Gamma^T}\}$.

To establish the model selection consistency of our estimator, we assume the following irrepresentability condition:
\[
\alpha=1-\max_{e\in S^c}|\Gamma_{e,S}\Gamma_{S,S}^{-1}|_1>0.
\]
This condition was first introduced by \cite{Zhao:Yu:2006} and \cite{Zou2006} to establish the model selection consistency of the lasso.
The following theorem gives the model selection consistency  and the rate of convergence for the estimation of $\Omega$.

\begin{thm}\label{thmomega}
Assume that $\alpha>0$ and $d^2 B^2B_\Sigma^2B_{\Gamma,\Gamma^T}^2   
 \sqrt{\frac{\log p }{ n}}
\rightarrow 0$. For any $c>2$, by choosing 
$
 \lambda= \kappa_1 d^2 B^2B_\Sigma^2B_{\Gamma,\Gamma^T}^2   
 \sqrt{\frac{\log p }{ n}}
$ for some large enough constant $\kappa_1>0$, we have with probability greater than $1-p^{2-c}$,
\begin{itemize}
\item[(i)] $\hat{\Omega}_{S^c}=0$;
\item[(ii)] there exists a large enough constant $\kappa_2>0$ such that
\begin{eqnarray*}
||\hat{\Omega}-\Omega||_\infty< \kappa_2 d^2 B^2B_\Sigma^2B_{\Gamma,\Gamma^T}^2  \sqrt{\frac{\log p  }{ n}}.
\end{eqnarray*}
\end{itemize}
 \end{thm}
  
Theorem \ref{thmomega} states that if the irrepresentability condition is satisfied, the support of $\Omega$ is estimated consistently, and the rate of convergence of estimating $\Omega$ under $l_\infty$ norm is of order $O\left(d^2 B^2B_\Sigma^2B_{\Gamma,\Gamma^T}^2\sqrt{\frac{\log p }{ n}}\right)$, which depends on the sparsity of $\Sigma_1, \Sigma_2$ and their Kronecker product. For example, our assumption $d^2 B^2B_\Sigma^2B_{\Gamma,\Gamma^T}^2  \sqrt{\frac{\log p  }{ n}}\rightarrow 0$ implies that $B=\max\{||\Sigma_1||_{\infty},||\Sigma_2||_{\infty}\}$ can diverge in a rate of $o(d^{-1} B_\Sigma^{-1}B_{\Gamma,\Gamma^T}^{-1}  ({\frac{ n}{\log p  }})^{1/4})$.
From the proof of Theorem \ref{thmomega}, we have the following corollary.
\begin{cor}\label{cor1}
Assume that $\alpha>0$,  
$B_\Sigma<\infty$ and $B_{\Gamma,\Gamma^{T}}<\infty$. For any constant $c>2$, choosing $\lambda=Cd^2\sqrt{\frac{\log p}{n}}$ for some constant $C>0$, if $d^2\sqrt{\frac{\log p}{n}}\rightarrow 0$, we have with probability greater than $1-p^{2-c}$, $\hat{\Omega}_{S^c}=0$ and
$$||\hat{\Omega}-\Omega||_\infty=O\left(d^2\sqrt{\frac{\log p}{n}}\right).$$
\end{cor}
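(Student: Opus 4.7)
The plan is to derive Corollary~\ref{cor1} as a direct specialization of Theorem~\ref{thmomega}, simplifying every constant under the additional hypotheses $B_\Sigma<\infty$ and $B_{\Gamma,\Gamma^T}<\infty$.

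First I would observe that $B\le B_\Sigma$: the entrywise $\ell_\infty$ norm of any matrix is at most its maximum row sum, so $B=\max\{||\Sigma_1||_\infty,||\Sigma_2||_\infty\}\le\max\{||\Sigma_1||_{1,\infty},||\Sigma_2||_{1,\infty}\}=B_\Sigma<\infty$. With $\alpha\in(0,1]$, $B$, $B_\Sigma$, $B_{\Gamma,\Gamma^T}$ all bounded by constants, every factor inside the bracketed coefficient of $\sqrt{(c\log p+\log C_1)/(C_2 n)}$ in the $\lambda$ prescribed by Theorem~\ref{thmomega} is a constant times a single explicit $d^2$. Hence the theorem's threshold on $\lambda$ is of the form $c_0\,d^2\sqrt{(c\log p+\log C_1)/(C_2 n)}$ for some constant $c_0$ depending only on $\alpha,B_\Sigma,B_{\Gamma,\Gamma^T}$, which is dominated by $C d^2\sqrt{\log p/n}$ once $C$ is taken sufficiently large (and $p$ is large enough that $\log C_1=o(\log p)$).

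Next I would verify the lower bound on $n$. The three arguments of the defining max in Theorem~\ref{thmomega} are, respectively, a constant, a $d^4$-multiple of a constant, and a constant; so the requirement collapses to $n\gtrsim d^4(c\log p+\log C_1)$. This is exactly the hypothesis $d^2\sqrt{\log p/n}\to 0$ once the constant $\log C_1$ is absorbed, so Theorem~\ref{thmomega} applies and gives conclusions (i) and (ii) with probability at least $1-p^{2-c}$. Finally I would simplify $C_3$ to read off the rate: its first summand is a universal multiple of the same $O(d^2)$ bracketed quantity times $B_{\Gamma,\Gamma^T}$, and its second summand $24d^2B^2B_{\Gamma,\Gamma^T}^2$ is visibly $O(d^2)$. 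Hence $C_3=O(d^2)$, and the bound $||\hat\Omega-\Omega||_\infty<C_3\sqrt{(c\log p+\log C_1)/(C_2 n)}$ from Theorem~\ref{thmomega}(ii) becomes $O(d^2\sqrt{\log p/n})$, while (i) gives $\hat\Omega_{S^c}=0$ on the same event.

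The main obstacle is purely cosmetic bookkeeping: keeping track of which factors in the long constants of Theorem~\ref{thmomega} depend on $d$ and which collapse to absolute constants under the corollary's boundedness assumptions. Because the $d^2$ factor is already exhibited explicitly in both the prescribed $\lambda$ and in $C_3$, no new probabilistic estimate is required---the corollary is algebraic simplification on top of Theorem~\ref{thmomega}.
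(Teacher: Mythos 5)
Your proof is correct and matches the paper's (unstated) argument: the paper itself says Corollary~\ref{cor1} follows ``from the proof of Theorem~\ref{thmomega},'' and your derivation is exactly that specialization---noting $B\le B_\Sigma$, observing that under $\alpha>0$, $B_\Sigma<\infty$, $B_{\Gamma,\Gamma^T}<\infty$ the prescribed $\lambda$, the sample-size threshold, and the constant $C_3$ are all $O(d^2)$, $O(d^4\log p)$, and $O(d^2)$ respectively, and reading off the rate.
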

Similar to Condition 2 in \cite{Zhao:2014aa}, the assumption  $B_{\Gamma,\Gamma^T}<\infty$ in Corollary \ref{cor1} is closely related to the mutual incoherence property introduced in \cite{Donoho}. In fact, it holds when imposing the usual mutual incoherence condition on the inverses of the submatrices (indexed by $S$) of $\Sigma_1$ and $\Sigma_2$.
Since $d$ is the maximum node degree in $\Omega$, the number of nonzero entries in $\Omega$ is of order $O(dp)$. 
the rate $O\left(d^2\sqrt{\frac{\log p}{n}}\right)$ we obtained in Corollary \ref{cor1} is better than the rate in Theorem 3 of \cite{Zhao:2014aa}. However, in the case where only O(d) covariates and some of their interactions are important, our rate is the same as the one in \cite{Zhao:2014aa}.

\subsection{Theory for estimating $\delta$}

 Let $D=\{i: \delta_i\neq 0\}$ be the support of $\delta$ and let $d_\delta$ be its cardinality. Denote
$A_1=||\Omega||_{1,\infty}$, $A_2=||(\Omega^{-1})_{\cdot,D}||_{1,\infty}$,
 $\epsilon_\mu=\max\{|\mu_1-\hat{\mu}_1|_\infty ,|\mu_2-\hat{\mu}_2|_\infty\}$. We define  $A_\Sigma=||\Sigma_{D,D}^{-1}||_{1,\infty}$ where ${\Sigma}=({\Sigma}_1+{\Sigma}_2)/2$  and write $\hat{\Sigma}=(\hat{\Sigma}_1+\hat{\Sigma}_2)/2$, $\gamma=4(\mu_1-\mu_2)+(\Sigma_1-\Sigma_2)\Omega(\mu_1-\mu_2)$, $\Delta_{\mu}={\mu}_1-\mu_2$, $\hat{\Delta}_\mu=\hat{\mu}_1-\hat{\mu}_2$, $A_\gamma=||\gamma||_\infty$ and $||\hat{\Omega}-\Omega||_\infty=\epsilon_\Omega$.

To establish the model selection consistency of our estimator $\hat{\delta}$, we assume the following irrepresentability condition:
\[
\alpha_\delta=1-\max_{e\in D^c}|\Sigma_{e,D}\Sigma_{D,D}^{-1}|_1>0.
\]
Let $d_0=\max\{d,d_\delta\}$.  The following theorem gives the model selection consistency  and the rate of convergence for the estimation of $\delta$.

\begin{thm}\label{thmdelta}
Assume that $\alpha_\delta>0$, $|\Omega(\mu_1-\mu_2)|_1=O(d_0^2)$, $\|\Omega\|_\infty<\infty$, $A_\gamma<\infty$ and $     
 d_0^3A_\Sigma^2    B^2B_\Sigma^3B_{\Gamma,\Gamma^T}^2  \sqrt{\frac{\log p  }{ n}}\rightarrow 0$ . Under the same assumptions in Theorem \ref{thmomega}, for any $c>2$,  by choosing
$
 \lambda_\delta=\kappa_3   
 d_0^3A_\Sigma^2    B^2B_\Sigma^3B_{\Gamma,\Gamma^T}^2     \sqrt{\frac{\log p  }{ n}},
$
for some large enough constant $\kappa_3$,
we have, with probability greater than $1-p^{2-c}$,
\begin{itemize}
\item[(i)] $\hat{\delta}_{D^c}=0$;

\item[(ii)] there exists a large enough constant $\kappa_4>0$ such that,
$$
||\hat{\delta}-\delta||_\infty< \kappa_{4 }  d_0^3A_\Sigma^2    B^2B_\Sigma^3B_{\Gamma,\Gamma^T}^2  \sqrt{\frac{\log p  }{ n}}.
$$

\end{itemize}
 \end{thm}

From Theorem \ref{thmdelta} and Corollary \ref{cor1} we immediately have:

\begin{cor}\label{cor2}
Suppose the assumptions of Corollary \ref{cor1} and Theorem \ref{thmdelta} hold and assume that  
$A_\Sigma<\infty$. For any constant $c>2$,  by choosing $\lambda_\delta=Cd_0^3\sqrt{\frac{\log p}{n}}$ for some large enough constant $C>0$, we have with probability greater than $1-p^{2-c}$,
$$||\hat{\delta}-\delta||_\infty=O\left(d_0^3\sqrt{\frac{\log p}{n}}\right).$$
\end{cor}
 When $\Sigma_1=\Sigma_2$, $\delta$ reduces to $2\Sigma_1^{-1}(\mu_1-\mu_2)$, which is proportional to the direct discriminant variable $\beta$ in \cite{Mai:2012aa}, \cite{Cai} and \cite{Fan2012}, and variables in $D=\{i: \delta_i\neq 0\}$ are linear discriminative features contributing to the Bayes rule.
 From the proof of Theorem \ref{thmdelta} and the rate given in Theorem \ref{thmdelta}, we can see that when $A_\gamma<\infty$, $||\hat{\delta}-\delta||_\infty$ is of order $O(A_\Sigma\lambda_\delta)$. This is consistent to the result obtained in Theorem 1 of \cite{Mai:2012aa} for the $\Sigma_1=\Sigma_2$ case.

\subsection{Misclassification rate}
In this subsection, we study the asymptotic behavior of the misclassification rate for a given $\eta$ and postpone the theory when $\eta$ is estimated to Section 3.4.
Let $R(i|j)$ and $R_n(i|j)$ be the probabilities that a new observation from class $j$ is misclassified to class $i$ by Bayes' rule and the DA-QDA rule respectively. Suppose $2\log (\pi_1/\pi_2)=\eta-\frac{1}{4}(\mu_1-\mu_2)^T\Omega(\mu_1-\mu_2)-\log |\Sigma_2|+\log |\Sigma_1|$.
The optimal Bayes risk is given as
\begin{eqnarray*}
R=\pi_1R(2|1)+\pi_2R(1|2),
\end{eqnarray*}
and the misclassification rate of the DA-QDA rule takes the following form:
\begin{eqnarray*}
R_n=\pi_1 R_n(2|1)+\pi_2 R_n(1|2).
\end{eqnarray*}

Suppose $z_i\sim N(\mu_i,\Sigma_i)$ for $i=1,2$. Denote the density of $(z_i-\mu)^T\Omega(z_i-\mu)+\delta^T(z_i-\mu)+\eta$ as $F_i(z)$. For any constant $c$, define
\begin{eqnarray*}
u_c=\max\{{\esssup}_{z\in [-c, c]}F_i(z), i=1,2\},
\end{eqnarray*}
where $\esssup$ denotes the essential supremum which is defined as supremum on almost everywhere of the support, i.e., except on a set of measure zero.
Let $s:=\|S\|_0$ be the number of nonzero elements in $\Omega$. The following theorem establishes upper bounds for the misclassification rate difference between $R_n$ and $R$.

\begin{thm}\label{thm3}
Assuming that there exist constants $C_{\mu}>0, C_{\Sigma}>1$ such that $\max\{|\mu_1|_\infty,|\mu_2|_\infty\}$ $\leq C_\mu$, and $C_\Sigma^{-1}\leq \min\{\lambda_p(\Sigma_1),\lambda_p(\Sigma_2)\}\leq \max\{\lambda_1(\Sigma_1),\lambda_1(\Sigma_2)\}\leq C_\Sigma$ where $\lambda_i(\Sigma_j)$ denotes the $i$th largest eigenvalue of $\Sigma_j$.
Under the assumptions of Theorems \ref{thmomega} and \ref{thmdelta}, we have:

(i) if $s d_0^2 B^2B_\Sigma^2B_{\Gamma,\Gamma^T}^2   
\sqrt{\frac{\log p }{ n}}+  d_0^2A_\Sigma^2    B^2B_\Sigma^3B_{\Gamma,\Gamma^T}^2  \sqrt{\frac{\log p  }{ n}}\rightarrow 0$ and there exist positive constants $c, U_{c}$ such that $u_c\leq U_{c}<\infty$, then
$$R_n-R=O_p\left( s d_0^2 B^2B_\Sigma^2B_{\Gamma,\Gamma^T}^2   
 \sqrt{\frac{\log p }{ n}}+  d_0^2A_\Sigma^2    B^2B_\Sigma^3B_{\Gamma,\Gamma^T}^2  \sqrt{\frac{\log p  }{ n}} \right);$$

(ii) if $ \left(s d_0^2 B^2B_\Sigma^2B_{\Gamma,\Gamma^T}^2   
 \sqrt{\frac{\log p }{ n}}+  d_0^2A_\Sigma^2    B^2B_\Sigma^3B_{\Gamma,\Gamma^T}^2  \sqrt{\frac{\log p  }{ n}}\right)(1+u_c)\rightarrow 0$ for some positive constant $c$, then with probability greater than $1-3p^{2-c}$ for some constant $c>2$,
\begin{eqnarray*}
R_n-R=O\Bigg((1+u_c)\times \Bigg(s d_0^2 B^2B_\Sigma^2B_{\Gamma,\Gamma^T}^2 \log p  
 \sqrt{\frac{\log p }{ n}}+  d_0^2A_\Sigma^2    B^2B_\Sigma^3B_{\Gamma,\Gamma^T}^2    \frac{\log p  }{ \sqrt{n}}\Bigg)
   \Bigg).
\end{eqnarray*}
\end{thm}

Theorem \ref{thm3} (i) indicates that under appropriate sparsity assumptions, our DA-QDA rule is optimal in that its misclassification rate converges to the optimal Bayes risk in probability. The second statement of Theorem \ref{thm3} states that under stronger conditions, $R_n$ converges to $R$ with overwhelming probability.
From Corollary \ref{cor1} and Corollary \ref{cor2} and the above theorem, we immediately have:
 \begin{cor}\label{cor3}
 Under the assumptions of Corollary \ref{cor1}, Corollary \ref{cor2} and Theorem \ref{thm3}, we have,

 (i) if $sd_0^2\sqrt{\frac{\log p}{ n}}\rightarrow 0$ and there exist positive constants $c, U_{c}$ such that $u_c\leq U_{c}<\infty$, then
$$R_n-R=O_p\left( sd_0^2\sqrt{\frac{\log p}{ n}}\right);$$

(ii) if $(1+u_c)sd_0^2\sqrt{\frac{\log^3 p}{ n}}\rightarrow 0$ for some constant $c>0$, then with probability greater than $1-3p^{2-c}$ for some constant $c>2$,
$$R_n-R=O\left( sd_0^2(1+u_c)\sqrt{\frac{\log ^3p}{ n}}\right).$$
 \end{cor}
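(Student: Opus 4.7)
The plan is to obtain Corollary \ref{cor3} as a direct consequence of substituting the simplified rates from Corollaries \ref{cor1} and \ref{cor2} into Theorem \ref{thm3}. No new probability arguments are required; the task is to verify how the constants $C_3$ and $C_{3\delta}$ scale once all the boundedness assumptions ($B_\Sigma$, $B_{\Gamma,\Gamma^T}$, $A_1$, $A_2$, $A_\gamma$, $A_\Sigma < \infty$, and $|\Omega(\mu_1 - \mu_2)|_\infty = O(d_0^2)$) are imposed.

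The first step is to read off $C_3 = O(d^2)$ directly from the explicit formula in Theorem \ref{thmomega}: every factor is either a universal constant or a function of the bounded quantities, leaving only $d^2$ to depend on the problem dimension. This gives
\begin{equation*}
C_3 \sqrt{\frac{c\log p + \log C_1}{C_2 n}} = O\!\left(d^2 \sqrt{\frac{\log p}{n}}\right),
\end{equation*}
which is exactly the statement of Corollary \ref{cor1}. A similar inspection of $C_{3\delta}$ in Theorem \ref{thmdelta} shows that every summand is $O(d_0^3)$: the dominant piece $d_\delta A_\Sigma^2 C_3$ scales as $d_\delta d^2 = O(d_0^3)$, and the terms carrying $|\Omega(\mu_1 - \mu_2)|_1$ are controlled by reading the hypothesis on the $l_\infty$ norm together with the $d_\delta$-sparsity of $\delta$ (through $A_2$). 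Hence
\begin{equation*}
C_{3\delta} \sqrt{\frac{c\log p + \log C_1}{C_{2\delta} n}} = O\!\left(d_0^3 \sqrt{\frac{\log p}{n}}\right),
\end{equation*}
which recovers Corollary \ref{cor2}.

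With these two reductions in hand, I would plug them directly into Theorem \ref{thm3}. For part (i) the bound becomes $O\!\left((sd^2 + d_\delta d_0^3)\sqrt{\log p/n}\right)$, which collapses to the stated $O(sd_0^2\sqrt{\log p/n})$ under the natural scaling $d_\delta d_0 = O(s)$ (for example when the support of $\delta$ aligns with that of $\Omega$, as under heredity). For part (ii) the extra factor $\sqrt{\log p}\,(1+u_c)$ produces the claimed $O(sd_0^2(1+u_c)\sqrt{\log^3 p/n})$ rate (under the slightly weaker requirement $d_\delta d_0 = O(s\sqrt{\log p})$); the vanishing conditions in Theorem \ref{thm3} translate line-for-line into those stated in Corollary \ref{cor3}, and the probability guarantee $1 - 3p^{2-c}$ is inherited verbatim.

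The main obstacle is the bookkeeping for $C_{3\delta}$, since it aggregates several terms involving both $\Omega$- and $\delta$-related constants. The only genuinely nontrivial subtlety is bounding $|\Omega(\mu_1 - \mu_2)|_1$ without letting $p$ enter the rate: this requires reading the hypothesis $|\Omega(\mu_1 - \mu_2)|_\infty = O(d_0^2)$ jointly with the effective sparsity information carried by $A_2$ and $d_\delta$. Once that reduction is confirmed, what remains is routine algebraic substitution into the already-proved Theorem \ref{thm3}.
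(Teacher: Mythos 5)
Your approach --- substitute the simplified rates of Corollaries \ref{cor1} and \ref{cor2} into Theorem \ref{thm3} --- is exactly what the paper intends; the supplement offers no separate proof of Corollary \ref{cor3} and simply calls it immediate. You carry out the substitution correctly: the Theorem \ref{thm3} bound becomes $O_p\bigl((sd^2 + d_\delta d_0^3)\sqrt{\log p/n}\bigr)$ for part (i), with the extra $\log p$ and $(1+u_c)$ factors for part (ii). You are also right, and more careful than the paper, to notice that the second summand $d_\delta d_0^3$ does not automatically collapse into $sd_0^2$; that reduction needs $d_\delta d_0 = O(s)$, which is not among the stated hypotheses. A concrete counter-scenario: if $\Omega$ has a single nonzero entry so $s = d = 1$ while $d_\delta$ is large, then $d_0 = d_\delta$ and $d_\delta d_0^3 = d_\delta^4 \gg s d_0^2 = d_\delta^2$.

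The one place your write-up goes astray is the heuristic you offer for the auxiliary condition. Strong heredity forces $\mathrm{supp}(\Omega)\subseteq D\times D$, hence $s\le d_\delta^2$ and $d\le d_\delta$, so $d_0 = d_\delta$ and $d_\delta d_0 = d_\delta^2 \ge s$ --- the \emph{reverse} of the inequality you need, with equality only when $\Omega$ is fully dense on $D\times D$. So "aligning the supports" via heredity does not by itself yield $d_\delta d_0 = O(s)$. The correct reading is that Corollary \ref{cor3} carries an unstated regularity assumption that the $\Omega$-error term dominates the $\delta$-error term (equivalently $s\gtrsim d_\delta d_0$, or $s\gtrsim d_\delta d_0/\sqrt{\log p}$ for part (ii)); your proof, once that assumption is made explicit rather than attributed to heredity, is correct and in fact more transparent than the paper's one-line assertion.
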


We remark that the assumption $u_c\leq U_c$ for some constants $c$ and $U_c$ is similar to Condition (C4) in \cite{Li:2013aa}, and our assumption is weaker in that we only assume the densities $F_i(z)$ is bounded in a neighborhood of zero while Condition (C4) in \cite{Li:2013aa} states that the densities are bounded everywhere. 

\subsection{Choice of $\eta$}
The question of choosing the scalar $\eta$ is critical for classification but receives little attention
in existing literature; see \cite{Mai:2012aa} for a detailed discussion for the LDA case. In this section, we propose to choose $\eta$ by minimizing the in-sample misclassification error and establish analytical results for the estimation of $\eta$ and the misclassification rate. With some abuse of notation, let $(z_i, l_i)$ be our data where $z_i's$ are the covariates and $l_i's$ are the labels, i.e., $l_i\in \{0, 1\}$. To obtain $\hat\eta$, we seek to minimize the in-sample misclassification error given $\hat \mu, \hat\delta$ and $\hat\Omega$:
\begin{align*}
\hat\eta = arg\min_e \frac{1}{n}\sum_{i = 1}^n |I\{\hat D(z_i, e) > 0\} - l_i|,
\end{align*}
where $\hat D(z, e) = (z-\hat\mu)^T\hat\Omega(z-\hat\mu)+\hat\delta^T(z-\hat\mu) + e$. We write $d(z)=(z-\mu)^T\Omega(z-\mu)+\delta^T(z-\mu)$ and $\hat{d}(z)=(z-\hat\mu)^T\hat\Omega(z-\hat\mu)+\hat\delta^T(z-\hat\mu)$ and hence we have
\begin{align*}
\hat D(z_i, e) = \hat d(z_i) + e.
\end{align*}
Then the object function becomes
\begin{align*}
\frac{1}{n}\sum_{i = 1}^n |I\{\hat d(z_i) + e > 0\} - l_i|.
\end{align*}
Without loss of generality, we can assume $\hat  d(z_1) < \hat  d(z_2) <\cdots < \hat  d(z_n)$. For any $e$ we define the index $k(e)$ to be the largest $\hat  d(z_k)$ that satisfies $\hat  d(z_k) < -e < \hat  d(z_{k+1})$. Thus, the optimization can be further simplified as
\begin{align*}
\hat\eta = arg\min_e \frac{1}{n}\bigg[\sum_{i = 1}^{k(e)} l_i + \sum_{i = k(e) + 1}^n (1 - l_i)\bigg].
\end{align*}
Solving the above problem is simple. One just needs to compute the values of the object function for $k = 0, 1, 2, \cdots, n$ and find the index $k^*$ that minimizes its value. The optimal $\hat\eta$ can then be found as any value satisfying
\begin{align*}
\hat\eta \in (-\hat d(z_{k^* + 1}), ~-\hat d(z_{k^*})).
\end{align*}
Next we establish the asymptotic results for $\hat{\eta}$ and the misclassification rate. For a given $e$, we use $R(d,e)$ and $R(\hat{d},e)$ to denote the misclassification rate associated with discriminant function $D(z,e)=d(z)+e$ and discriminant function $\hat{D}(z,e)=\hat{d}(z)+e$ respectively. Analogously, the in-sample misclassification rate of $D(z,e)$ and $\hat{D}(z,e)$ are denoted as $R_n(d,e)$ and $R_n(\hat{d},e)$. From the optimality of the Bayes rule, we know that $\eta=2\log (\pi_1/\pi_2)+\frac{1}{4}(\mu_1-\mu_2)^T\Omega(\mu_1-\mu_2)+\log |\Sigma_2|-\log |\Sigma_1|$ is the unique minimizer of $R(d,\cdot)$ and we denote the corresponding optimal Bayes misclassification rate as $R=R(d,\eta)$. On the other hand, $\hat{\eta}$ is a minimizer of $R_n(\hat{d},e)$. In order to make the estimation problem feasible, we assume that there is exists a constant $c_\eta$ such that $|\eta|\leq c_\eta<\infty$. The following proposition indicates that, although the 0-1 loss used for computing the misclassification rate is neither continuous nor convex, the misclassification rate has a desirable property.

\begin{prop}\label{etalemma1}
$R(d,e)$ is strictly monotone increasing in $e\in[\eta,\infty)$ and strictly monotone decreasing in $e\in(-\infty,\eta]$.
\end{prop}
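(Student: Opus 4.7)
The plan is to prove strict monotonicity by direct comparison of $R(d,e_1)$ and $R(d,e_2)$ for two thresholds, leveraging the pointwise optimality of the Bayes rule at $e=\eta$. Writing $\phi_i(z)=f(z\mid\mu_i,\Sigma_i)$, and noting that the QDA rule with threshold $e$ classifies $z$ to class~1 iff $d(z)+e>0$, I would begin from
\begin{equation*}
R(d,e)=\pi_1\!\int_{\{d(z)\le -e\}}\!\phi_1(z)\,dz+\pi_2\!\int_{\{d(z)>-e\}}\!\phi_2(z)\,dz.
\end{equation*}
For any $\eta\le e_1<e_2$, cancellation of the overlapping parts of the two decision regions gives
\begin{equation*}
R(d,e_2)-R(d,e_1)=\int_{\{-e_2<d(z)\le -e_1\}}\!\bigl[\pi_2\phi_2(z)-\pi_1\phi_1(z)\bigr]\,dz.
\end{equation*}

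The key step is to show this integral is strictly positive. On the integration region one has $d(z)\le -e_1\le -\eta$, equivalently $D(z)=d(z)+\eta\le 0$. By the definition of $\eta$, the event $\{D(z)\le 0\}$ is precisely the Bayes rule's assignment to class~2, which is equivalent to $\pi_2\phi_2(z)\ge\pi_1\phi_1(z)$, with strict inequality wherever $D(z)<0$. Hence the integrand is pointwise nonnegative, and is strictly positive on $\{-e_2<d(z)<-\eta\}$ (the level set $\{d(z)=-\eta\}$ has Lebesgue measure zero, so when $e_1=\eta$ this subset coincides with the integration region up to a null set, and when $e_1>\eta$ it is the entire open integration region). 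Since $d$ is a nondegenerate continuous quadratic and the Gaussian densities $\phi_1,\phi_2$ are strictly positive throughout $\mathbb{R}^p$, this subset has positive Lebesgue measure and contributes a strictly positive amount, giving $R(d,e_2)>R(d,e_1)$.

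The decreasing case $e_1<e_2\le\eta$ is handled symmetrically: the same cancellation yields the same integral expression, but now $d(z)>-e_2\ge -\eta$ throughout the integration region, so $D(z)>0$ and hence $\pi_1\phi_1(z)>\pi_2\phi_2(z)$ strictly everywhere on it. Since this region, being the preimage of a nonempty open interval under the continuous quadratic $d$, has positive Lebesgue measure, the integral is strictly negative, giving $R(d,e_2)<R(d,e_1)$. The main technical point to verify is that the ``shell'' $\{d(z)\in(-e_2,-e_1)\}$ has positive Lebesgue measure; this follows from continuity of $d$ and the fact that its level sets are conics which sweep continuously as $e$ varies. The substantive content of the proof is thus just the pointwise Bayes optimality encoded in the definition of $\eta$.
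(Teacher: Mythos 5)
Your proof is correct and is essentially the same argument as the paper's: both decompose $R(d,e_2)-R(d,e_1)$ as an integral of $\pi_2\phi_2-\pi_1\phi_1$ over the region between the two decision boundaries and then invoke Bayes optimality at $\eta$ to determine the sign of the integrand there. The only cosmetic difference is that the paper reparametrizes the decision regions via the likelihood ratio $f_1(z)/f_2(z)$ and explicit thresholds $c_i^{-1}$, whereas you work directly with $d(z)$ and $-e_i$; these are equivalent because $D(z)=d(z)+\eta=2\log\bigl(\pi_1\phi_1(z)/(\pi_2\phi_2(z))\bigr)$.
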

From Proposition \ref{etalemma1} and following Theorem 5.7 of \cite{Vardervart}, we establish the following theorem, which indicates that the estimator $\hat{\eta}$ is consistent and the resulting misclassification rate using the estimated rule $\hat{D}(z,\hat{\eta})$ tends to the optimal Bayes misclassification rate in probability.
\begin{thm}\label{thm4}
Let $\hat{\eta}$ be a minimizer of $R_n(\hat{d},e)$. Under the assumptions of Theorem \ref{thm3}, we have:

(i) $\hat{\eta}\rightarrow \eta$  in probability;

(ii) $R_n(\hat{d},\hat{\eta})\rightarrow R$ in probability.
\end{thm}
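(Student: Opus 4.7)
My plan is to cast Theorem \ref{thm4}(i) as an M-estimator consistency result and derive (ii) as a corollary via the triangle inequality. Let $M_n(e)=R_n(\hat d,e)$ and $M(e)=R(d,e)$; the estimator $\hat\eta$ is an approximate minimizer of $M_n$ over the compact interval $[-c,c]$ that, by assumption, contains $\eta$. Theorem 5.7 of \cite{Vardervart} will yield $\hat\eta\to\eta$ in probability provided (a) $\eta$ is a well-separated minimum of $M$, and (b) $\sup_{|e|\leq c}|M_n(e)-M(e)|\to 0$ in probability. Condition (a) is immediate from the strict monotonicity established in Proposition \ref{etalemma1}: for every $\varepsilon>0$, $\inf_{\{|e-\eta|\geq\varepsilon\}\cap[-c,c]}R(d,e)$ is attained at one of the endpoints $\eta\pm\varepsilon$ and strictly exceeds $R(d,\eta)$.

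The core technical step is the uniform convergence in (b). I would decompose
\[
\sup_{|e|\leq c}|R_n(\hat d,e)-R(d,e)|\leq \sup_{|e|\leq c}|R_n(\hat d,e)-R_n(d,e)|+\sup_{|e|\leq c}|R_n(d,e)-R(d,e)|.
\]
The second, empirical-process piece concerns indicators $I\{d(z)+e>0\}$ indexed by $e\in\mathbb{R}$, a VC-class of half-lines, so Glivenko--Cantelli gives an $O_p(n^{-1/2})$ uniform bound. For the first, plug-in piece, observe that $I\{\hat d(z_i)+e>0\}\neq I\{d(z_i)+e>0\}$ only when $|d(z_i)+e|\leq|\hat d(z_i)-d(z_i)|\leq \epsilon_n$, where $\epsilon_n=\max_i|\hat d(z_i)-d(z_i)|$. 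Hence uniformly in $e$,
\[
|R_n(\hat d,e)-R_n(d,e)|\leq \frac{1}{n}\sum_{i=1}^n I\{|d(z_i)+e|\leq \epsilon_n\}.
\]
By the $\ell_\infty$ rates for $\hat\Omega,\hat\delta,\hat\mu$ from Theorems \ref{thmomega} and \ref{thmdelta}, together with a Gaussian-tail bound on $\max_i\|z_i\|_\infty$, $\epsilon_n=o_p(1)$ on a high-probability event. The density bound $u_c\leq U_c$ from Theorem \ref{thm3} then yields $\mathbb{E}[I\{|d(Z)+e|\leq \epsilon_n\}\mid \epsilon_n]\leq 2U_c\epsilon_n$ uniformly in $e$, and a Bernstein/VC concentration bound over the class $\{|d(z)+e|\leq t\}_{(e,t)}$, which has polynomial complexity, converts this into a uniform bound.

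Part (ii) then follows from
\[
|R_n(\hat d,\hat\eta)-R|\leq \sup_{|e|\leq c}|R_n(\hat d,e)-R(d,e)|+|R(d,\hat\eta)-R(d,\eta)|,
\]
with the first term $o_p(1)$ by the uniform convergence just established and the second term $o_p(1)$ by (i) combined with continuity of $R(d,\cdot)$ at $\eta$, which follows from the boundedness of the density $F_i$ near $-\eta$. The principal obstacle is the plug-in piece of the uniform convergence: because $\hat d$ is built from the same sample that defines $R_n$, $\epsilon_n$ and the $z_i$ are dependent, so one cannot simply treat $\epsilon_n$ as a fixed truncation level. My approach to decoupling is to first produce, on the high-probability event of Theorems \ref{thmomega}--\ref{thmdelta}, a deterministic bound on $\sup_{\|z\|_\infty\leq M_n}|\hat d(z)-d(z)|$ in terms of $\|\hat\Omega-\Omega\|_\infty$, $\|\hat\delta-\delta\|_\infty$, and $\epsilon_\mu$ for a slowly growing truncation level $M_n$, and then apply concentration conditionally on this event.
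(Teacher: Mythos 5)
Your overall framework agrees with the paper's: both invoke Theorem 5.7 of Van der Vaart together with the strict monotonicity of $R(d,\cdot)$ from Proposition~\ref{etalemma1} to get (i), and both reduce (ii) to (i) plus a uniform convergence statement for $R_n(\hat d,\cdot)$. Where you genuinely diverge is in how you prove that uniform convergence (the paper's Lemma~\ref{etalemma4}). You take a classical empirical-process route: split $\sup_e|R_n(\hat d,e)-R(d,e)|$ into a Glivenko--Cantelli term $\sup_e|R_n(d,e)-R(d,e)|$ over the VC class of half-lines, plus a plug-in term that you control by bounding $\tfrac1n\sum_i I\{|d(z_i)+e|\le\epsilon_n\}$ using the density bound $u_c$ and the $\ell_\infty$ rates for $\hat\Omega,\hat\delta,\hat\mu$. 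The paper instead uses a leave-one-out decoupling: it introduces $\hat d_{-i}$ and $\hat d_{-(i,j)}$, argues that $EI\{\hat d(z_i)+e>0\}$ agrees with $EI\{\hat d_{-i}(z_i)+e>0\}$ up to $o(1)$ uniformly in $e$ (exploiting the independence of $\hat d_{-i}$ from $z_i$), and then controls the bias and variance of $R_n(\hat d,e)$ directly, finishing with Markov's inequality. Your route is more transparent about the dependency obstacle -- that $\epsilon_n=\max_i|\hat d(z_i)-d(z_i)|$ is built from the same $z_i$'s that appear in the empirical average -- and your proposed fix (replace $\epsilon_n$ by a deterministic envelope $\tilde\epsilon_n$ on the high-probability event of Theorems~\ref{thmomega}--\ref{thmdelta} and $\max_i|z_i|_\infty\lesssim\sqrt{\log p}$, then apply VC concentration conditionally) is sound and arguably more rigorous than the paper's somewhat telegraphic ``$=o(1)$ uniformly in $e$'' steps. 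The trade-off is that you need VC machinery (the class $\{z:|d(z)+e|\le t\}_{(e,t)}$ is a difference of two level-set families, hence polynomial discrimination), whereas the paper's leave-one-out bias/variance computation stays elementary. Your decomposition for (ii), $|R_n(\hat d,\hat\eta)-R|\le\sup_e|R_n(\hat d,e)-R(d,e)|+|R(d,\hat\eta)-R(d,\eta)|$, is also cleaner than the paper's terse ``immediately follows from Theorem~\ref{thm3}.'' The one place to be careful is that $u_c$ bounds the density of $d(z)+\eta$ near $0$, so when you write $\mathbb{E}[I\{|d(Z)+e|\le\epsilon_n\}\mid\epsilon_n]\le 2U_c\epsilon_n$ for all $|e|\le c$, you implicitly need the density bound on a window of radius $c+|\eta|+\epsilon_n$, i.e.\ take the constant in $u_c$ large enough to cover the shift by $e-\eta$; this is a bookkeeping point, not a real gap.
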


\section{Numerical Study}
In this section, we provide extensive numerical evidence to show the empirical performance of {DA-QDA} by comparing it to its competitors, including the sparse QDA ({sQDA}, \cite{Li:2013aa}), the innovated interaction screening for sparse quadratic discriminant analysis ({IIS-SQDA}, \cite{Fan:2015aa}), penalized logistic regression with only main effects considered ({PLR}), penalized logistic regression with all interaction terms ({PLR2}), the direct approach for sparse LDA ({DSDA}, \cite{Mai:2012aa}), the conventional LDA ({LDA}), the conventional QDA ({QDA}) and the oracle procedure (Oracle). The oracle procedure uses the true underlying model and serves as the optimal risk bound for comparison. We evaluate all methods via nine synthetic datasets and four real datasets. In addition, we also include L1-regularized SVM (L-SVM) and kernel SVM (K-SVM, with Gaussian kernel) performance as a benchmark in the real data analysis.

To fit {DA-QDA}, we employ ADMM to estimate $\Omega$ and the coordinate-wise descent algorithm \citep{Friedman:etal:2010} to fit $\delta$. Once $\Omega$ and $\delta$ are given, we then find the value of $\eta$ by a simple linear search, minimizing the in-sample misclassification error. The rate parameter $\rho$ in ADMM is set according to the optimal criterion suggested by \cite{Ghadimi:etal:2015}. 
The other two tuning parameters, $\lambda$ for estimating $\Omega$ and $\lambda_\delta$ for estimating $\delta$, are chosen by 5-fold cross-validation, where the loss function is chosen to be the out-of-sample misclassification rate.
To reduce the searching complexity, we are currently using a searching path rather than grid based tuning to avoid redundant computation and in-memory parallelism to distribute the computational tasks. It is worth noting that the calculation for each individual tuning pair ($\lambda$, $\lambda_\delta$) can be made completely independent such that it is possible to distribute the entire calculation to multiple threads in a parallel fashion. 
One can also tune the parameters using the objective functions in \eqref{eq:DA-QDA} and \eqref{eq:delta} separately. However, we found that this strategy did not often lead to better classification results than tuning them jointly. This is possibly due to the complex shape of the misclassification surface as a function of these two tuning parameters. We implemented {sQDA} in Matlab with the leave-one-out-cross-validation \citep{Li:2013aa} to tune the three parameters. We employ Matlab's built-in function {\em fitcdiscr} to fit {LDA} and {QDA} and the R package {\em dsda} \citep{Mai:2012aa} to fit {DSDA}. For {PLR}, {PLR2} and the second stage fit of {IIS-SQDA} which is a penalized logistic regression, we use the {\em glmnet} package and set $\alpha = 0.5$ as the elastic net parameter. Other values of $\alpha$ was tried but did not change the result much.
The other tuning parameter in {\em glmnet} is chosen by 10-fold cross-validation to minimize out-of-sample classification error. For the first stage of {\em IIS-SQDA} which is a screening step, we adopt the oracle-assisted approach proposed in \cite{Fan:2015aa}, i.e., using the true $\Sigma_1$ and $\Sigma_2$ to compute the transformed variables used for screening as discussed in \cite{Fan:2015aa}. To seek an appropriate screening size, we preserve the top $10, 30$ or $50$ variables for each experiment to form interaction terms and report the best result (smallest misclassification error) for {\em IIS-SQDA}.

\subsection{Synthetic data}
For synthetic data, we use the same setup in \cite{Fan:2015aa}. Observations are simulated from $N(u_1, \Sigma_1)$ and $N(u_2, \Sigma_2)$ where $u_2 = 0$. Recall $\Omega_1 = \Sigma_1^{-1}$ and $\Omega_2 = \Sigma_2^{-1}$. We set $u_1 = \Sigma_1\beta$ for $\beta = (0.6, 0.8, 0, \cdots, 0)^T$ . We consider three different dimensions $p = 50, ~200, $ or $500$ with $n_1 = n_2 = 100$. The parameters $\Omega_1, \Omega_2$ and $\beta$ are set as follows.
\begin{itemize}
\item { Model 1:} This model is Model 3 in \cite{Fan:2015aa} where $\Omega_1$ is a band matrix with $(\Omega_1)_{ii} = 1$ and $\Omega_{ij} = 0.3$ for $|i - j| = 1$. We set $\Omega_2 = \Omega_1 + \Omega$, where $\Omega$ is a symmetric and sparse matrix with $ \Omega_{10, ~10} = -0.3758, \Omega_{10, ~30} = 0.0616, \Omega_{10, ~50} = 0.2037, \Omega_{30, ~30} = -0.5482, \Omega_{30, ~50} = 0.0286$, and $\Omega_{50, ~50} = - 0.4614$. The other 3 nonzero entries in the lower triangle of $\Omega$ are determined by symmetry.
\item { Model 2:} We set $(\Omega_1)_{ij} = 0.5^{|i - j|}$ and let $\Omega_2 = \Omega_1 + \Omega$, where $\Omega = I_p$.
\item { Model 3:} $\Omega_1$ is the same as {Model 2} and $\Omega_2 = \Omega_1$.
\item { Model 4:} $\Omega_1$ is the same as {Model 2} and $\Omega$ is a band matrix defined as $(\Omega)_{ii} = 1$ and $(\Omega)_{ij} = 0.5$ for $|i - j| = 1$. Let $\Omega_2 = \Omega_1 + \Omega$.
\item {Model 5:} $\Omega_1 = I_p$ and $\Omega_2 = \Omega_1 + \Omega$ where $\Omega$ is a random sparse symmetric matrix with conditional number $10$ and non-zero density $n_1/p^2 \times 0.7$ (generated by {\em sprandsym} in Matlab).
\item {Models 6 and 7:} These are Cases 9 and 10 in \cite{Srivastava}. For Models 6 and 7, we generate the covariance by the following process: We first sample an $p\times p$ matrix $R_1$ where each entry is an i.i.d uniform random variable between $[0, 1]$. Then $\Omega_1$ is defined to be $\Omega_1 = (R_1^TR_1)^{-1}$ which is a dense matrix. Matrix $\Omega_2$ is generated similarly. For Model 6 the means are set as $u_1 = u_2 = 0$ while for Model 7, $u_1$ and $u_2$ are generated by random sampling over uniform distributions.
\item {Models 8 and 9:} These are Cases 10 and 11 in \cite{Srivastava} where we generate the same $p\times p$ matrix $R_1$ as in Model 6. Then $\Omega_1$ is defined to be $\Omega_1 = (R_1^TR_1R_1^TR_1)^{-1}$ representing the ellipsoidal covariances, which often have only one strong eigenvalue and many relatively smaller ones. Matrix $\Omega_2$ is similarly generated. The means are set as $u_1 = u_2 = 0$ for Model 8 and generated by random sampling for Model 9.
\end{itemize}
Model 1 is a model where $\Omega$ is a sparse two-block matrix after permutation. This is a model that favors {IIS-SQDA}. In Model 2, the difference between $\Omega_1$ and $\Omega_2$ is a diagonal matrix, and {IIS-SQDA} is expected to underperform as its screening step for identifying variables that are involved in interaction would retain all the variables. Model 3 is obviously a model that favors the linear discriminant analysis ({LDA}) as $\Omega=0$, and in particular favors the sparse LDA ({DSDA}). This model is simulated to test whether methods designed for sparse QDA work satisfactorily in situations where LDA works the best. In Model 4, the difference matrix $\Omega$ is a tridiagonal matrix where the screening step of {IIS-SQDA} is expected to underperform. Finally, in Model 5, $\Omega$ admits a random sparse structure having $0.7n_1=70$ nonzero entries regardless of the dimension $p$. In Model 6 - 9, the covariance matrices are dense and so are $\Omega$ and $\delta$. These cases are to test DA-QDA in scenarios where the sparse assumption fails to hold and the features are highly correlated. Model 6 and 8 are two difficult cases for linear methods as the means of the two classes are also the same. 
Our implementation of { IIS-SQDA} is applied only to { Models 2 - 9} while the results for {Model 1} are directly cited from \cite{Fan:2015aa}.
\par
 For models 1 - 5, we simulate 100 synthetic datasets for each model and record for each method under comparison: 1) The misclassification rate ({MR}), 2) The false positives for main effects and interactions ({FP.main} for $\delta$ and {FP.inter} for $\Omega$), 3) The false negatives for main effects and interactions ({FN.main} for $\delta$ and {FN.inter} for $\Omega$). The results are summarized in Tables \ref{table: select-M1}, \ref{table: select-M2}, \ref{table: select-M3}, \ref{table: select-M4} and \ref{table: select-M5} for the five models under consideration. For the dense models, we compare the misclassification rates over 100 replications, and the results are summarized in Table \ref{table: select-M6}.
 
 From these tables, we can observe the following phenomena.
 \begin{itemize}
 \item[1.] For Model 1 where the setup favors IIS-SQDA, IIS-SQDA performs the best in terms of variable selection. DA-QDA performs similarly in terms of the misclassification rate. These two methods outperform LDA and QDA (when $p=50$), and PLR, PLR2, DSDA by a large margin in classification error.
 \item[2.] For Model 2, as expected, IIS-SQDA is outperformed by DA-QDA by a large margin. Interestingly, PLR2 performs the second best. Linear classifiers including PLR and DSDA perform quite badly.
 \item[3.] For Model 3, DSDA, IIS-SQDA and DA-QDA perform best and similarly. It is interesting that DA-QDA performs on par with DSDA even when the model clearly favors sparse linear classifiers.
 \item[4.] For Model 4, DA-QDA outperforms all other methods again by a large margin. 
 \item[5.] For Model 5, DA-QDA performs the best and by a large margin when $p$ becomes large.
 \item[6.] For Models 6 - 9, the ordinary QDA performs the best for all low dimensional cases, which is as expected as the covariances of the the two classes are sufficiently different. When the dimension goes higher, DA-QDA achieves a high precision for some really difficult cases which is pretty surprising, considering the matrices are now all dense. The advantage mostly comes from that DA-QDA only imposes the sparse assumption on $\Omega$ and $\delta$ instead of the original precision matrices as for sQDA. In addition, DA-QDA performs better than IIS-SQDA.
 \end{itemize}
 To summarize, {DA-QDA} achieves the smallest misclassification rate in most examples and competitive performance in selecting main and interaction effects. IIS-SQDA is the preferred
approach if $\Omega$ is a two-block diagonal matrix after permutation as is the case for Model 1. PLR2 generally performs better than (sparse) linear classifiers when interactions exist.

\begin{table}[!htbp]
\scriptsize
\caption{The means and standard errors (in parentheses) of various
performance measures by different classification methods for model 1 based on 100 replications}
\label{table: select-M1}
 
\begin{tabular*}{\textwidth}{@{\extracolsep{\fill}}l|l|ccccc@{}}
\hline
${p}$ & {Method} & {MR (\%)} & {FP.main} & {FP.inter} & {FN.main} & \multicolumn{1}{c@{}}{{FN.inter}} \\
\hline
\hline
\phantom{0}50 & LDA & 39.43 (0.15) & -- & -- & -- & --\\
& QDA & 43.47 (0.10) & -- & -- & -- & -- \\
& PLR & 36.12 (0.26) & 5.95 (0.93) &-- &1.21 (0.04) &-- \\
& DSDA & 35.05 (0.22) & 8.81 (1.06) & -- & 0.07 (0.03) & --\\
& sQDA & 27.64 (0.22) & 11.17 (1.49) & -- & 0.33 (0.05) & -- \\
& PLR2 & 30.15 (0.44) & 0.51 (0.14) &11.26 (2.78) & 0.60 (0.05) & 2.62
(0.09)\\
& IIS-SQDA & 27.56 (0.27) & 5.60 (0.82) & 2.16 (0.32) & 0.19 (0.04) &
2.05 (0.09)\\
& DA-QDA & 26.50 (0.28) & 0.85 (0.18) & 35.26 (4.72) & 0.39 (0.07) & 3.74 (0.14)\\
& Oracle & 23.04 (0.09) & -- & -- & -- & --\\
\hline
200 
& PLR & 37.62 (0.34) & 7.82 (1.87) &-- &1.47 (0.05) &-- \\
& DSDA & 36.34 (0.30) & 15.06 (3.37) & -- & 0.36 (0.05) & --\\
& sQDA & 26.80 (0.21) & 12.75 (2.22) & -- & 0.47 (0.05) & -- \\
& PLR2 & 32.55 (0.53) & 0.25 (0.06) &17.44 (3.63) & 0.90 (0.05) & 2.72
(0.08)\\
& IIS-SQDA & 26.94 (0.31) & 6.43 (1.24) &0.78 (0.17) & 0.42 (0.05) &
2.22 (0.08)\\
& DA-QDA & 26.51 (0.20) & 0.29 (0.07) & 25.48 (2.75) & 0.82 (0.08) & 4.14 (0.12)\\
& Oracle & 21.93 (0.08) & -- & -- & -- & --\\
\hline
500 
& PLR & 38.82 (0.33) & 9.31 (1.99) &-- &1.58 (0.05) &-- \\
& DSDA & 37.10 (0.29) & 16.06 (3.02) & -- & 0.42 (0.05) & --\\
& sQDA & 28.22 (0.41) & 24.22 (5.04) & -- & 0.58 (0.05) & -- \\
& PLR2 & 35.45 (0.64) & 0.34 (0.09) &55.69 (12.67) & 0.99 (0.05) &
3.05 (0.10)\\
& IIS-SQDA & 26.78 (0.31) & 3.22 (1.09) &0.23 (0.05) & 0.98 (0.02) &
2.65 (0.09)\\
& DA-QDA & 26.68 (0.27) & 0.14 (0.06) & 10.96 (1.38) & 1.02 (0.08) &  4.36 (0.09)\\
& Oracle & 21.81 (0.09) & -- & -- & -- & --\\
\hline
\end{tabular*}
 
\end{table}

\begin{table}[!htbp]
\scriptsize
\caption{The means and standard errors (in parentheses) of various
performance measures by different classification methods for model 2
based on 100 replications}
\label{table: select-M2}
\begin{tabular*}{\textwidth}{@{\extracolsep{\fill}}l|l|ccccc@{}}
\hline
$p$ & {Method} & {MR (\%)} & {FP.main} & {FP.inter} & {FN.main} & \multicolumn{1}{c@{}}{{FN.inter}} \\
\hline
\hline
\phantom{0}50 & LDA & 34.53 (0.19) & -- & -- & -- & --\\
& QDA & 32.09 (0.25) & -- & -- & -- & -- \\
& PLR & 31.58 (0.20) & 7.51 (0.55) &-- &0.07 (0.03) &-- \\
& DSDA & 29.89 (0.16) & 8.52 (0.86) & -- & 0.16 (0.04) & --\\
& sQDA & 30.96 (0.90) & 27.33 (1.95) & -- & 0.24 (0.05) & --\\
& PLR2 & 5.85 (0.10) & 1.14 (0.11) &45.60 (1.08) & 0.14 (0.04) & 14.27 (0.33)\\
& IIS-SQDA & 5.85 (0.10) & 1.14 (0.11) &45.60 (1.08) &  0.14 (0.04) & 14.27 (0.32)\\
& DA-QDA & 1.84 (0.08) & 4.12 (0.49) & 110.10 (10.54) & 0.28 (0.05) & 1.28 (0.22)\\
& Oracle & 0.65 (0.02) & -- & -- & -- & -- \\
\hline
200 
& PLR & 33.34 (0.21) & 10.79 (0.70) &-- & 0.16 (0.04) &-- \\
& DSDA & 30.37 (0.23) & 11.91 (2.19) & -- & 0.29 (0.05) & --\\
& sQDA & 33.28 (0.58) & 101.75 (7.72) & -- & 0.27 (0.05) & --\\
& PLR2 & 1.73 (0.06) & 0.01 (0.01) &12.68 (0.56) & 1.08 (0.05) & 119.95 (0.52)\\
& IIS-SQDA & 3.98 (0.10) & 2.10 (0.15) &15.76 (0.60) & 0.11 (0.04) & 153.47 (0.31) \\
& DA-QDA & 0.39 (0.18) & 9.03 (2.12) & 724.35 (19.52) & 0.21 (0.04) & 6.05 (0.35)\\
& Oracle & 0 (0) & -- & -- & -- & --\\
\hline
500 
& PLR & 34.04 (0.24) & 11.17 (1.02) &-- &0.30 (0.05) &-- \\
& DSDA & 30.99 (0.22) & 14.61 (2.64) & -- & 0.44 (0.05) & --\\
& sQDA & 36.92 (0.64) & 243.9 (21.2) & -- & 0.35 (0.05) & --\\
& PLR2 & 1.68 (0.06) & 0 (0) & 5.52 (0.33) & 1.19 (0.05) & 401.47 (0.59)\\
& IIS-SQDA & 4.12 (0.09) & 2.74 (0.25) & 8.02 (0.43) & 0.12 (0.04) & 451.13 (0.29)\\
& DA-QDA & 0.16 (0.22) & 24.33 (2.18) & 4.81e3 (290.1) & 0.52 (0.05) & 58.09 (1.10)\\
& Oracle & 0 (0) & -- & -- & -- & --\\
\hline
\end{tabular*}
%
\end{table}

\begin{table}[!htbp]
\scriptsize
\caption{The means and standard errors (in parentheses) of various
performance measures by different classification methods for model 3
based on 100 replications}
\label{table: select-M3}
\begin{tabular*}{\textwidth}{@{\extracolsep{\fill}}l|l|ccccc@{}}
\hline
${p}$ & {Method} & {MR (\%)} & {FP.main} & {FP.inter} & {FN.main} & \multicolumn{1}{c@{}}{{FN.inter}} \\
\hline
\hline
\phantom{0}50 & LDA & 38.82 (0.19) & -- & -- & -- & --\\
& QDA & 47.57 (0.11) & -- & -- & -- & -- \\
& PLR & 36.06 (0.23) & 7.73 (0.58) &-- &0.14 (0.03) &-- \\
& DSDA & 34.82 (0.24) & 9.54 (1.09) & -- & 0.26 (0.04) & --\\
& sQDA & 41.52 (0.51) & 14.89 (1.69) & -- & 0.44 (0.05) & --\\
& PLR2 & 37.36 (0.34) & 0.60 (0.10) & 31.10 (3.21) & 0.39 (0.06) & 0 (0)\\
& IIS-SQDA & 35.10 (0.22) & 5.25 (0.46) &10.85 (0.96) &  0.06 (0.02) & 0 (0)\\
& DA-QDA & 34.99 (0.58) & 0.82 (0.20) & 23.84 (6.69) & 0.35 (0.07) & 0 (0)\\
& Oracle & 31.68 (0.10) & -- & -- & -- & --\\
\hline
200 
& PLR & 38.50 (0.31) & 12.90 (1.08) &-- & 0.23 (0.04) &-- \\
& DSDA & 36.27 (0.28) & 14.81 (2.26) & -- & 0.41 (0.05) & --\\
& sQDA & 43.82 (0.53) &53.18 (6.74) & -- & 0.52 (0.05) & --\\
& PLR2 & 40.31 (0.45) & 0.15 (0.05) & 40.38 (5.05) & 0.74 (0.06) & 0 (0)\\
& IIS-SQDA & 36.32 (0.25) & 25.39 (0.66) & 6.03 (0.50) & 0 (0) & 0 (0) \\
& DA-QDA & 36.55 (0.74) & 1.70 (1.38) & 37.15 (16.39) & 0.89 (0.09) & 0 (0)\\
& Oracle & 31.54 (0.10) & -- & -- & -- & --\\
\hline
500 
& PLR & 39.98 (0.32) & 14.79 (1.41) &-- &0.40 (0.05) &-- \\
& DSDA & 37.07 (0.29) & 19.49 (3.65) & -- & 0.59 (0.05) & --\\
& sQDA & 46.00 (0.48) & 130.91 (18.08) & -- & 0.57 (0.05) & --\\
& PLR2 & 42.23 (0.53) & 0.03 (0.02) & 36.6 (4.32) & 1.07 (0.06) & 0 (0)\\
& IIS-SQDA & 37.45 (0.26) & 14.53 (1.38) & 3.70 (0.32) & 0.07 (0.26) & 0 (0)\\
& DA-QDA & 37.95 (0.76) & 0.2 (0.06) & 57.49 (14.74) & 1.05 (0.09) & 0 (0)\\
& Oracle & 31.85 (0.12) & -- & -- & -- & --\\
\hline
\end{tabular*}
%
\end{table}

\begin{table}[!htbp]
\scriptsize
\caption{The means and standard errors (in parentheses) of various
performance measures by different classification methods for model 4
based on 100 replications}
\label{table: select-M4}
\begin{tabular*}{\textwidth}{@{\extracolsep{\fill}}l|l|ccccc@{}}
\hline
${p}$ & {Method} & {MR (\%)} & {FP.main} & {FP.inter} & {FN.main} & \multicolumn{1}{c@{}}{{FN.inter}} \\
\hline
\hline
\phantom{0}50
& LDA 		& 35.58 (0.20) & -- 			& -- 			& -- & --\\
& QDA 		& 35.40 (0.20) & -- 			& -- 			& -- & -- \\
& PLR 		& 32.42 (0.23) & 8.03 (0.57) 	&-- 			&0.03 (0.01) &-- \\
& DSDA 		& 31.39 (0.21) & 11.02 (1.13) 	& -- 			& 0.09 (0.03) & --\\
& sQDA      & 40.90 (0.46) & 18.36 (1.93)   &--             &0.45 (0.05)  &--\\
& PLR2 	  	& 22.42 (0.21) & 1.88 (0.16) 	& 81.56 (2.26) 	& 0.06 (0.03) & 123.72 (0.36)\\
& IIS-SQDA 	& 21.77 (0.20) & 3.42 (0.21) 	& 58.92 (1.86) 	&  0 (0)    & 125.73 (0.32)\\
& DA-QDA     	& 16.91 (0.27) & 0.55 (0.14) 	& 194.98 (11.31) 	& 0.61 (0.08)& 106.51 (0.83)\\
& Oracle    & 3.22 (0.04) & -- & -- & -- & --\\
\hline
200 
& PLR & 34.93 (0.28) & 12.71 (0.88) &-- & 0.10 (0.03) &-- \\
& DSDA & 32.64 (0.26) & 15.63 (2.14) & -- & 0.21 (0.04) & --\\
& sQDA & 41.68 (0.54) & 64.88 (7.33) & -- & 0.46 (0.05) & --\\
& PLR2 & 21.82 (0.20) & 0.30 (0.05) & 107.80 (2.32) & 0.40 (0.05) & 559.23 (0.63)\\
& IIS-SQDA & 20.15 (0.19) & 6.11 (0.31) & 70.76 (1.76) & 0 (0) & 563.33 (0.38) \\
& DA-QDA & 9.59 (0.19) & 0.31 (0.08) & 297.38 (25.33) & 0.82 (0.09) & 498.61 (1.49)\\
& Oracle & 0.28 (0.02) & -- & -- & -- & --\\
\hline
500 
& PLR & 37.19 (0.32) & 15.68 (1.27) &-- &0.32 (0.04) &-- \\
& DSDA & 33.83 (0.30) & 22.90 (3.54) & -- & 0.45 (0.05) & --\\
& sQDA & 43.39 (0.48) & 193.04 (20.32) & -- & 0.46 (0.05) &--\\
& PLR2 & 23.06 (0.23) & 0.05 (0.02) & 114.94 (2.34) & 0.79 (0.05) & 1455 (0.65)\\
& IIS-SQDA & 19.07 (0.17) & 12.86 (0.42) & 57.44 (1.41) & 0 (0) & 1459 (0.34)\\
& DA-QDA & 4.18 (0.13)  & 0.20 (0.04) & 298.24 (20.8) & 0.42 (0.07) & 1315 (2.41)\\
& Oracle & 0 (0) & -- & -- & -- & -- \\
\hline
\end{tabular*}
%
\end{table}

\begin{table}[!htbp]
\scriptsize
\caption{The means and standard errors (in parentheses) of various
performance measures by different classification methods for model 5
based on 100 replications}
\label{table: select-M5}
\begin{tabular*}{\textwidth}{@{\extracolsep{\fill}}l|l|ccccc@{}}
\hline
${ p}$ & {Method} & {MR (\%)} & {FP.main} & {FP.inter} & {FN.main} & \multicolumn{1}{c@{}}{{FN.inter}} \\
\hline
\hline
\phantom{0}50
& LDA 		& 39.21 (0.20) & -- 			& -- 			& -- & --\\
& QDA 		& 46.41 (0.17) & -- 			& -- 			& -- & -- \\
& PLR 		& 35.76 (0.26) & 6.08 (0.43) 	&-- 			& 0.01 (0.01) &-- \\
& DSDA 		& 33.73 (0.25) & 8.08 (0.99) 	& -- 			& 0.14 (0.04) & --\\
& sQDA      & 36.76 (0.27) & 9.37 (1.57)    & --            & 0.06 (0.02) & --\\
& PLR2 	  	& 36.62 (0.39) & 1.04 (0.13) 	& 45.83 (3.99) 	& 0.05 (0.02) & 63.69 (0.39)\\
& IIS-SQDA 	& 35.56 (0.29) & 8.77 (0.50) 	& 14.85 (0.83) 	&  0 (0)      & 61.18 (0.26)\\
& DA-QDA     	& 34.32 (0.53) & 0.52 (0.12)    & 39.76 (6.47)  & 0.58 (0.08) & 59.76 (0.64)\\
& Oracle    & 32.36 (0.25) & -- & -- & -- & --\\
\hline

200 
& PLR & 37.73 (0.34) & 9.68 (0.89) &-- & 0.40 (0.03) &-- \\
& DSDA & 34.58 (0.35) & 10.87 (2.44) & -- & 0.11 (0.03) & --\\
& sQDA & 26.11 (0.27) & 18.35 (4.79) & -- & 0.21 (0.04) & --\\
& PLR2 & 37.40 (0.44) & 0.32 (0.06) & 66.44 (5.47) & 0.31 (0.06) & 194.46 (0.35)\\
& IIS-SQDA & 33.22 (0.28) & 19.87 (0.93) & 6.16 (0.41) & 0 (0) & 191.37 (0.10) \\
& DA-QDA   & 29.35 (0.41) & 0.10 (0.05) & 164.24 (73.3) & 1.27 (0.07) & 175.8 (0.96)\\
& Oracle & 20.09 (0.27) & -- & -- & -- & -- \\
\hline
500 
& PLR & 39.13 (0.33) & 14.39 (1.29) &-- & 0.08 (0.03) &-- \\
& DSDA & 34.76 (0.25) & 9.44 (1.77) & -- & 0.16 (0.04) & --\\
& sQDA & 10.17 (0.16) & 22.32 (6.88) &-- & 0.24 (0.05) & --\\
& PLR2 & 37.44 (0.52) & 0.16 (0.05) & 90.78 (6.06) & 0.43 (0.06) & 493.48 (0.41)\\
& IIS-SQDA & 26.57 (0.23) & 19.14 (0.57) & 62.00 (1.56) & 0 (0) & 475.49 (0.20)\\
& DA-QDA &  23.75 (0.49) & 4.03 (2.91) & 507.92 (225.36) & 1.59 (0.06) & 459.96 (1.96)\\
& Oracle & 4.16 (0.08) & -- & -- & -- & --\\
\hline
\end{tabular*}
%
\end{table}

\begin{table}[!htbp]
\scriptsize
\caption{The means and standard errors (in parentheses) of mis-classification rate (MR \%) for models 6, 7, 8, 9
based on 100 replications}
\label{table: select-M6}
\begin{tabular*}{\textwidth}{@{\extracolsep{\fill}}l|l|cccc}
\hline
${ p}$ & {Method} & Model 6 & Model 7 & Model 8 & Model 9 \\
\hline
\hline
\phantom{0}50
& LDA     & 49.86 (0.10) & 21.91 (0.37) & 49.69 (0.10)     & 32.41 (0.42) \\
& QDA     & 0.00 (0.00) & 0.00 (0.00) & 0.00 (0.00)    & 0.00 (0.00) \\
& PLR     & 49.97 (0.07) & 24.40 (0.38)  & 49.77 (0.12)   & 40.24 (0.79)\\
& DSDA    & 49.89 (0.11) & 22.84 (0.39)  & 49.05 (0.23)    & 34.31 (0.50)\\
& sQDA      & 48.14 (0.37) & 32.80 (0.69) & 46.68 (0.35)  & 35.27 (0.68) \\
& PLR2      & 18.17 (0.24) & 16.85 (0.25)  & 7.41 (0.14)  & 7.28 (0.12) \\
& IIS-SQDA  & 19.90 (0.29) & 18.05 (0.25)  & 7.60 (0.13)  & 7.49 (0.12) \\
& DA-QDA    & 12.58 (0.23) & 11.70 (0.28)  & 4.26 (0.14)  & 4.22 (0.14) \\
& Oracle    &  0.00 (0.00)& 0.00 (0.00) & 0.00 (0.00) & 0.00 (0.00) \\
\hline

200 
& PLR & 50.08 (0.09) & 39.86 (0.39) & 49.66 (0.16) & 49.83 (0.08)\\
& DSDA & 49.96 (0.11) & 37.00 (0.28) & 49.33 (0.18) & 48.33 (0.34)\\
& sQDA & 50.48 (0.15) & 45.33 (0.45) & 48.98 (0.26) & 48.49 (0.20)\\
& PLR2 & 38.76 (0.43) & 36.90 (0.48) & 6.80 (0.13) & 7.25 (0.10)\\
& IIS-SQDA & 45.59 (0.37) & 40.93 (0.44) & 8.03 (0.11) & 8.50 (0.14)\\
& DA-QDA   & 31.45 (0.42) & 33.01 (0.50) & 4.65 (0.88) & 4.07 (0.67)\\
& Oracle &  0.00 (0.00) & 0.00 (0.00) & 0.00 (0.00) & 0.00 (0.00) \\
\hline
500 
& PLR & 49.90 (0.06) & 47.67 (0.26) & 50.15 (0.09) &  50.13 (0.10) \\
& DSDA & 50.22 (0.21) & 42.71 (0.14) & 49.38 (0.09) &  46.70 (0.16\\
& sQDA & 50.17 (0.08) & 50.33 (0.11) & 49.85 (0.12) & 49.83 (0.12)\\
& PLR2 & 46.32 (0.49) & 45.58 (0.46) & 5.66 (0.12) & 7.47 (0.06)\\
& IIS-SQDA & 49.33 (0.44) & 46.61 (0.39) & 7.37 (0.15) & 9.43 (0.13) \\
& DA-QDA &  39.67 (0.16) & 41.88 (0.48) & 2.43 (0.12) & 2.27 (0.07) \\
& Oracle &  0.00 (0.00) & 0.00 (0.00) & 0.00 (0.00) & 0.00 (0.00) \\
\hline
\end{tabular*}
%
\end{table}

\subsection{Real data}

In this section, we investigate the performance of {DA-QDA} by analyzing four real data sets and compare it to the other classifiers discussed in the simulation study and  we also include L1-regularized SVM (L-SVM) and kernel SVM (K-SVM, with Gaussian kernel) for more comprehensive comparison.

\paragraph{Quora answer classifier.} This is a data challenge available at \url{http://www.quora.com/challenges#answer_classifier}. The training data set contains 4,500 answers from QUORA which have been annotated with either "good" or "bad". For each answer, 21 features (20 of which are effective) were extracted from the original sentences. The goal of this challenge is to automatically classify a new answer based on the 20 features. Since the dimension $p=20$ is relatively small, we can compare DA-QDA to all the methods discussed in the simulation via 10-fold cross-validation. In particular, we randomly split the data into ten parts, fit a model to the nine parts of the data, and report the misclassification error on the part that is left out. The average misclassification errors and the standard errors for various methods are in Table \ref{tab:quora}. Interestingly, LDA performs much better than QDA, suggesting that if we stop the analysis here, we might simply have preferred to use the linear classifier LDA. However, the story becomes different if sparse models are considered. In particular, PLR, PLR2, IIS-SQDA and DA-QDA all outperform the non-sparse models significantly with DA-QDA performing the best.

\begin{table}[!htpb]
\caption{Misclassification rate (\%) for the Quora answer data under 10-fold cross-validation}
\label{tab:quora}
\centering
\begin{tabular}{lcc}
\hline
{Method}  & {mean} & {standard error} \\
\hline
 DA-QDA       & 16.44 & 0.45  \\
 LDA 		& 18.84           & 0.50 		\\
 QDA 		& 30.33           & 0.72	  \\
 PLR 		& 17.89 		  & 0.60  \\
 DSDA 		& 19.11 		  & 0.56 \\
 sQDA       & 29.59           & 1.57 \\
 PLR2 	  	& 17.56 		  & 0.71  \\
 IIS-SQDA 	& 17.33			  & 0.48  \\
 L-SVM  & 18.13          &0.83  \\
 K-SVM & 25.53        & 0.27  \\
\hline
\end{tabular}
\end{table}

\paragraph{Gastrointestinal Lesions} 
This dataset \citep{PMesejo:etal:2016} contains the features extracted from a database of colonoscopic videos showing three types of gastrointestinal lesions, hyperplasic, adenoma and serrated adenoma. The original task is a multi-class classification problem, which is simplified to a binary classification task aiming at identifying adenoma. The data set contains 152 samples (76 original samples each with two different light conditions) and 768 features. We select the top 200 features with the largest absolute values of the two sample $t$ statistics and perform a 10-fold corss-validation. The average misclassification errors and the standard errors are reported in Table \ref{tab:lensions}. The data is predominated by the main effects as the logistic regression achieves the best with DA-QDA as the runner-up.

\begin{table}[!htpb]
\caption{Misclassification rate (\%) for gastrointestinal lesions under 10-fold cross-validation}
\label{tab:lensions}
\centering
\begin{tabular}{lcc}
\hline
{Method}  & {mean} & {standard error} \\
\hline
 DA-QDA       & 33.33  & 0.00  \\
 LDA    & ---           & ---    \\
 QDA    & ---           & ---    \\
 PLR    & 30.67       & 2.04  \\
 DSDA     & 53.33       & 0.00 \\
 sQDA       & 40.00           & 0.00 \\
 PLR2       & 39.33       & 0.67  \\
 IIS-SQDA   & 40.00       & 1.02  \\
 L-SVM        & 44.00       & 1.85  \\
 k-SVM      & 46.67       & 0.00  \\
\hline
\end{tabular}
\end{table}

\paragraph{Pancreatic cancer RNA-seq data} 
The dataset \citep{Weinstein:etal:2013} is part of the RNA-Seq (HiSeq) PANCAN data set and is a random extraction of gene expressions of patients having different types of tumor: BRCA, KIRC, COAD, LUAD and PRAD. The dataset contains 801 patients and 20531 genes. In this task, we aim to distinguish BRCA against the other cancers. Similar to the previous study, we select 500 genes with the largest absolute values of the two sample $t$ statistics for further analysis. Since most methods achieve 0 misclassification error in 10-fold cross-validation test, to increase the difficulty, we randomly split the dataset in two equal subsets, train on one subset and test on the other. We repeat this procedure 50 times to obtain the following Table \ref{tab:pancan}. The L1 regularized SVM achieves the smallest misclassification error among all the methods. Similar as the previous dataset, the difference between cancers is dominated by main factors as the LDA has already achieved a surprisingly low misclassification rate (0.22\%). The dataset also shows that DA-QDA can also perform well when the difference between the covariance matrices is small.

\begin{table}[!htpb]
\caption{Misclassification rate (\%) for Pancreatic cancer RNA-seq data}
\label{tab:pancan}
\centering
\begin{tabular}{lcc}
\hline
{Method}  & {mean} & {standard error} \\
\hline
 DA-QDA       & 0.09  & 0.13  \\
 LDA    & 0.22           & 0.22    \\
 QDA    & ---           & ---    \\
 PLR    & 0.12       & 0.01  \\
 DSDA     & 0.62     & 0.22 \\
 sQDA       & 0.29           & 0.19 \\
 PLR2       & 0.09       & 0.01  \\
 IIS-SQDA   & 0.15       & 0.01  \\
 L-SVM        & 0.02       & 0.01  \\
 k-SVM      & 37.26      & 0.48  \\
\hline
\end{tabular}
\end{table}

\paragraph{Prostate cancer}
Taken from \url{ftp://stat.ethz.ch/Manuscripts/dettling/prostate.rda}, 
this data contains genetic expression levels for $N = 6033$ genes of $102$ individuals. The first $50$ are normal control subjects while the rest are prostate cancer patients. More details of the data can be found in \cite{Singh}, \cite{Dettling} and \cite{Efron:2010}. The goal is to identify genes that are linked with prostate cancer and predict potential patients and the difficulty of this task lies in the interactions among genes. The existence of interactions can often complicate the analysis and produce unreliable inference if they are ignored. For example, Figure \ref{fig:2} displays the pair of $118^{th}$ and $182^{th}$ gene. We can see the marginal distributions of each gene does not differ too much between the patients and the normal subjects (the middle and the right panels), suggesting that their main effects may not be important for distinguishing the two classes. In the left panel of Figure \ref{fig:2}, however, we can identify some joint pattern that distinguishes the two groups. It can be seen that most patients are allocated in the red triangle while most normal subjects are within the blue triangle, indicating the existence of some interaction effect that might be useful for classification.
\begin{figure}[!htbp]
\centering
 \includegraphics[height = 5cm]{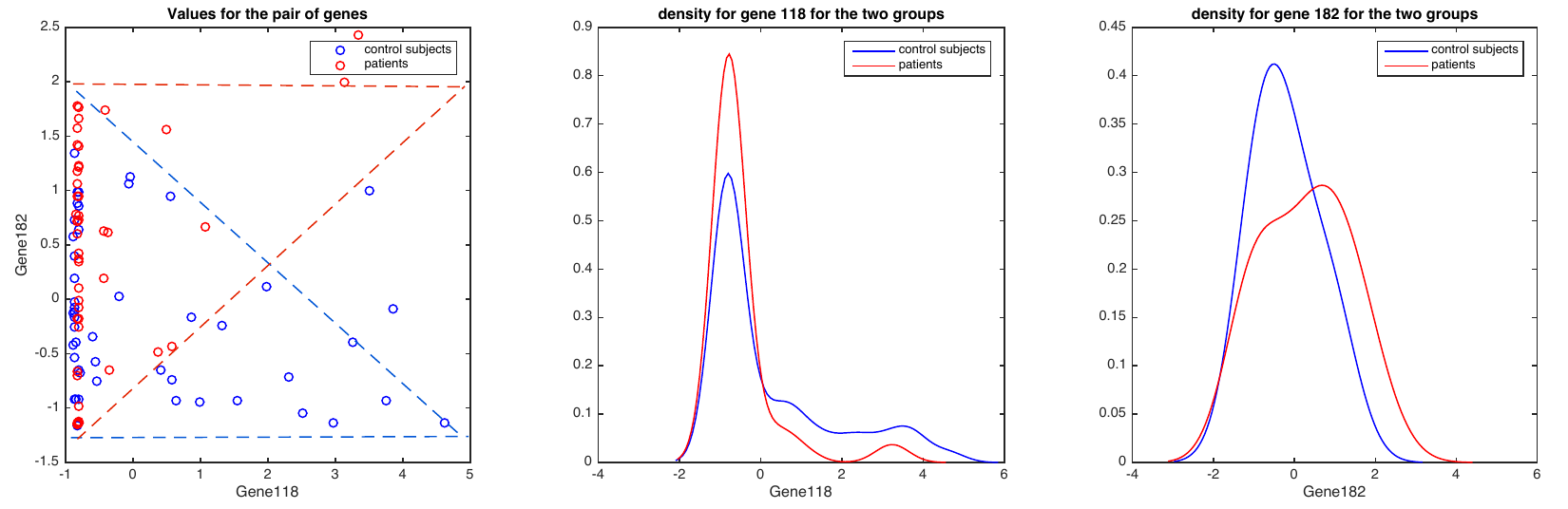}
\caption{The plot for the gene 118 and gene 182. Left: joint scatter plot; Middle: marginal density of gene 118; Right: marginal density of gene 182.}
\label{fig:2}
\end{figure}

For this data, we follow the same method in \cite{Cai}, retaining only the top 200 or 500 genes with the largest absolute values of the two sample $t$ statistics.  The average misclassification errors and the standard errors using 10-fold cross-validation for various methods are reported in Table \ref{tab:prostate}. Note that since $p\gg n$,  LDA and QDA were excluded. We can see again that DA-QDA is on par with L1 regularized SVM and outperforms all the other methods by a large margin, regardless of the number of the genes that were used for analysis.

\begin{table}[!htpb]
\caption{Misclassification rate (\%) for the prostate cancer data under 10-fold cross-validation}
\label{tab:prostate}
\centering
\begin{tabular}{lccccc}
\hline
& \multicolumn{2}{c}{$p = 200$} && \multicolumn{2}{c}{$p = 500$}\\
\cline{2 - 3}\cline{5 - 6}
{Method}  & {mean} & {std error} & & {mean} & {std error} \\
\hline
 DA-QDA       & 0.00   & 0.00      &&  1.00   & 1.00\\
 LDA    & ---           & ---      &&  ---    & --- \\
 QDA    & ---           & ---      &&  ---    & --- \\
 PLR 		& 11.00		      & 2.45      &&  16.00  & 2.92\\
 DSDA 		& 5.00 		      & 3.32      &&  11.00  & 2.92\\
 sQDA       & 0.00            & 0.00      &&  2.00   & 2.00\\
 PLR2 	  	& 26.00 	      & 4.47      &&  43.00  & 3.39\\
 IIS-SQDA 	& 11.00			  & 2.92      &&  18.00  & 2.74\\
 L-SVM      & 0.00        & 0.00    &&   1.00    & 1.00\\
 k-SVM      & 48.00       & 9.82    &&   63.00   & 2.55\\
\hline
\end{tabular}
\end{table}

\section{Conclusion}
We have proposed a novel method named DA-QDA for high-dimensional quadratic discriminant analysis. This is the first method aiming at directly estimating the quantities in the QDA discriminant function. The proposed framework is simple, fast to implement and enjoys excellent theoretical properties. We have demonstrated via extensive simulation and four data analyses that DA-QDA performs competitively under various circumstances.

We conclude by identifying three directions for future research. First, though the discussion of the paper is focused on binary problems, we can extend DS-QDA to handle multi-class problems as follows. When there are $k\geq 2$ classes, we can apply the DA-QDA approach to classes 1 and $j$, where $j=2,\ldots, k$, in a pairwise manner. For a new sample $z$, denote the  DA-QDA classifier between class  1 and class $j$ as $\hat{D}_j(z)$ and suppose $D_i(z)$ is the smallest among $\{D_{j}(z), j=2,\ldots, k\}$.
By Bayes' rule, we can then classifier $z$ into class $i$ if $D_i(z)>0$ and class $1$ otherwise.  
 Second, it is also interesting to see whether our theoretical results are optimal and in what sense. Finally, the proposed framework is extremely flexible. As a concrete example, if $\Omega$ is a two block sparse matrix  after permutation as in \cite{Fan:2015aa}, we can change the penalty $\|\Omega\|_1$ in \eqref{eq:DA-QDA} to one that encourages row sparsity, for example to $\|\Omega\|_{1,2}=\sum_{j=1}^p \|\Omega_{i,:}\|_2$ which is the sum of the $\ell_2$ norms of the rows. It will be interesting to see how well this procedure compares with IIS-SQDA in \cite{Fan:2015aa}. This topic is beyond the scope of the current paper and will be pursued elsewhere.





\section*{Acknowledgements}
We would like to thank the two reviewers and the Action Editor Prof. Saharon Rosset for their valuable comments that have greatly improved the manuscript. Jiang's research is supported by the Hong Kong RGC grant (PolyU 253023/16P). Leng's research is partially supported by the Alan Turing Institute under the EPSRC grant EP/N510129/1. 
 
\appendix
\section{Appendix A. Technical Lemmas and Proofs}
\label{app:theorem}



\subsection{Linear convergence of the ADMM algorithm}
The following lemma establishes the linear convergence of our proposed ADMM algorithm in solving \eqref{eq:DA-QDA1}.  

\begin{Lem}\label{linearADMM}
Given $\hat{\Sigma}_1,\hat{\Sigma}_2$ and $\lambda$, suppose that the ADMM scheme in \eqref{eq:Lambda}-\eqref{eq:Gamma} generates a solution sequence $\{\Omega^r,\Psi^r,\Lambda^r\}$. We have that $\{\Omega^r,\Psi^r\}$ converges linearly to an optimal solution of \eqref{eq:DA-QDA1}, and that $\|\Omega^r-\Psi^r\|_F$ converges linearly to zero. 

\end{Lem}
\begin{proof}
 Note that equation \eqref{eq:DA-QDA1} is a special case of (1.1) of \cite{Hong2017} with two blocks: $f(\Omega,\Psi)=f_1(\Omega)+f_2(\Psi)$ where, $f_1(\Omega)=\frac{1}{2}Tr\left(\Omega^T\hat\Sigma_1\Omega \hat\Sigma_2\right) -Tr\left( \Omega(\hat\Sigma_1-\hat\Sigma_2) \right)$ and 
 $f_2(\Psi)=\lambda\| \Psi\|_1$. Note that $Tr\left(\Omega^T\hat\Sigma_1\Omega \hat\Sigma_2\right)=vec(\Omega)^T(\hat{\Sigma}_2\otimes \hat{\Sigma}_1)vec(\Omega)$. Let $\hat{\Sigma}_2\otimes \hat{\Sigma}_1=U^T\Lambda U$ be the eigenvalue decomposition of the symmetric matrix $\hat{\Sigma}_2\otimes \hat{\Sigma}_1$, and denote $A_1:=U^T\Lambda^{1/2} U$. Let $g_1(x)=x^Tx$ be a function  defined on $R^{p^2}\mapsto R$, and $h_1(x)=tr(x(\hat\Sigma_1-\hat\Sigma_2))$, $h_2=\lambda|x|_1$ be functions defined on $R^{p^2}\mapsto R$. We then have $f_1(\Omega)=g_1(A_1vec(\Omega))+h_1(vec(\Omega))$ and $f_2(\Psi)=h_2(vec(\Psi))$. Clearly given $\hat{\Sigma}_1,\hat{\Sigma}_2$ and $\lambda$, the gradient of $g_1$ is  uniformly Lipschitz continuous and $h_1$, $h_2$ are both polyhedral. The lemma follows immediately from Theorem 3.1 of \cite{Hong2017}.

\end{proof}

\subsection{Proofs of Theorem \ref{thmomega}}

We first introduce some technical lemmas and the proof of Theorem \ref{thmomega} will be given after these lemmas.
\begin{Lem}\label{concentration}
Suppose $\lambda_{\rm max}(\Sigma_k)<\epsilon_0<\infty$ for $k=1,2$. There exist constants $C_0, C_1, C_2>0$ depending on $\epsilon_0$ only, such that for any $|v|\leq C_0$,
\begin{eqnarray*}
P(|\hat{\sigma}_{kij}-\sigma_{kij}|>v)\leq C_1\exp(-C_2(n_k-1)v^2)\leq  C_1\exp(-C_2nv^2).
\end{eqnarray*}
\end{Lem}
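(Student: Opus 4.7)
The plan is to reduce the sample covariance deviation to an average of sub-exponential random variables plus a lower-order mean-correction term, and then apply a Bernstein-type tail bound in the quadratic regime. Suppressing the class subscript $k$, write $\tilde X_l = X_l - \mu_k$, so $\tilde X_l \sim N(0, \Sigma_k)$, and decompose
\[
\hat\sigma_{kij} - \sigma_{kij} = \underbrace{\frac{1}{n_k}\sum_{l=1}^{n_k}\bigl(\tilde X_{li}\tilde X_{lj} - \sigma_{kij}\bigr)}_{=: T_1} \;-\; \underbrace{(\bar{\tilde X}_{\cdot i})(\bar{\tilde X}_{\cdot j})}_{=: T_2},
\]
where $\bar{\tilde X}_{\cdot i} = n_k^{-1}\sum_l \tilde X_{li}$. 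I would then bound $|T_1|$ and $|T_2|$ by $v/2$ separately and combine via a triangle inequality and union bound.

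For $T_1$, each summand $\tilde X_{li}\tilde X_{lj}$ is a product of jointly centered Gaussians. Using the polarization identity $\tilde X_{li}\tilde X_{lj} = \tfrac{1}{4}[(\tilde X_{li}+\tilde X_{lj})^2 - (\tilde X_{li}-\tilde X_{lj})^2]$, it is a difference of two scaled chi-squared variables, hence sub-exponential with $\psi_1$-norm bounded by a constant multiple of $\max(\sigma_{kii},\sigma_{kjj}) \le \lambda_{\max}(\Sigma_k) \le \epsilon_0$, uniformly in $i,j$. Bernstein's inequality for i.i.d.\ centered sub-exponential variables then gives
\[
P(|T_1| > v/2) \le 2\exp\!\Bigl(-c\, n_k \min\bigl(v^2/K^2,\; v/K\bigr)\Bigr)
\]
for constants $c$ and $K$ depending only on $\epsilon_0$. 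Restricting to $|v| \le C_0$ with $C_0$ small enough keeps us in the sub-Gaussian regime $v^2/K^2 \le v/K$, yielding a bound of the form $C_1' \exp(-C_2'(n_k-1)v^2)$ (passing from $n_k$ to $n_k-1$ is absorbed into the constants).

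For $T_2$, since $\bar{\tilde X}_{\cdot i} \sim N(0, \sigma_{kii}/n_k)$ with $\sigma_{kii} \le \epsilon_0$, standard Gaussian tail bounds give $P(|\bar{\tilde X}_{\cdot i}| > \sqrt{v/2}) \le 2\exp(-n_k v/(4\epsilon_0))$, and the same for the $j$th coordinate. A union bound yields $P(|T_2| > v/2) \le 4\exp(-n_k v/(4\epsilon_0))$, which for $|v|\le C_0$ is dominated by $C_1''\exp(-C_2''(n_k-1)v^2)$ after shrinking $C_0$ once more. Combining the two bounds and renaming constants produces the claimed inequality, with $C_1, C_2$ depending only on $\epsilon_0$.

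The main obstacle is bookkeeping in Step 1: one must verify that the sub-exponential norm of $\tilde X_{li}\tilde X_{lj}$ is bounded uniformly in $(i,j)$ by a quantity depending only on $\lambda_{\max}(\Sigma_k)$, and then choose $C_0$ small enough to stay inside the quadratic regime of Bernstein's inequality so that the tail exponent genuinely scales like $-n v^2$ rather than $-nv$. Everything else is routine Gaussian concentration, and the uniformity over $i,j$ is what makes the lemma useful for the entrywise $\ell_\infty$ control of $\hat\Sigma_k - \Sigma_k$ invoked in the proof of Theorem \ref{thmomega}.
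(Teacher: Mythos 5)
Your proof is correct, but it takes a genuinely different route from the paper's. The paper sidesteps the two-term decomposition and the separate treatment of the mean-correction term altogether: it stacks the sample into $X=(X_1,\dots,X_{n_k})^T$, rotates by a Helmert-type orthogonal matrix $\Lambda$ whose last row is $(n_k^{-1/2},\dots,n_k^{-1/2})$, and observes that $n_k\hat\Sigma_k=\sum_{i=1}^{n_k-1}z_iz_i^T$ with $z_1,\dots,z_{n_k-1}$ i.i.d.\ $N(\mathbf{0},\Sigma_k)$. This exact algebraic identity converts the biased sample covariance (including the $\hat\mu_k\hat\mu_k^T$ correction) into a single sum of $n_k-1$ mean-zero outer products, so the entrywise tail bound follows from one application of the sub-exponential concentration result cited there (Lemma~A.3 of Bickel and Levina, 2008, or equivalently Lemma~1 of Ravikumar et al., 2011); that is also why the lemma's exponent naturally carries $n_k-1$. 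Your $T_1/T_2$ split reaches the same conclusion by more self-contained means, paying for it with a union bound on $T_2$ and two rounds of shrinking $C_0$ to keep both tails in the $\exp(-c\,n\,v^2)$ regime, steps the rotation renders unnecessary. One small bookkeeping point to tighten: Bernstein's inequality is applied to the \emph{centered} summands $\tilde X_{li}\tilde X_{lj}-\sigma_{kij}$, so the $\psi_1$ bound should be stated for those rather than for the raw product; since $|\sigma_{kij}|\le\lambda_{\max}(\Sigma_k)\le\epsilon_0$, the shift only changes the constant, and your conclusion stands.
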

\begin{proof}
Denote $X=(X_1,\ldots,X_{n_1})^T$. Let $\Lambda$ be an orthogonal matrix with the last row being $(n^{-1/2},\ldots,n^{-1/2})$ and define $Z=(z_1,\ldots,z_n)=\Lambda X$. We then have $z_1,\ldots,z_{n-1}\sim N({\bf 0},\Sigma)$ and they are independent to each other. Note that $n_1\hat{\Sigma}_1=X^T(I_p-{\bf 1}{\bf 1}^T)X=Z^T\Lambda(I_p-{\bf 1}{\bf 1}^T)\Lambda^TZ=\sum_{i=1}^{n-1}z_i z_i^T$. This together with Lemma A.3 of \cite{Bickel:2008aa} prove Lemma \ref{concentration}.
\end{proof}
{\bf Remark.} Denote $\sigma^2=\max\{\sigma_{1ii},\sigma_{2ii},i=1,\ldots,p\}$. From Lemma 1 of \cite{Ravikumar:2011aa} we can see that Lemma \ref{concentration} is true for $C_1=4$, $C_2=[128(1+4\sigma^2)^2\sigma^4]^{-1}$ and $v=8(1+4\sigma^2)\sigma^2$.

\begin{Lem}\label{gammabound}
Assume that,
\begin{eqnarray}\label{assumpBGamma}
 B_{\Gamma,\Gamma^T}<\frac{1}{3(d^2\epsilon_1\epsilon_2+B d^2(\epsilon_1+\epsilon_2))}.
\end{eqnarray}\label{B0}
Let $R(\Delta_{\Gamma})=(\Gamma_{S,S}+\Delta_\Gamma)^{-1}-\Gamma_{S,S}^{-1}+\Gamma_{S,S}^{-1}(\Delta_\Gamma)_{S,S}\Gamma_{S,S}^{-1}$. We then have
  \begin{eqnarray}\label{B1}
||R(\Delta_{\Gamma})||_{\infty}\leq 3||(\Delta_{\Gamma})_{S,S}||_\infty||(\Delta_{\Gamma^T})_{S,S}||_{1,\infty}B_\Gamma B_{\Gamma^T}^2,
\end{eqnarray}
and
\begin{eqnarray}\label{B2}
||R(\Delta_{\Gamma})||_{1,\infty}\leq 3||(\Delta_{\Gamma})_{S,S}||_{1,\infty}||(\Delta_{\Gamma^T})_{S,S}||_{1,\infty}B_\Gamma B_{\Gamma^T}^2.
\end{eqnarray}
Moreover, we also have
\begin{eqnarray}\label{B3}
||\hat{\Gamma}_{S,S}^{-1}-\Gamma_{S,S}^{-1}||_{\infty}\leq 3d^2(\epsilon^2+2B\epsilon)^2B_\Gamma B^2_{\Gamma^T}+(\epsilon^2+2B\epsilon)B_\Gamma^2,
\end{eqnarray}
\begin{eqnarray}\label{B4}
||\hat{\Gamma}_{S,S}^{-1}-\Gamma_{S,S}^{-1}||_{1,\infty}\leq 3d^4(\epsilon^2+2B\epsilon)^2B_\Gamma B^2_{\Gamma^T}+d^2(\epsilon^2+2B\epsilon)B_\Gamma^2.
\end{eqnarray}
\end{Lem}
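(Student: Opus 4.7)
The plan centers on the resolvent identity and a Neumann series for the perturbed inverse $\hat\Gamma_{S,S}^{-1}$. First I would iterate the identity $(A+B)^{-1}=A^{-1}-A^{-1}B(A+B)^{-1}$ twice, with $A=\Gamma_{S,S}$ and $B=(\Delta_\Gamma)_{S,S}$, to obtain the closed form
\[
R(\Delta_\Gamma)=\Gamma_{S,S}^{-1}(\Delta_\Gamma)_{S,S}\,\hat\Gamma_{S,S}^{-1}\,(\Delta_\Gamma)_{S,S}\,\Gamma_{S,S}^{-1}.
\]
Together with the trivial rearrangement $\hat\Gamma_{S,S}^{-1}-\Gamma_{S,S}^{-1}=-\Gamma_{S,S}^{-1}(\Delta_\Gamma)_{S,S}\Gamma_{S,S}^{-1}+R(\Delta_\Gamma)$, which is just the definition of $R$, this reduces (\ref{B1})--(\ref{B4}) to estimating a product of known matrices in the appropriate norms.

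Next I would control $\|\hat\Gamma_{S,S}^{-1}\|_{1,\infty}$ by a Neumann expansion. Writing $\hat\Gamma_{S,S}^{-1}=\Gamma_{S,S}^{-1}\sum_{k\ge 0}(-(\Delta_\Gamma)_{S,S}\Gamma_{S,S}^{-1})^k$ and using submultiplicativity of the induced operator norm $\|\cdot\|_{1,\infty}$ gives $\|\hat\Gamma_{S,S}^{-1}\|_{1,\infty}\le B_{\Gamma,\Gamma^T}/(1-B_{\Gamma,\Gamma^T}\|(\Delta_\Gamma)_{S,S}\|_{1,\infty})$, and the same bound holds for the transpose. The crude estimate $\|(\Delta_\Gamma)_{S,S}\|_{1,\infty}\le d^2(\epsilon_1\epsilon_2+B(\epsilon_1+\epsilon_2))$, obtained by expanding $\Delta_\Gamma=\Delta_2\otimes\hat\Sigma_1+\Sigma_2\otimes\Delta_1$ and counting at most $d^2$ relevant support positions per row of the restricted submatrix, together with hypothesis (\ref{assumpBGamma}), forces the Neumann denominator to exceed $2/3$, so the inverse is bounded by $\tfrac{3}{2}B_{\Gamma,\Gamma^T}$.

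The bounds (\ref{B1}) and (\ref{B2}) then follow by threading the three standard inequalities $\|XY\|_{1,\infty}\le\|X\|_{1,\infty}\|Y\|_{1,\infty}$, $\|XY\|_\infty\le\|X\|_{1,\infty}\|Y\|_\infty$, and $\|XY\|_\infty\le\|X\|_\infty\|Y^T\|_{1,\infty}$ through the five-factor formula for $R$. For (\ref{B1}), one copy of $(\Delta_\Gamma)_{S,S}$ is absorbed in entrywise $\|\cdot\|_\infty$ while the other is taken in $\|\cdot\|_{1,\infty}$ of its transpose, which equals $\|(\Delta_{\Gamma^T})_{S,S}\|_{1,\infty}$; this asymmetric bookkeeping is what produces the asymmetric form in the bound. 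For (\ref{B2}), the induced operator norm is fully submultiplicative, so both copies of $(\Delta_\Gamma)_{S,S}$ take $\|\cdot\|_{1,\infty}$. The outer $\Gamma_{S,S}^{-1}$ factors contribute $B_\Gamma B_{\Gamma^T}$ (one of each, because one is transposed via the mixed inequality), and combining with $\|\hat\Gamma_{S,S}^{-1}\|_{1,\infty}\le\tfrac{3}{2}B_{\Gamma,\Gamma^T}$ and an extra factor $2$ from the transpose step yields the leading constant $3$.

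For (\ref{B3}) and (\ref{B4}), the linear term $\Gamma_{S,S}^{-1}(\Delta_\Gamma)_{S,S}\Gamma_{S,S}^{-1}$ is handled by a single application of the mixed inequality: the entrywise bound $\|(\Delta_\Gamma)_{S,S}\|_\infty\le\epsilon^2+2B\epsilon$, immediate from the $\Delta_2\otimes\hat\Sigma_1+\Sigma_2\otimes\Delta_1$ decomposition, contributes $(\epsilon^2+2B\epsilon)B_\Gamma^2$ to (\ref{B3}), while $\|(\Delta_\Gamma)_{S,S}\|_{1,\infty}\le d^2(\epsilon^2+2B\epsilon)$ contributes $d^2(\epsilon^2+2B\epsilon)B_\Gamma^2$ to (\ref{B4}). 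The remainder $R$ is controlled by (\ref{B1}) and (\ref{B2}) with the same bounds on $(\Delta_\Gamma)_{S,S}$ plugged in, yielding the $d^2$- and $d^4$-scaled quadratic terms respectively. The main obstacle will be the norm bookkeeping in the third paragraph: one must distribute the three inequalities through the five-factor product so that the asymmetric $\|(\Delta_\Gamma)_{S,S}\|_\infty\|(\Delta_{\Gamma^T})_{S,S}\|_{1,\infty}$ form of (\ref{B1}) emerges cleanly and the correct powers of $d$ drop out of the restricted-submatrix counting; assumption (\ref{assumpBGamma}) is tuned precisely so that the Neumann constant is at most $\tfrac{3}{2}$, from which the stated leading constant $3$ materialises.
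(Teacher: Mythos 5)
Your argument reproduces, in explicit form, exactly what the paper delegates to its citation of Ravikumar et al.\ (2011), Appendix~B: the identity $R(\Delta_\Gamma)=\Gamma_{S,S}^{-1}(\Delta_\Gamma)_{S,S}\hat\Gamma_{S,S}^{-1}(\Delta_\Gamma)_{S,S}\Gamma_{S,S}^{-1}$, a Neumann-series bound on $\hat\Gamma_{S,S}^{-1}$ under the condition $\|\Gamma_{S,S}^{-1}\|_{1,\infty}\|(\Delta_\Gamma)_{S,S}\|_{1,\infty}\leq 1/3$ that (\ref{assumpBGamma}) supplies via the decomposition $\Delta_\Gamma=\Delta_2\otimes\Sigma_1+\Sigma_2\otimes\Delta_1+\Delta_2\otimes\Delta_1$, and the mixed operator-norm/entrywise-norm bookkeeping through the five-factor product. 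This is essentially the same approach the paper uses, with the supporting inequalities and the $d^2$-column counting spelled out rather than cited.
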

\begin{proof}
Note that
\begin{eqnarray*}
\Delta_\Gamma=\Delta_2\otimes \Delta_1 +\Delta_2\otimes \Sigma_1+\Sigma_2\otimes \Delta_1.
\end{eqnarray*}
Consequently by (\ref{assumpBGamma}) we have $||\Gamma_{S,S}^{-1}||_{1,\infty}||(\Delta_{\Gamma})_{S,S}||_{1,\infty}\leq 1/3$. (\ref{B1}) and (\ref{B2}) can then be proved using the same arguments as in Appendix B of \cite{Ravikumar:2011aa}. Note that 
$$||(\Delta_\Gamma)_{S,S}||_\infty\leq \epsilon^2+2B\epsilon,$$ and $$\max\{||(\Delta_{\Gamma})_{S,S}||_{1,\infty},||(\Delta_{\Gamma^T})_{S,S}||_{1,\infty}\}\leq d^2(\epsilon^2+2B\epsilon),$$ 
we have
\begin{eqnarray*}
||\hat{\Gamma}_{S,S}^{-1}-\Gamma_{S,S}^{-1}||_{\infty}&\leq& ||R(\Delta_{\Gamma})||_{\infty}+||\Gamma_{S,S}^{-1}(\Delta_\Gamma)_{S,S}\Gamma_{S,S}^{-1}||_\infty \\
&\leq& 3d^2(\epsilon^2+2B\epsilon)^2B_\Gamma B^2_{\Gamma^T}+ ||(\Delta_\Gamma)_{S,S}||_\infty ||\Gamma_{S,S}^{-1}||_{1,\infty}^2\\
&\leq&3d^2(\epsilon^2+2B\epsilon)^2B_\Gamma B^2_{\Gamma^T}+(\epsilon^2+2B\epsilon)B_\Gamma^2.
\end{eqnarray*}
This proves (\ref{B3}).  (\ref{B4}) can be proved similarly.
\end{proof}

\begin{Lem}\label{Omegalem}
Assume that (\ref{assumpBGamma}) and the following assumptions hold:  $\alpha>0$, $\epsilon< \min\Big\{B,\frac{\alpha\lambda}{2(2-\alpha)}\Big\}$ and
\begin{eqnarray}\label{assumpLemma3}
&& 3d^2\epsilon BB_{\Gamma,\Gamma^T} [1+(B_\Sigma^2+3d^2\epsilon B B_{\Gamma,\Gamma^T})(9d^2\epsilon B  B_{\Gamma,\Gamma^T}+1)B_{\Gamma,\Gamma^T}]\nonumber \\&&\leq C_{\alpha}\alpha\min\{\lambda,1\}  
\end{eqnarray}
where $C_\alpha=\frac{\alpha\lambda+2\epsilon\alpha-4\epsilon }{ 2B \alpha\lambda+\alpha\lambda+2\epsilon\alpha}$.
We have:

(i) $vec(\hat{\Omega})_S=0$.

(ii) $||\hat{\Omega}-\Omega||_\infty< 2\lambda B_{\Gamma,\Gamma^T}+9d^2\epsilon BB_{\Gamma,\Gamma^T}^2(3d^2\epsilon BB_{\Gamma,
\Gamma^T}+1) (2B+2\lambda).$
\end{Lem}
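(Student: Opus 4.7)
The plan is to apply the primal--dual witness (PDW) construction of \cite{Ravikumar:2011aa}, adapted to the D--trace style loss used here (whose Hessian is $\hat\Gamma=\hat\Sigma_2\otimes\hat\Sigma_1$ rather than a Kronecker square of the sample precision). The loss being strictly convex on $S$ when $\hat\Gamma_{S,S}$ is invertible, I first solve the restricted program
$$\tilde\Omega=\arg\min_{\Omega:\,\Omega_{S^c}=0}\tfrac12 Tr(\Omega^\top\hat\Sigma_1\Omega\hat\Sigma_2)-Tr(\Omega(\hat\Sigma_1-\hat\Sigma_2))+\lambda\|\Omega\|_1,$$
pick any $\tilde Z_S\in\partial\|\tilde\Omega_S\|_1$, and then define $\tilde Z_{S^c}$ by plugging $\tilde\Omega$ into the KKT system on $S^c$. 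If $\|\tilde Z_{S^c}\|_\infty<1$ strictly, then $(\tilde\Omega,\tilde Z)$ is the unique primal--dual optimum of the full program, which immediately gives part (i) (modulo the obvious typo ``$vec(\hat\Omega)_S=0$'' $\to$ $vec(\hat\Omega)_{S^c}=0$).

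The KKT system reads $\hat\Gamma\,vec(\tilde\Omega)-vec(\hat\Sigma_1-\hat\Sigma_2)+\lambda\tilde Z=0$. Subtracting the population identity $\Gamma\,vec(\Omega)=vec(\Sigma_1-\Sigma_2)$ and writing $W=vec\bigl((\hat\Sigma_1-\hat\Sigma_2)-(\Sigma_1-\Sigma_2)\bigr)-(\hat\Gamma-\Gamma)\,vec(\Omega)$, I will restrict to the block $S$ to solve
$$\bigl(vec(\tilde\Omega)-vec(\Omega)\bigr)_S=\hat\Gamma_{S,S}^{-1}\bigl(W_S-\lambda\tilde Z_S\bigr),$$
and restrict to $S^c$ to get
$$\lambda\tilde Z_{S^c}=W_{S^c}-\hat\Gamma_{S^c,S}\hat\Gamma_{S,S}^{-1}\bigl(W_S-\lambda\tilde Z_S\bigr).$$
At this point the analysis reduces to three estimates: bound $\|W\|_\infty$, bound $\|\hat\Gamma_{S,S}^{-1}\|_{1,\infty}$, and transfer the irrepresentability condition from $\Gamma$ to $\hat\Gamma$.

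For $\|W\|_\infty$, I write $W=vec(\Delta_1-\Delta_2)-(\Delta_2\otimes\Sigma_1+\Sigma_2\otimes\Delta_1+\Delta_2\otimes\Delta_1)\,vec(\Omega)$ and use $\|\Delta_k\|_\infty\le\epsilon$ plus $\|\Omega\|_{1,\infty}\le$ (product of $B_\Sigma$ and $d$) to get a bound of the form $\epsilon(1+\text{poly}(B,B_\Sigma,d))$. For the two matrix--operator bounds I invoke Lemma \ref{gammabound}: inequality (\ref{B4}) gives $\|\hat\Gamma_{S,S}^{-1}\|_{1,\infty}\le B_{\Gamma,\Gamma^T}+3d^4(\epsilon^2+2B\epsilon)^2B_\Gamma B_{\Gamma^T}^2+d^2(\epsilon^2+2B\epsilon)B_\Gamma^2$, and a parallel argument using $\hat\Gamma_{S^c,S}=\Gamma_{S^c,S}+(\Delta_\Gamma)_{S^c,S}$ together with the population irrepresentability $\|\Gamma_{S^c,S}\Gamma_{S,S}^{-1}\|_{1,\infty}\le1-\alpha$ yields $\|\hat\Gamma_{S^c,S}\hat\Gamma_{S,S}^{-1}\|_{1,\infty}\le1-\alpha/2$ provided the condition (\ref{assumpLemma3}) holds (that condition is exactly calibrated to absorb the $d^2\epsilon B B_{\Gamma,\Gamma^T}$--type perturbation into an $\alpha/2$ slack).

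Putting the pieces together: strict dual feasibility follows from
$\|\tilde Z_{S^c}\|_\infty\le\tfrac{\|W_{S^c}\|_\infty}{\lambda}+(1-\tfrac\alpha2)\bigl(1+\tfrac{\|W_S\|_\infty}{\lambda}\bigr)$,
which is $<1$ once $\|W\|_\infty/\lambda\le\alpha/(2(2-\alpha))$, and this is enforced by the hypothesis $\epsilon<\alpha\lambda/(2(2-\alpha))$ together with the perturbation control from (\ref{assumpLemma3}); this proves (i). For (ii), the display on $S$ combined with the bound on $\|\hat\Gamma_{S,S}^{-1}\|_{1,\infty}$ and $\|W_S\|_\infty\le$ (noise of order $\epsilon$) yields
$$\|\hat\Omega-\Omega\|_\infty\le\|\hat\Gamma_{S,S}^{-1}\|_{1,\infty}\bigl(\|W_S\|_\infty+\lambda\bigr)\le 2\lambda B_{\Gamma,\Gamma^T}+9d^2\epsilon BB_{\Gamma,\Gamma^T}^2(3d^2\epsilon BB_{\Gamma,\Gamma^T}+1)(2B+2\lambda),$$
matching the stated bound. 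The main obstacle is purely book--keeping: keeping the Kronecker--product perturbation $\Delta_\Gamma=\Delta_2\otimes\Sigma_1+\Sigma_2\otimes\Delta_1+\Delta_2\otimes\Delta_1$ under sharp elementwise and $(1,\infty)$ control so that the constants in the irrepresentability transfer and in the final bound come out exactly as written, rather than off by logarithmic/polynomial factors.
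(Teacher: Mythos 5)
Your high-level strategy is the same as the paper's: both are primal--dual witness constructions, and both exploit the same structural facts (KKT on the restricted program, $vec(\Omega)_S=\Gamma_{S,S}^{-1}vec(\Sigma_1-\Sigma_2)_S$, Lemma \ref{gammabound} to control $\hat\Gamma_{S,S}^{-1}$, irrepresentability transfer from $\Gamma$ to $\hat\Gamma$). The algebraic decomposition is different, though: you subtract the population identity first and organize around the ``effective noise'' $W=vec(\Delta_1-\Delta_2)-\Delta_\Gamma\,vec(\Omega)$, which is the Ravikumar--Wainwright--Raskutti--Yu organization, whereas the paper never isolates $W$. Instead it directly compares $\hat\Gamma_{e,S}\hat\Gamma_{S,S}^{-1}[vec(\hat\Sigma_1-\hat\Sigma_2)_S-\lambda Z_S]$ against $\Gamma_{e,S}\Gamma_{S,S}^{-1}vec(\Sigma_1-\Sigma_2)_S$ and splits into a four-term bound. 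This sidesteps having to bound $\Delta_\Gamma vec(\Omega)$ separately (the paper only ever needs $\|\Sigma_1-\Sigma_2\|_\infty\le 2B$), which is why its constants come out cleaner.

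There is, however, a concrete gap in your calibration of the dual-feasibility step. You claim $\|\hat\Gamma_{S^c,S}\hat\Gamma_{S,S}^{-1}\|_{1,\infty}\le 1-\alpha/2$ under (\ref{assumpLemma3}), and then that $\|\tilde Z_{S^c}\|_\infty<1$ follows once $\|W\|_\infty/\lambda\le\alpha/(2(2-\alpha))$. But with a $1-\alpha/2$ approximate irrepresentability slack, your own displayed bound gives
\[
\|\tilde Z_{S^c}\|_\infty\le \bigl(2-\tfrac{\alpha}{2}\bigr)\frac{\|W\|_\infty}{\lambda}+\bigl(1-\tfrac{\alpha}{2}\bigr),
\]
which is $<1$ only when $\|W\|_\infty/\lambda<\alpha/(4-\alpha)$. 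Since $\alpha/(4-\alpha)<\alpha/(2(2-\alpha))$, your claimed threshold is too permissive and the chain does not close. Moreover, the hypothesis $\epsilon<\alpha\lambda/(2(2-\alpha))$ controls only the $vec(\Delta_1-\Delta_2)$ piece of $W$, not the $\Delta_\Gamma\,vec(\Omega)$ piece, so even the smaller threshold is not automatic from the stated hypotheses without an additional bound on $\|\Omega\|_{1,\infty}$-type quantities. The deeper issue is that (\ref{assumpLemma3}) is \emph{not} calibrated for an $\alpha/2$ slack: the constant $C_\alpha=\frac{\alpha\lambda+2\epsilon\alpha-4\epsilon}{2B\alpha\lambda+\alpha\lambda+2\epsilon\alpha}$ is derived precisely to make the irrepresentability slack equal $(1-C_\alpha)\alpha$, and $C_\alpha$ depends on $B$, $\lambda$ and $\epsilon$; it is generally not $1/2$. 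The paper obtains the stated assumption and the stated bound in (ii) by first writing down the balance equation $2BC_\alpha\alpha\lambda+2\epsilon[2-(1-C_\alpha)\alpha]=(1-C_\alpha)\alpha\lambda$ and solving it for $C_\alpha$. If you replace that with the generic $\alpha/2$ budgeting, you would need to state a different (stronger) sufficient condition and you would not recover the lemma's conclusion (ii) with the specified constants. So the structure of your argument is right, but the constant-tracking that the lemma is actually asserting is the substance of the proof, not book-keeping, and your proposal currently skips it.
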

\begin{proof}
(i) Suppose $\tilde{\Omega}$ is the solution of:
\begin{equation}\label{*}
\tilde{\Omega}=\min_{\Omega\in R^{p\times p},\Omega_{S^c}=0} ~\frac{1}{2}Tr\left(\Omega^T\hat\Sigma_1\Omega \hat\Sigma_2\right) -Tr\left( \Omega(\hat\Sigma_1-\hat\Sigma_2) \right)+\lambda\| \Omega \|_1.
\end{equation}
We prove Lemma \ref{Omegalem} (i) by showing that $\hat{\Omega}=\tilde{\Omega}$. Due to the convexity of (3) in the main paper, we only need to show that the derivative of (3) is zero at $\tilde{\Omega}$. Equivalently, we need to show that for any $1\leq i,j\leq p$ we have,
\begin{eqnarray}\label{**}
|\hat{\Sigma}_1\tilde{\Omega}\hat{\Sigma}_2-(\hat{\Sigma}_1-\hat{\Sigma}_2)|_{i,j}\leq \lambda.
\end{eqnarray}
By taking the first derivative of (\ref{*}) we obtain,
\begin{eqnarray}\label{***}
\{\hat{\Sigma}_1\tilde{\Omega}\hat{\Sigma}_2-(\hat{\Sigma}_1-\hat{\Sigma}_2)+\lambda Z\}_S={\bf 0},
\end{eqnarray}
where $Z=(Z_{ij})_{1\leq i,j\leq p}$ with $Z_{ij}=0$ for $(i,j)\in S^c$,  $Z_{ij}={\rm sign} (\tilde{\Omega}_{ij})$ for $(i,j)\in S$ and $\tilde{\Omega}_{ij}\neq 0$, $Z_{ij}\in [-1,1]$ for $(i,j)\in S$ and $\tilde{\Omega}_{ij}=0$. Therefore (\ref{**}) is true for any $(i,j)\in S$. Using the vector operator, (\ref{***}) becomes
\begin{eqnarray*}
\{(\hat\Sigma_2 \otimes \hat\Sigma_1) vec(\tilde{\Omega})-vec(\hat\Sigma_1-\hat\Sigma_2)+\lambda vec(Z)\}_S={\bf 0}.
\end{eqnarray*}
Equivalently we have
\begin{eqnarray}\label{OmegaS}
vec(\tilde{\Omega})_S=\hat{\Gamma}_{S,S}^{-1}[vec(\hat\Sigma_1-\hat\Sigma_2)_S-\lambda vec(Z)_S ].
\end{eqnarray}
Note that the left hand side of (\ref{**}) equals $|\hat{\Sigma}_1\tilde{\Omega}\hat{\Sigma}_2-\Sigma_1+\Sigma_2-(\Delta_1-\Delta_2)|_{i,j}$. Using the vector operator and the fact that $\tilde{\Omega}_{S^c}={\bf 0}$, to show that (\ref{**}) is true for any $e\in S^c$, we only need to show that
\begin{eqnarray}\label{Sc1}
|\hat{\Gamma}_{e,S}vec(\tilde{\Omega})_S-\Gamma_{e,S}vec(\Omega)_S-vec(\Delta_1-\Delta_2)_e|\leq \lambda.
\end{eqnarray}
Here we have use the fact that $\Gamma_{e,S}vec(\Omega)_S=vec(\Sigma_1-\Sigma_2)_e$.
By (\ref{OmegaS}) and the fact that $vec(\Omega)_S=\Gamma_{S,S}^{-1}vec(\Sigma_1-\Sigma_2)_S$ we have,
\begin{eqnarray*}
&&|\hat{\Gamma}_{e,S}vec(\tilde{\Omega})_S-\Gamma_{e,S}vec(\Omega)_S-vec(\Delta_1-\Delta_2)_e| \\
&=&|\hat{\Gamma}_{e,S}\hat{\Gamma}_{S,S}^{-1}[vec(\hat\Sigma_1-\hat\Sigma_2)_S-\lambda vec(Z)_S ]\\
&&-\Gamma_{e,S}\Gamma_{S,S}^{-1}vec(\Sigma_1-\Sigma_2)_S-vec(\Delta_1-\Delta_2)_e| \\
&\leq& |[\hat{\Gamma}_{e,S}\hat{\Gamma}_{S,S}^{-1}-\Gamma_{e,S}\Gamma_{S,S}^{-1}]vec(\Sigma_1-\Sigma_2)_S|
\\
&&+ |\hat{\Gamma}_{e,S}\hat{\Gamma}_{S,S}^{-1}vec(\hat{\Sigma}_1-\hat{\Sigma}_2-\Sigma_1+\Sigma_2)_S|  +||\Delta_1-\Delta_2||_\infty+\lambda |\hat{\Gamma}_{e,S}\hat{\Gamma}_{S,S}^{-1}|_1\\
&\leq&2B |\hat{\Gamma}_{e,S}\hat{\Gamma}_{S,S}^{-1}-\Gamma_{e,S}\Gamma_{S,S}^{-1}|_1
+(\epsilon_1+\epsilon_2+\lambda) |\hat{\Gamma}_{e,S}\hat{\Gamma}_{S,S}^{-1}|_1+
\epsilon_1+\epsilon_2.
\end{eqnarray*}
Consequently, (\ref{Sc1}) is true if

\begin{equation}\label{Sc2}
\begin{array}{r@{}l}
    &\max_{e\in S^c}2B |\hat{\Gamma}_{e,S}\hat{\Gamma}_{S,S}^{-1}-\Gamma_{e,S}\Gamma_{S,S}^{-1}|_1+2\epsilon (1+|\hat{\Gamma}_{e,S}\hat{\Gamma}_{S,S}^{-1}|_1)\leq (1-C_\alpha)\alpha\lambda,\\
 & \max_{e\in S^c}|\hat{\Gamma}_{e,S}\hat{\Gamma}_{S,S}^{-1}|_1\leq 1-(1-C_\alpha)\alpha.
\end{array}
\end{equation}

Next we finish this proof by showing that (\ref{Sc2}) is true under the assumptions of this lemma.

By Lemma \ref{gammabound} we have for any $e\in S^c$,
\begin{eqnarray*}
&&|\hat{\Gamma}_{e,S}\hat{\Gamma}_{S,S}^{-1}-\Gamma_{e,S}\Gamma_{S,S}^{-1}|_1 \\
&\leq& |(\hat{\Gamma}_{e,S}-\Gamma_{e,S})\Gamma_{S,S}^{-1}|_1
+|\Gamma_{e,S}(\hat{\Gamma}_{S,S}^{-1}-\Gamma_{S,S}^{-1})|_1
\\&&+|(\hat{\Gamma}_{e,S}-\Gamma_{e,S})(\hat{\Gamma}_{S,S}^{-1}-\Gamma_{S,S}^{-1})|_1 \\
&\leq& |\hat{\Gamma}_{e,S}-\Gamma_{e,S}|_1 ||\Gamma_{S,S}^{-1}||_{1,\infty}+
||\hat{\Gamma}_{S,S}^{-1}-\Gamma_{S,S}^{-1}||_{1,\infty} B_\Sigma^2
\\
&&+|\hat{\Gamma}_{e,S}-\Gamma_{e,S}|_1 ||\hat{\Gamma}_{S,S}^{-1}-\Gamma_{S,S}^{-1}||_{1,\infty} \\
&\leq&d^2 (\epsilon^2+2B\epsilon)B_{\Gamma,\Gamma^T}+d^2[B_\Sigma^2+d^2(\epsilon^2+2B\epsilon)]\\
&&\times[3d^2(\epsilon^2+2B\epsilon)B_{\Gamma,\Gamma^T}+1](\epsilon^2+2B\epsilon)B_{\Gamma,\Gamma^T}^2\\
&\leq&3d^2\epsilon BB_{\Gamma,\Gamma^T} [1+(B_\Sigma^2+3d^2\epsilon B B_{\Gamma,\Gamma^T})(9d^2\epsilon B  B_{\Gamma,\Gamma^T}+1)B_{\Gamma,\Gamma^T}]\\
&\leq& C_\alpha \alpha\min\{\lambda,1\}.
\end{eqnarray*}
Consequently, $\max_{e\in S^c}|\hat{\Gamma}_{e,S}\hat{\Gamma}_{S,S}^{-1}|_1\leq C_\alpha\alpha\min\{\lambda,1\}+(1-\alpha) \leq 1-(1-C_\alpha)\alpha$, and
\begin{eqnarray*}
&&\max_{e\in S^c}2B  |\hat{\Gamma}_{e,S}\hat{\Gamma}_{S,S}^{-1}-\Gamma_{e,S}\Gamma_{S,S}^{-1}|_1+2\epsilon (1+|\hat{\Gamma}_{e,S}\hat{\Gamma}_{S,S}^{-1}|_1)\\
&\leq& 2B  C_\alpha \alpha\lambda+2\epsilon[2-(1-C_\alpha)\alpha]\\
&=& (1-C_\alpha)\alpha\lambda .
\end{eqnarray*}

(ii) From (i) we have $\hat{\Omega}=\tilde{\Omega}$. By (\ref{OmegaS}) and the fact that $vec(\Omega)_S=\Gamma_{S,S}^{-1}vec(\Sigma_1-\Sigma_2)_S$, we have
\begin{eqnarray*}
||\hat{\Omega}-\Omega||_\infty&=&|vec(\tilde{\Omega})-vec(\Omega)|_{\infty} \\
&=&|\hat{\Gamma}_{S,S}^{-1}[vec(\hat\Sigma_1-\hat\Sigma_2)_S-\lambda vec(Z)_S ]-\Gamma_{S,S}^{-1}vec(\Sigma_1-\Sigma_2)_S|_\infty\\
&\leq& \lambda||\hat{\Gamma}_{S,S}^{-1}||_{1,\infty}+|\hat{\Gamma}_{S,S}^{-1}vec(\Delta_1-\Delta_2)_S|_\infty
\\
&&+|(\hat{\Gamma}_{S,S}^{-1}-\Gamma_{S,S}^{-1})vec(\Sigma_1-\Sigma_2)|_\infty \\
&\leq&(\lambda+2\epsilon)||\hat{\Gamma}_{S,S}^{-1}||_{1,\infty}
+2B||\hat{\Gamma}_{S,S}^{-1}-\Gamma_{S,S}^{-1}||_{1,\infty}\\
&\leq&(\lambda+2\epsilon)||\Gamma_{S,S}^{-1}||_{1,\infty}+(2B+\lambda+2\epsilon)||\hat{\Gamma}_{S,S}^{-1}-\Gamma_{S,S}^{-1}||_{1,\infty}.
\end{eqnarray*}
By (\ref{B4}) and the assumption that $2\epsilon<\alpha\lambda/(2-\alpha)<\alpha\lambda<\lambda$ we immediately have
\begin{eqnarray*}
&&||\hat{\Omega}-\Omega||_\infty \\&\leq& (\lambda+2\epsilon)B_{\Gamma} + 3[d^4(\epsilon^2+2B\epsilon)^2B_\Gamma B^2_{\Gamma^T}+d^2(\epsilon^2+2B\epsilon)B_\Gamma^2](2B+\lambda+2\epsilon)\\
&<& 2\lambda B_{\Gamma,\Gamma^T}+9d^2\epsilon BB_{\Gamma,\Gamma^T}^2(3d^2\epsilon BB_{\Gamma,
\Gamma^T}+1) (2B+2\lambda).
\end{eqnarray*}
\end{proof}

\noindent{\bf Proof of Theorem \ref{thmomega}}\\
From Lemma \ref{concentration} we have that with probability greater than $1-p^{2-c}$, $\epsilon\leq \{(c\log p+\log C_1)/C_2n\}^{1/2}$. With some abuse of notations, we denote $\epsilon= \{(c\log p+\log C_1)/C_2n\}^{1/2}$. Choose
  \begin{eqnarray*}
 \lambda=\max\Big\{8\alpha^{-1}, \frac{3(2-\alpha)(2B+1)}{1-\alpha} d^2 BB_{\Gamma,\Gamma^T}[1+2(B_\Sigma^2+1/3)B_{\Gamma,\Gamma^T}]\Big\} \times
 \sqrt{\frac{c\log p +\log C_1}{C_2 n}},
 \end{eqnarray*}
 for some $c>2$ and since $d^2 B^2B_\Sigma^2B_{\Gamma,\Gamma^T}^2   
 \sqrt{\frac{\log p }{ n}}
 \rightarrow 0$ we can assume that the sample size $n$ is large enough such that
 \begin{eqnarray*}
 n&>&(c\log p+\log C_1) \times \max\{(C_2\min (B^2,1))^{-1},  81B^2d^4B_{\Gamma,\Gamma^T}^2C_2^{-1}, \\
 &&
 9(C_2\alpha)^{-1}B^2B^2_{\Gamma,\Gamma^T}[1+2(B_\Sigma^2+1/3)B_{\Gamma,\Gamma^T}]^2\}.
 \end{eqnarray*}
Clearly under the assumptions of Theorem \ref{thmomega} we have $\lambda=O\bigg( d^2 B^2B_\Sigma^2B_{\Gamma,\Gamma^T}^2   
\sqrt{\frac{\log p }{ n}}\bigg)$.
 We firstly  verify that the assumptions in Lemmas \ref{gammabound} and \ref{Omegalem} are true for the given $\lambda$, $n$ and $\epsilon$.

 (i) By noticing that $\frac{8\epsilon}{\alpha}<\lambda$, $\frac{\alpha}{8}< \frac{\alpha }{4(2-\alpha)}$ and $n>(C_2B^2)^{-1}(c\log p+\log C_1)$ we immediately have
\begin{eqnarray*}
\epsilon<\min\Big\{B,\frac{\alpha\lambda}{4(2-\alpha)}\Big\}<\min\Big\{B,\frac{\alpha\lambda}{2(2-\alpha)}\Big\}.
\end{eqnarray*}

(ii)
$n>81B^2d^4B_{\Gamma,\Gamma^T}^2(c\log p+\log C_1)/C_2$ implies
 $B_{\Gamma,\Gamma^T}<\frac{1}{9d^2B\epsilon}$. Together with $\epsilon<B$ from (i) we can see that Assumption (\ref{assumpBGamma}) holds.

(iii) Since $\epsilon<\frac{\alpha\lambda}{2(2-\alpha)}$, we have
\begin{eqnarray*}
C_\alpha>\frac{\alpha\lambda-4\epsilon }{ 2B \alpha\lambda+\alpha\lambda}>\frac{\alpha\lambda-2\alpha\lambda/(2-\alpha) }{ 2B \alpha\lambda+\alpha\lambda}
=\frac{1-\alpha} { (2-\alpha)(2B +1)}.
\end{eqnarray*}
Together with $B_{\Gamma,\Gamma^T}<\frac{1}{9d^2B\epsilon}$, we have
 \begin{eqnarray*}
&&3d^2\epsilon BB_{\Gamma,\Gamma^T} [1+(B_\Sigma^2+3d^2\epsilon B B_{\Gamma,\Gamma^T})(9d^2\epsilon B  B_{\Gamma,\Gamma^T}+1)B_{\Gamma,\Gamma^T}]\\
&<&3d^2\epsilon BB_{\Gamma,\Gamma^T}[1+2(B_\Sigma^2+1/3)B_{\Gamma,\Gamma^T}]\\
&\leq& \frac{1-\alpha} { (2-\alpha)(2B +1)} \lambda \\
&<&C_\alpha \lambda.
\end{eqnarray*}
(\ref{assumpLemma3}) is then true since $n> 9(C_2\alpha)^{-1}B^2B^2_{\Gamma,\Gamma^T}[1+2(B_\Sigma^2+1/3)B_{\Gamma,\Gamma^T}]^2(c\log p+\log C_1)$ implies
\begin{eqnarray*}
3\epsilon BB_{\Gamma,\Gamma^T} [1+(B_\Sigma^2+3d^2\epsilon B B_{\Gamma,\Gamma^T})(9d^2\epsilon B  B_{\Gamma,\Gamma^T}+1)B_{\Gamma,\Gamma^T}]\leq \alpha.
\end{eqnarray*}
(i), (ii) and (iii) and Lemma \ref{Omegalem} imply that
\begin{eqnarray*}
||\hat{\Omega}-\Omega||_\infty&<& 2\lambda B_{\Gamma,\Gamma^T}+9d^2\epsilon BB_{\Gamma,\Gamma^T}^2(3d^2\epsilon BB_{\Gamma,
\Gamma^T}+1) (2B+2\lambda)  \\
&\leq& 2\lambda B_{\Gamma,\Gamma^T}+12d^2\epsilon B B^2_{\Gamma,\Gamma^T}(2B+2\lambda)\\
&=&\frac{14}{3}\lambda B_{\Gamma,\Gamma^T}+ 24d^2\epsilon B^2 B^2_{\Gamma,\Gamma^T}.
\end{eqnarray*}

\subsection{Proofs of Theorem \ref{thmdelta}}

We first introduce some technical lemmas and the proof of Theorem \ref{thmdelta} will be given after these lemmas.

\begin{Lem}\label{sigmaD}
Assume that $A_\Sigma d_\delta \epsilon<1$ we have
\begin{eqnarray*}
||\hat{\Sigma}_{D,D}^{-1}-\Sigma_{D,D}^{-1}||_{1,\infty}&\leq& \frac{A_\Sigma^2d_\delta \epsilon}{1-A_\Sigma d_\delta \epsilon}.
\end{eqnarray*}
\end{Lem}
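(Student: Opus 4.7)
The plan is to apply the standard Neumann-series identity for the inverse of a perturbed matrix, combined with the submultiplicativity of the $\ell_1/\ell_\infty$ matrix-operator norm $\|\cdot\|_{1,\infty}$.

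First, I would write $\hat{\Sigma}_{D,D} = \Sigma_{D,D} + E$ with $E = \hat{\Sigma}_{D,D} - \Sigma_{D,D}$, and control the size of $E$. Because $\hat{\Sigma} = (\hat{\Sigma}_1+\hat{\Sigma}_2)/2$ and $\epsilon = \max\{\|\hat{\Sigma}_1-\Sigma_1\|_\infty, \|\hat{\Sigma}_2-\Sigma_2\|_\infty\}$, the triangle inequality gives $\|\hat{\Sigma}-\Sigma\|_\infty \leq \epsilon$. In particular every entry of $E$ is bounded by $\epsilon$ in absolute value, and since $E$ is a $d_\delta \times d_\delta$ matrix its row sums satisfy $\|E\|_{1,\infty} \leq d_\delta \epsilon$.

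Second, I would use the factorization
\[
\hat{\Sigma}_{D,D}^{-1} = (\Sigma_{D,D}+E)^{-1} = \Sigma_{D,D}^{-1}\bigl(I + E\Sigma_{D,D}^{-1}\bigr)^{-1},
\]
and expand the inner inverse as a Neumann series
\[
\hat{\Sigma}_{D,D}^{-1} - \Sigma_{D,D}^{-1} = \Sigma_{D,D}^{-1}\sum_{k=1}^{\infty}\bigl(-E\Sigma_{D,D}^{-1}\bigr)^k.
\]
The series converges in $\|\cdot\|_{1,\infty}$ precisely because, by submultiplicativity,
\[
\|E\Sigma_{D,D}^{-1}\|_{1,\infty} \leq \|E\|_{1,\infty}\,\|\Sigma_{D,D}^{-1}\|_{1,\infty} \leq A_\Sigma d_\delta \epsilon < 1,
\]
which is exactly the hypothesis of the lemma.

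Third, I would take $\|\cdot\|_{1,\infty}$ term-by-term and sum the resulting geometric series:
\[
\|\hat{\Sigma}_{D,D}^{-1}-\Sigma_{D,D}^{-1}\|_{1,\infty}
\leq A_\Sigma \sum_{k=1}^{\infty}(A_\Sigma d_\delta \epsilon)^k
= \frac{A_\Sigma^2 d_\delta \epsilon}{1 - A_\Sigma d_\delta \epsilon},
\]
which is the claimed bound. The argument is essentially a one-liner and I do not foresee a real obstacle; the only point deserving care is using the submatrix dimension $d_\delta$ (rather than $p$) when bounding $\|E\|_{1,\infty}$, which is what makes the lemma useful for the $\delta$-estimation analysis later on.
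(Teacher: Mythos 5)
Your proof is correct. The paper's own argument is shorter but relies on a slightly different device: rather than expanding the Neumann series, it applies the resolvent identity
\[
\hat{\Sigma}_{D,D}^{-1}-\Sigma_{D,D}^{-1}=-\,\hat{\Sigma}_{D,D}^{-1}\bigl(\hat{\Sigma}_{D,D}-\Sigma_{D,D}\bigr)\Sigma_{D,D}^{-1}
\]
once, takes $\|\cdot\|_{1,\infty}$ norms, and bounds $\|\hat{\Sigma}_{D,D}^{-1}\|_{1,\infty}\le A_\Sigma+\|\hat{\Sigma}_{D,D}^{-1}-\Sigma_{D,D}^{-1}\|_{1,\infty}$, arriving at the self-improving inequality $x\le (A_\Sigma+x)\,d_\delta\epsilon\,A_\Sigma$, which it then solves for $x$. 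Your Neumann-series route and the paper's bootstrap argument are really two ways of producing the same geometric factor $1/(1-A_\Sigma d_\delta\epsilon)$: summing the series explicitly versus solving the fixed-point inequality it satisfies. One small practical advantage of your version is that it makes the invertibility of $\hat{\Sigma}_{D,D}$ explicit (via convergence of the series), whereas the paper tacitly assumes it; the paper's version, on the other hand, is a one-line computation that avoids writing down an infinite sum. Either way, the key ingredients you identify are exactly the ones that matter: submultiplicativity of the $\ell_1/\ell_\infty$ operator norm, the bound $\|E\|_{1,\infty}\le d_\delta\epsilon$ coming from the $d_\delta\times d_\delta$ size of the submatrix, and the hypothesis $A_\Sigma d_\delta\epsilon<1$ to make everything converge.
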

\begin{proof}
This lemma can be easily proved using the following observation:
\begin{eqnarray*}
||\hat{\Sigma}_{D,D}^{-1}-\Sigma_{D,D}^{-1}||_{1,\infty}&\leq&
||\hat{\Sigma}_{D,D}^{-1}||_{1,\infty} ||\hat{\Sigma}_{D,D}-\Sigma_{D,D} ||_{1,\infty}   ||{\Sigma}_{D,D}^{-1}||_{1,\infty} \\
&\leq& (A_\Sigma+||\hat{\Sigma}_{D,D}^{-1}-\Sigma_{D,D}^{-1}||_{1,\infty})d_\delta \epsilon A_\Sigma.
\end{eqnarray*}
\end{proof}

\begin{Lem}\label{gamma}
\begin{eqnarray*}
|\hat{\gamma}-\gamma|_\infty \leq 8\epsilon_\mu+ 2(\epsilon+B_\Sigma \epsilon_\Omega A_2+d\epsilon\epsilon_\Omega A_2) |\Omega (\mu_1-\mu_2)|_1
+2(B+\epsilon)(A_1+d\epsilon_\Omega)\epsilon_\mu.
\end{eqnarray*}
\end{Lem}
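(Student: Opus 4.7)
My plan is to prove the bound by a standard add-and-subtract argument that isolates the estimation errors in $\hat\mu_i$, $\hat\Sigma_i$, and $\hat\Omega$ one factor at a time. Writing $\Sigma_d := \Sigma_1 - \Sigma_2$ and $\hat\Sigma_d := \hat\Sigma_1 - \hat\Sigma_2$, I first split
\begin{align*}
\hat\gamma - \gamma = 4\big[(\hat\mu_1 - \mu_1) - (\hat\mu_2 - \mu_2)\big] + \big[\hat\Sigma_d \hat\Omega \hat\Delta_\mu - \Sigma_d \Omega \Delta_\mu\big].
\end{align*}
The linear piece contributes at most $8\epsilon_\mu$ in $\ell_\infty$ norm, producing the leading term of the claimed bound.

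For the triple-product piece $T$, I telescope as
\begin{align*}
T = (\hat\Sigma_d - \Sigma_d)\Omega \Delta_\mu + \hat\Sigma_d (\hat\Omega - \Omega)\Delta_\mu + \hat\Sigma_d \hat\Omega (\hat\Delta_\mu - \Delta_\mu),
\end{align*}
and bound each summand in $\ell_\infty$ norm. For the first, the vector $\Omega\Delta_\mu = \Omega(\mu_1 - \mu_2)$ appears naturally, so the inequality $|Mv|_\infty \leq \|M\|_\infty |v|_1$ together with $\|\hat\Sigma_d - \Sigma_d\|_\infty \leq 2\epsilon$ yields the $2\epsilon |\Omega(\mu_1-\mu_2)|_1$ contribution. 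For the third, I again apply $|Mv|_\infty \leq \|M\|_\infty |v|_1$ with $\|\hat\Sigma_d\|_\infty \leq 2(B+\epsilon)$ and $\|\hat\Omega\|_{1,\infty} \leq A_1 + d\epsilon_\Omega$; the latter follows from Theorem \ref{thmomega}(i), which guarantees that $\hat\Omega - \Omega$ is supported in $S$, so its row $\ell_1$-norm is at most $d\epsilon_\Omega$. This produces the $2(B+\epsilon)(A_1 + d\epsilon_\Omega)\epsilon_\mu$ contribution.

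The middle summand is the main obstacle, because $\Delta_\mu$ is itself neither small nor sparse, so a direct bound would have the wrong dependence. My plan is to insert the identity $\Delta_\mu = \Omega^{-1}(\Omega \Delta_\mu)$ and rewrite
\begin{align*}
\hat\Sigma_d(\hat\Omega - \Omega)\Delta_\mu = \hat\Sigma_d(\hat\Omega - \Omega)\Omega^{-1}\big(\Omega\Delta_\mu\big),
\end{align*}
so that $\Omega(\mu_1-\mu_2)$ can be pulled out via $|Mv|_\infty \leq \|M\|_\infty |v|_1$. Bounding the entrywise $\ell_\infty$-norm of $\hat\Sigma_d(\hat\Omega - \Omega)\Omega^{-1}$ then proceeds by splitting $\hat\Sigma_d = \Sigma_d + (\hat\Sigma_d - \Sigma_d)$, using $\|\Sigma_d\|_{1,\infty} \leq 2B_\Sigma$ and $\|\hat\Omega - \Omega\|_{1,\infty} \leq d\epsilon_\Omega$ from the sparsity of $\hat\Omega - \Omega$, and invoking the constant $A_2 = \|(\Omega^{-1})_{\cdot,D}\|_{1,\infty}$ to control the relevant block of $\Omega^{-1}$. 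This yields the coefficient $2(B_\Sigma + d\epsilon)\epsilon_\Omega A_2$ in front of $|\Omega(\mu_1-\mu_2)|_1$.

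Summing the three contributions with the $8\epsilon_\mu$ from the mean-difference piece delivers the stated bound. The hard step is the middle term: the rewrite $\Delta_\mu = \Omega^{-1}\Omega\Delta_\mu$ is essential for converting a potentially dense factor into the sparse quantity $|\Omega(\mu_1-\mu_2)|_1$, and combining the sparsity of $\hat\Omega - \Omega$ with the bound $A_2$ is what prevents a blow-up in the ambient dimension $p$.
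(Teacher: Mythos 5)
Your proof is correct and follows essentially the same route as the paper: you decompose $\hat\gamma - \gamma$ into the mean-difference piece plus a triple product, telescope the triple product into the same three summands, and resolve each summand with the same $\ell_\infty$-versus-$\ell_1$ H\"older-type bounds, using the sparsity of $\hat\Omega-\Omega$ on $S$ and the constant $A_2$ to convert $|\Delta_\mu|_1$ into $A_2|\Omega(\mu_1-\mu_2)|_1$. The only cosmetic difference is that the paper bounds $\|\hat\Sigma_d(\hat\Omega-\Omega)\|_\infty|\Delta_\mu|_1$ first and then applies $|\Delta_\mu|_1\le A_2|\Omega\Delta_\mu|_1$, whereas you insert $\Omega^{-1}\Omega$ before taking norms; these are the same estimate written in a different order.
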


\begin{proof}
\begin{eqnarray*}\label{gamma1}
|\gamma-\hat{\gamma}|_\infty 
&\leq& 4|\hat{\Delta}_\mu-\Delta_{\mu}|_\infty+|(\Delta_1-\Delta_2)\Omega \Delta_\mu|_\infty+|(\hat{\Sigma}_1-\hat{\Sigma}_2)(\hat{\Omega}-\Omega)\Delta_\mu|_\infty\\
&&+|(\hat{\Sigma}_1-\hat{\Sigma}_2)\hat{\Omega}(\hat{\Delta}_\mu-\Delta_\mu)|_\infty.
\end{eqnarray*}
Lemma \ref{gamma} can then be proved using the following facts:
\begin{eqnarray*}
|\hat{\Delta}_\mu-\Delta_{\mu}|_\infty\leq 2\epsilon_\mu
\end{eqnarray*}
\begin{eqnarray*}\label{gamma2}
|\mu_1-\mu_2|_1\leq ||(\Omega^{-1})_{\cdot, D}||_{1,\infty}|\Omega (\mu_1-\mu_2)|_1=A_2|\Omega (\mu_1-\mu_2)|_1;
\end{eqnarray*}
\begin{eqnarray*}\label{gamma3}
|(\Delta_1-\Delta_2)\Omega \Delta_\mu|_\infty\leq 2\epsilon |\Omega (\mu_1-\mu_2)|_1;
\end{eqnarray*}
\begin{eqnarray*}\label{gamma4}
|(\hat{\Sigma}_1-\hat{\Sigma}_2)(\hat{\Omega}-\Omega)\Delta_\mu|_\infty &\leq& ||(\Sigma_1-\Sigma_2)(\hat{\Omega}-\Omega)||_\infty |\Delta_\mu|_1+
||(\Delta_1-\Delta_2)(\hat{\Omega}-\Omega)||_\infty|\Delta_\mu|_1 \nonumber\\
&\leq&2B_\Sigma\epsilon_\Omega |\mu_1-\mu_2|_1+2d\epsilon\epsilon_\Omega |\mu_1-\mu_2|_1;
\end{eqnarray*}
\begin{eqnarray*}\label{gamma5}
|(\hat{\Sigma}_1-\hat{\Sigma}_2)\hat{\Omega}(\hat{\Delta}_\mu-\Delta_\mu)|_\infty &\leq&
2(B +\epsilon)|\hat{\Omega}(\hat{\Delta}_\mu-\Delta_\mu)|_1   \nonumber \\
&\leq&2(B +\epsilon)[||{\Omega}||_{1,\infty}|\hat{\Delta}_\mu-\Delta_\mu|_\infty+d\epsilon_\Omega|\hat{\Delta}_\mu-\Delta_\mu|_\infty]  \nonumber \\
&\leq&2(B+\epsilon)(A_1+d\epsilon_\Omega)\epsilon_\mu.
\end{eqnarray*}

\end{proof}

\noindent
{\bf Proof of Theorem \ref{thmdelta}}
Using similar arguments as in Lemma \ref{concentration}, there exists a constant $C_\epsilon>0$ such that $\max\{\epsilon,\epsilon_\mu\}\leq  C_\epsilon\{(c\log p+\log C_1)/C_{2\delta}n\}^{1/2}$. Similar to the proof of Theorem \ref{thmomega}, we choose
 \begin{eqnarray*}
 \lambda_\delta&=&\max\Big\{ \frac{2(2-\alpha_\delta)C_\epsilon}{\alpha_\delta}[4+(2+B_\Sigma A_2)|\Omega(\mu_1-\mu_2)|_1+2B(A_1+C_3)] , \\ &&   \frac{d_\delta(5 A_\Sigma+2B_\Sigma A_\Sigma^2)}{C_\delta \alpha_\delta}  
 \times(C_3+1)\Big\}  \sqrt{\frac{c\log p +\log C_1}{C_{2\delta} n}},
 \end{eqnarray*}
where $C_3=\frac{14}{3}\max\Big\{\frac{8}{\alpha}, \frac{3(2-\alpha)(2B+1)}{1-\alpha}d^2  BB_{\Gamma,\Gamma^T}[1+2(B_\Sigma^2+\frac{1}{3})B_{\Gamma,\Gamma^T}]\Big\}B_{\Gamma,\Gamma^T}+24d^2B^2B_{\Gamma,\Gamma^T}^2$,
and assume that $n$ is large enough such that
 \begin{eqnarray*}
 n&>&(c\log p+\log C_1) \times \max\{C_{2\delta}^{-1}, 2C_2^{-1}A_\Sigma^2d_\delta^2, C_2^{-1}(A_2+1)^2d^2, \\ 
 &&C_2^{-1}C_\delta^{-2}\alpha_{\delta}^{-2}d_\delta^2(5 A_\Sigma+2B_\Sigma A_\Sigma^2)^2 \},
 \end{eqnarray*}
 where $0<C_\delta=\frac{\alpha_\delta\lambda_\delta-(2-\alpha_\delta)K\gamma}{\alpha_\delta\lambda_\delta(1+A_\gamma+K_\gamma)}<1$ and $C_{2\delta}=\min\{C_2,(2\sigma^2)^{-1}\}\times\min\{B^2,1\}$.

(i)
Suppose $\tilde{\delta}$ is the solution of:
\begin{eqnarray*}
\tilde{\delta}=\min_{\delta\in R^p,
\delta_{D^c}=0} \frac{1}{2} \delta^T (\hat\Sigma_1 +\hat\Sigma_2)\delta-\hat{\gamma}^T \delta +\lambda_\delta \| \delta\|_1,
\end{eqnarray*}
We first show that $\hat{\delta}=\tilde{\delta}$. It sufficies to show that for any $e\in D^c$,
\begin{eqnarray*}
|2\hat{\Sigma}_{e,D}\tilde{\delta}_{D}-\hat{\gamma}_e|\leq \lambda_\delta.
\end{eqnarray*}
By the definition of $\tilde{\delta}$ we have
\begin{eqnarray*}
\{2\hat{\Sigma}\tilde{\delta}-\hat{\gamma}+\lambda_\delta Z\}_D=\bf{0},
\end{eqnarray*}
where $Z=(Z_1,\ldots,Z_p)^T$ with $Z_i=0$ for $i\in D^c$, $Z_i={\rm sign}(\tilde{\delta})$ for $i\in D$ and $\tilde{\delta}\neq 0$, $Z_i\in[-1,1]$ for $i\in D$ and $\tilde{\delta}=0$. Consequently, we have $\tilde{\delta}_D=\frac{1}{2}\hat{\Sigma}_{D,D}^{-1}(\hat{\gamma}_D-\lambda_\delta Z_D)$. Together with the fact that $\Sigma_{e,D}\Sigma_{D,D}^{-1}\gamma_D=2\Sigma_{e,D}\Sigma_{D,D}^{-1} \Sigma_{D,D}\delta_D=\gamma_e$, we have
\begin{eqnarray}\label{delta1}
|2\hat{\Sigma}_{e,D}\tilde{\delta}_{D}-\hat{\gamma}_e|&=&
|\hat{\Sigma}_{e,D}\hat{\Sigma}_{D,D}^{-1}(\hat{\gamma}_D-\lambda_\delta Z_D)-\hat{\gamma}_e|   \nonumber \\
&\leq& |(\hat{\Sigma}_{e,D}\hat{\Sigma}_{D,D}^{-1}-\Sigma_{e,D}\Sigma_{D,D}^{-1}){\gamma}_D|+
|\hat{\Sigma}_{e,D}\hat{\Sigma}_{D,D}^{-1}(\hat{\gamma}_D-\gamma_D)|    \nonumber\\
&&+\lambda_\delta |\hat{\Sigma}_{e,D}\hat{\Sigma}_{D,D}^{-1}|_1+|\gamma_e-\hat{\gamma}_e|   \nonumber\\
&\leq& |(\hat{\Sigma}_{e,D}\hat{\Sigma}_{D,D}^{-1}-\Sigma_{e,D}\Sigma_{D,D}^{-1})|_1|{\gamma}_D|_\infty+\nonumber\\
&&
 |\hat{\Sigma}_{e,D}\hat{\Sigma}_{D,D}^{-1}|_1|(\hat{\gamma}_D-\gamma_D)|_\infty     +\lambda_\delta |\hat{\Sigma}_{e,D}\hat{\Sigma}_{D,D}^{-1}|_1+|\gamma_e-\hat{\gamma}_e|.
\end{eqnarray}
For simplicity, in the following, inequalities will be derived without mentioning whether they hold ``with probability greater than $1-p^{2-c}$''. For example,
since $n>2C_2^{-1}A_\Sigma^2d_\delta^2(c\log p+\log C_1),$ we have $A_\Sigma d_\delta\epsilon<1/2$ with probability greater than $1-p^{2-c}$ and we shall repeatedly use this inequality without mentioning it holds with probability greater than $1-p^{2-c}$. Since $n>C_{2\delta}^{-1}(c\log p+\log C_1)$, by (17) of \cite{Ravikumar:2011aa} and Theorem \ref{thmomega}, we also have
$\epsilon_\Omega\leq (C_3+1)\{(c\log p+\log C_1)/C_{2\delta}n\}^{1/2}:=\epsilon_0$. 

From Lemma \ref{sigmaD} we have,
\begin{eqnarray}\label{delta2}
&& |\hat{\Sigma}_{e,D}\hat{\Sigma}_{D,D}^{-1}-\Sigma_{e,D}\Sigma_{D,D}^{-1}|_1   \nonumber\\
&\leq& |(\hat{\Sigma}_{e,D}-\Sigma_{e,D})\Sigma_{D,D}^{-1}|_1+
           |\Sigma_{e,D}(\hat{\Sigma}_{D,D}^{-1}-\Sigma_{D,D}^{-1})|_1
 +|(\hat{\Sigma}_{e,D}-\Sigma_{e,D})(\hat{\Sigma}_{D,D}^{-1}-\Sigma_{D,D}^{-1})|_1 \nonumber \\
&\leq&   |\hat{\Sigma}_{e,D}-\Sigma_{e,D}|_1 |\Sigma_{D,D}^{-1}|_{1,\infty}+
          |\Sigma_{e,D}|_{1}||\hat{\Sigma}_{D,D}^{-1}-\Sigma_{D,D}^{-1}||_{1,\infty}  \nonumber\\
&&      +|\hat{\Sigma}_{e,D}-\Sigma_{e,D}|_1||\hat{\Sigma}_{D,D}^{-1}-\Sigma_{D,D}^{-1})||_{1,\infty} \nonumber \\
 &\leq&d_\delta\epsilon A_\Sigma+ d_\delta (B+\epsilon) ||\hat{\Sigma}_{D,D}^{-1}-\Sigma_{D,D}^{-1}||_{1,\infty}\nonumber\\
 &\leq& d_\delta\epsilon A_\Sigma+
  \frac{ d_\delta (B_\Sigma+d_\delta\epsilon) A_\Sigma^2  \epsilon}{1-A_\Sigma d_\delta \epsilon}.
\end{eqnarray}

Combining (\ref{delta1}), (\ref{delta2}) and Lemma \ref{gamma} we have:
\begin{eqnarray*}
&&|2\hat{\Sigma}_{e,D}\tilde{\delta}_{D}-\hat{\gamma}_e|   \\
&\leq& \left\{ \epsilon A_\Sigma +
  \frac{  (B+d_\delta\epsilon) A_\Sigma^2  \epsilon}{1-A_\Sigma d_\delta \epsilon} \right\}A_\gamma d_\delta+
  \left(2-\alpha_\delta+d_\delta\epsilon A_\Sigma+
  \frac{ d_\delta (B_\Sigma+d_\delta\epsilon) A_\Sigma^2  \epsilon}{1-A_\Sigma d_\delta \epsilon}\right)K_\gamma \\
&&+  \lambda_\delta \left(1-\alpha_\delta+d_\delta\epsilon A_\Sigma+
  \frac{ d_\delta (B_\Sigma+d_\delta\epsilon) A_\Sigma^2  \epsilon}{1-A_\Sigma d_\delta \epsilon}\right),
\end{eqnarray*}
where
\begin{eqnarray*}
K_\gamma= 2\epsilon_\mu[4+(2+B_\Sigma A_2)|\Omega(\mu_1-\mu_2)|_1+2B(A_1+C_3)],
\end{eqnarray*}
and we have used the fact that $\epsilon<B$, $d_\delta \epsilon A_2<1, d\epsilon<1$ and hence $8\epsilon_\mu+ 2(\epsilon+B_\Sigma \epsilon_\Omega A_2+d\epsilon\epsilon_\Omega A_2) |\Omega (\mu_1-\mu_2)|_1
+2(B+\epsilon)(A_1+d\epsilon_\Omega)\epsilon_\mu<K_\gamma$.
Assume that
\begin{eqnarray}\label{delta3}
d_\delta\epsilon A_\Sigma+
  \frac{ d_\delta (B_\Sigma+d_\delta\epsilon) A_\Sigma^2  \epsilon}{1-A_\Sigma d_\delta \epsilon}\leq C_\delta\alpha_\delta \min\{\lambda_\delta,1\},
\end{eqnarray}
where $C_\delta=\frac{\alpha_\delta\lambda_\delta-(2-\alpha_\delta)K\gamma}{\alpha_\delta\lambda_\delta(1+A_\gamma+K_\gamma)}$. It can be seen that $0<C_\delta<1$. We then have
\begin{eqnarray*}
|2\hat{\Sigma}_{e,D}\tilde{\delta}_{D}-\hat{\gamma}_e|
&\leq& \lambda_\delta A_\gamma C_\delta\alpha_\delta+(2-\alpha_\delta)K_\gamma+C_\delta\alpha_\delta \lambda_\delta K_\gamma+(1-\alpha_\delta+C_\delta\alpha_\delta)\lambda_\delta \\
&=&\lambda_\delta.
\end{eqnarray*}
Next we complete the Proof of this part by showing that (\ref{delta3}) holds.

Since $n>  C_2^{-1}C_\delta^{-2}\alpha_{\delta}^{-2}d^2_\delta(5 A_\Sigma+2B_\Sigma A_\Sigma^2)^2 (c\log p+\log C_1)$, we have
 \begin{eqnarray*}
d_\delta\epsilon A_\Sigma+
  \frac{ d_\delta (B_\Sigma+d_\delta\epsilon) A_\Sigma^2  \epsilon}{1-A_\Sigma d_\delta \epsilon}\leq
  d_\delta\epsilon A_\Sigma+
   { 4d_\delta A_\Sigma \epsilon}+2B_\Sigma A_\Sigma^2 d_\delta\epsilon  \leq
   C_\delta\alpha_\delta.
 \end{eqnarray*}
 On the other hand, since $\lambda_\delta\geq\frac{d_\delta(5 A_\Sigma+2B_\Sigma A_\Sigma^2)}{C_\delta \alpha_\delta}\epsilon_0,$ we have
 \begin{eqnarray*}
 d_\delta\epsilon A_\Sigma+
  \frac{ d_\delta (B_\Sigma+\epsilon d_\delta) A_\Sigma^2  \epsilon}{1-A_\Sigma d_\delta \epsilon}\leq \lambda_\delta   C_\delta\alpha_\delta.
 \end{eqnarray*}

(ii) Use the fact that $A_\Sigma d_\delta\epsilon<1/2$ we have:
\begin{eqnarray*}
|\hat{\delta}_D-\delta_D|_{\infty}&=&\frac{1}{2}|\hat{\Sigma}_{D,D}^{-1}(\hat{\gamma}_D-\lambda_\delta Z_D)-\Sigma^{-1}_{D,D}\gamma_D|\\
&\leq&|(\hat{\Sigma}_{D,D}^{-1}-\Sigma_{D,D}^{-1})\hat{\gamma}_D|_\infty+|\Sigma_{D,D}^{-1}(\hat{\gamma}_D-\gamma_D)|_\infty+\lambda_\delta |\hat{\Sigma}_{D,D}^{-1}|_{1,\infty}\\
&\leq&   \frac{A_\Sigma^2d_\delta \epsilon}{1-A_\Sigma d_\delta \epsilon} (A_\gamma+|\hat{\gamma}_D-\gamma_D|_\infty)+A_\Sigma|\hat{\gamma}_D-\gamma_D|_\infty \\&&+\lambda_\delta \left(A_\Sigma+ \frac{A_\Sigma^2d_\delta \epsilon}{1-A_\Sigma d_\delta \epsilon} \right)\\
&\leq&  \frac{A_\gamma A_\Sigma^2d_\delta \epsilon}{1-A_\Sigma d_\delta \epsilon} +K_\gamma \left[ \frac{A_\Sigma^2d_\delta \epsilon}{1-A_\Sigma d_\delta \epsilon}+A_\Sigma\right]\\&&+\lambda_\delta \left(A_\Sigma+ \frac{A_\Sigma^2d_\delta \epsilon}{1-A_\Sigma d_\delta \epsilon} \right) \\
&\leq& 2A_\gamma A_\Sigma^2d_\delta \epsilon+2K_\gamma A_\Sigma
+2\lambda_\delta A_\Sigma.
\end{eqnarray*}
This theorem is proved by plugging in $K_\gamma= 2\epsilon_\mu[4+(2+B_\Sigma A_2)|\Omega(\mu_1-\mu_2)|_1+2B(A_1+C_3)]$.

\subsection{Proofs of Theorem \ref{thm3}}

\begin{proof}
(i) With some abuse of notations we write $d(z)=(z-\mu)^T \Omega (z-\mu)+\delta^T (z-\mu)$ and
$\hat{d}(z)=(z-\hat{\mu})^T \hat{\Omega} (z-\hat{\mu})+\hat{\delta}^T (z-\hat{\mu})$.
\begin{eqnarray*}
R_n(1|2)&=&P(\hat{d}(z)+\eta>0 | z\sim N(\mu_2,\Sigma_2))\nonumber\\
&=&P(d(z)+\eta>d(z)-\hat{d}(z)| z\sim N(\mu_2,\Sigma_2)).
\end{eqnarray*}
Denote $z=(z_1,\ldots,z_p)^T$. Note that $z^T(\Omega-\hat{\Omega})z\leq \sum_{ij}|z_iz_j(\Omega-\hat{\Omega})_{ij}|$, by noticing that $E|z_iz_j|\leq (E z_i^2+Ez_j^2)/2\leq C_\mu^2+C_\Sigma$, we have $z^T(\Omega-\hat{\Omega})z=O_p(s||\Omega-\hat{\Omega}||_\infty)$. Using a similar argument for bounding $|z^T\Omega \mu-z\hat{\Omega}\hat{\mu}|$, $|\mu^T\Omega\mu-\hat{\mu}^T\hat{\Omega}\hat{\mu}|$, $(\delta-\hat{\delta})^Tz$ and $\delta^T\mu-\hat{\delta}^T\hat{\mu}$ we obtain,
\begin{eqnarray}\label{3.2}
&&|d(z)-\hat{d}(z)|\nonumber\\
&=&O_p\left( s d_0^2 B^2B_\Sigma^2B_{\Gamma,\Gamma^T}^2   
\sqrt{\frac{\log p }{ n}}+  d_0^2A_\Sigma^2    B^2B_\Sigma^3B_{\Gamma,\Gamma^T}^2  \sqrt{\frac{\log p  }{ n}} \right).
\end{eqnarray}
From Assumption 1 and (\ref{3.2}) and the mean value theorem, we have:
\begin{eqnarray*}
&&R_n(1|2)-R(1|2)\\&=&\int_{0}^{d(z)-\hat{d}(z)}F_2(z)dz\\
&=&O_p\left( s d_0^2 B^2B_\Sigma^2B_{\Gamma,\Gamma^T}^2   
\sqrt{\frac{\log p }{ n}}+  d_0^2A_\Sigma^2    B^2B_\Sigma^3B_{\Gamma,\Gamma^T}^2  \sqrt{\frac{\log p  }{ n}} \right).
\end{eqnarray*}
(i) is proved by noticing that the above equality is also true for $R_n(2|1)-R(2|1)$.

(ii)
Let $\Phi(\cdot)$ be the cumulative distribution function of a standard normal random variable. We have any constant $C_z>0$,
\begin{eqnarray*}
P(|z|_\infty>C_z\sqrt{\log p})\leq p\left[1-\Phi\left(\frac{C\sqrt{\log p}+C_\mu}{C_\Sigma^{1/2}}\right)\right].
\end{eqnarray*}
From Lemma 11 of \cite{Liu:2009aa} we have when $p$ is large enough, by choosing $C_z>\sqrt{2(c-1)}C_\Sigma^{1/2}$,
\begin{eqnarray*}
 p\left[1-\Phi\left(\frac{C_z\sqrt{\log p}+C_\mu}{C_\Sigma^{1/2}}\right)\right]\leq p^{2-c}.
\end{eqnarray*}
This together with Theorems \ref{thmomega} and \ref{thmdelta} and the proof in (i), we have with probability greater than $1-3p^{2-c}$,
\begin{eqnarray*}
&&|d(z)-\hat{d}(z)|\\
&=&O_p\left( s d_0^2 B^2B_\Sigma^2B_{\Gamma,\Gamma^T}^2 \log p  
\sqrt{\frac{\log p }{ n}}+  d_0^2A_\Sigma^2    B^2B_\Sigma^3B_{\Gamma,\Gamma^T}^2    \frac{\log p  }{ \sqrt{n}} \right).
\end{eqnarray*}
The rest of the proof is similar to that in the proof of (i).
\end{proof}

\subsection{Proof of Proposition \ref{etalemma1}}
\begin{proof}
Suppose $e_1> e_2\geq \eta$. Denote $c_i= |\Sigma_1 \Sigma_2^{-1}|^{1/2}\exp\{\frac{1}{2}[e_i-\frac{1}{4}(\mu_1-\mu_2)^T\Omega(\mu_1-\mu_2)]\}$ for $i=1,2$. We have,
\begin{eqnarray*}
R(d,e_i)&=&\pi_1\int_{D(z,e_1)<0}f_1(z)dz+\pi_2\int_{D(z,e_i)>0}f_2(z)dz\\
&=& \int_{f_1(z)/f_2(z)<c_i^{-1}}\pi_1f_1(z)dz+ \int_{f_1(z)/f_2(z)>c_i^{-1}}\pi_2f_2(z)dz.
\end{eqnarray*}
Since $e_1> e_2\geq \eta$, it can be easily shown that $c_1> c_2\geq \pi_1/\pi_2$.
Consequently we have
\begin{eqnarray*}
R(d,e_1)-R(d,e_2)=- \int_{c_1^{-1}< f_1(z)/f_2(z)<c_2^{-1}}[\pi_1f_1(z)-\pi_2f_2(z)]dz>0.
\end{eqnarray*}
Therefore, $R(d, e)$ is strictly monotone increasing on $e\in[\eta,\infty)$. The second statement can be similarly proved.
\end{proof}

\subsection{Proofs of Theorem \ref{thm4}}
We first introduce some technical lemmas and the proof of Theorem \ref{thm4} will be given after these lemmas.

For any constant $c$, define $l_c=\min\{{\essinf}_{z\in [-c, c]}F_i(z), i=1,2\}$.

\begin{Lem}\label{etalemma2}
For any constant $c>0$, we have for any $-c\leq \epsilon_\eta\leq c$, $R(d, \eta+\epsilon_\eta)-R(d, \eta)\leq \pi_2u_c|\epsilon_\eta|$ and $R(d, \eta+\epsilon_\eta)-R(d, \eta)\geq \epsilon_\eta^2\exp(-c/2)\pi_2l_c/4$.
\end{Lem}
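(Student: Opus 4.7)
The plan is to express $R(d,\cdot)$ as a one-dimensional integral of $F_1$ and $F_2$ and then exploit the pointwise identity $\pi_1 F_1(u) = e^{u/2}\pi_2 F_2(u)$ to control the integrand near $\eta$. First I would write
$R(d,e) = \pi_1\int_{-\infty}^{\eta-e}F_1(u)\,du + \pi_2\int_{\eta-e}^{\infty}F_2(u)\,du,$
so that $(\partial/\partial e)R(d,e) = \pi_2 F_2(\eta-e) - \pi_1 F_1(\eta-e)$. Optimality of $\eta$ gives the balance $\pi_1 F_1(0) = \pi_2 F_2(0)$, and Proposition \ref{etalemma1} confirms this derivative has the right sign on each side of $\eta$.

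The crucial step is establishing $\pi_1 F_1(u) = e^{u/2}\pi_2 F_2(u)$ on the support. This identity follows from the representation $D(z) = 2\log\{\pi_1 f_1(z)/(\pi_2 f_2(z))\}$ that is implicit in the definition of $\eta$ in Section 1. Concretely, for any bounded measurable $g$,
$E\{g(D(z_1))\} = \int g(D(z))\,f_1(z)\,dz = (\pi_2/\pi_1)\int g(D(z))\,e^{D(z)/2} f_2(z)\,dz = (\pi_2/\pi_1)\,E\{g(D(z_2))\,e^{D(z_2)/2}\},$
so the law of $D(z_1)$ weighted by $\pi_1$ equals the law of $D(z_2)$ weighted by $\pi_2 e^{u/2}$. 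Choosing $g$ an indicator of $(-\infty,u]$ and differentiating in $u$ yields the pointwise identity, provided $F_1,F_2$ are absolutely continuous, which holds because $D$ is a non-degenerate quadratic-plus-linear form in a Gaussian vector.

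With the identity in hand, for $\epsilon_\eta > 0$ I would rewrite
$R(d, \eta+\epsilon_\eta) - R(d, \eta) = \int_0^{\epsilon_\eta} [\pi_2 F_2(-t) - \pi_1 F_1(-t)]\,dt = \int_0^{\epsilon_\eta} \pi_2 F_2(-t)(1 - e^{-t/2})\,dt.$
The upper bound follows from $1 - e^{-t/2} \leq 1$ and $F_2 \leq u_c$ on $(-c,0)$, giving $\pi_2 u_c \epsilon_\eta$. For the lower bound, I would use the elementary inequality $1 - e^{-t/2} \geq (t/2)e^{-t/2} \geq (t/2)e^{-c/2}$ on $(0,c)$, together with $F_2 \geq l_c$, to get
$R(d,\eta+\epsilon_\eta) - R(d,\eta) \geq \pi_2 l_c e^{-c/2}\int_0^{\epsilon_\eta} (t/2)\,dt = \frac{\epsilon_\eta^2}{4}\,e^{-c/2}\pi_2 l_c.$
The $\epsilon_\eta < 0$ case is symmetric: the mirror integral is $\int_0^{|\epsilon_\eta|}\pi_2 F_2(s)(e^{s/2}-1)\,ds$, and the inequality $e^{s/2}-1 \geq s/2$ for $s \geq 0$ delivers the same quadratic lower bound (in fact without the $e^{-c/2}$ penalty), while the upper bound follows from the same type of estimate.

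The main obstacle will be the rigorous derivation of the density identity: $F_i$ enters the statement abstractly as a density, but the representation through $f_1/f_2$ is what makes the argument work, and one must handle the Jacobian of the map $z \mapsto D(z)$ carefully (or bypass it via the change-of-measure argument sketched above). Once this is in place, both bounds reduce to one-line calculations using $1 \geq 1 - e^{-t/2} \geq (t/2)e^{-c/2}$ on $(0,c)$, and the analogous estimate on the negative side.
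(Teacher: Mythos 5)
Your proof is correct and lands on the same essential mechanism the paper uses — the likelihood-ratio relation between the class densities after passing to the law of the discriminant — but you make it more explicit and, in one place, cleaner. The paper works in $z$-space: it writes the risk increment as $\int_{-\epsilon_\eta<D(z,\eta)<0}[\pi_2 f_2-\pi_1 f_1]\,dz$, restricts the integration to the half-slab $\{-\epsilon_\eta<D(z,\eta)<-\epsilon_\eta/2\}$, and lower-bounds $\pi_2 f_2-\pi_1 f_1$ there by $(1-e^{-\epsilon_\eta/2})\pi_2 f_2$ before invoking monotonicity of $1-e^{-x/2}-\tfrac{x}{2}e^{-c/2}$. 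You instead change variables to the law of $D$ first, derive the pointwise identity $\pi_1 F_1(u)=e^{u/2}\pi_2 F_2(u)$ by the change-of-measure $\pi_1 f_1=e^{D/2}\pi_2 f_2$, and then apply $1\ge 1-e^{-t/2}\ge(t/2)e^{-c/2}$ inside the integral $\int_0^{\epsilon_\eta}\pi_2 F_2(-t)(1-e^{-t/2})\,dt$ over the full range rather than half of it. This is a genuine simplification: you avoid the region-splitting entirely, and your exponent $e^{-t/2}$ matches what $D=2\log(\pi_1 f_1/(\pi_2 f_2))$ actually gives, whereas the paper's restriction to $D<-\epsilon_\eta/2$ only yields the factor $1-e^{-\epsilon_\eta/4}$, so the paper's printed constant is slightly generous. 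One small caveat you inherit from the paper: for $\epsilon_\eta<0$ the symmetric argument naturally produces $\pi_1 u_c|\epsilon_\eta|$ (from $\int_0^{|\epsilon_\eta|}\pi_1 F_1(s)(1-e^{-s/2})\,ds\le\pi_1 u_c|\epsilon_\eta|$) rather than $\pi_2 u_c|\epsilon_\eta|$ as stated; this is a wrinkle in the lemma's asymmetric constant, present equally in the paper's `same argument' remark, and harmless for how the lemma is used.
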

\begin{proof}
Let's consider $0\leq\epsilon_\eta\leq c$ first.
Note that
\begin{eqnarray*}
R(d, \eta+\epsilon_\eta)-R(d, \eta)= \int_{-\epsilon_\eta<D(z,\eta)<0}[\pi_2f_2(z)-\pi_1f_1(z)]dz.
\end{eqnarray*}
We have
\begin{eqnarray*}
R(d, \eta+\epsilon_\eta)-R(d, \eta)\leq \int_{-\epsilon_\eta<D(z,\eta)<0}\pi_2f_2(z)dz=\pi_2\int_{-\epsilon_\eta}^0F_2(z)dz\leq \pi_2u_c\epsilon_\eta.
\end{eqnarray*}
By noticing that $1-\exp(-x/2)-x\exp(-c/2)/2$ is an increasing function in $[0,c]$ we have
\begin{eqnarray*}
R(d, \eta+\epsilon_\eta)-R(d, \eta)&\geq& \int_{-\epsilon_\eta<D(z,\eta)<-\epsilon_\eta/2}[\pi_2f_2(z)-\pi_1f_1(z)]dz \\
&\geq&\int_{-\epsilon_\eta<D(z,\eta)<-\epsilon_\eta/2}[1-\exp(-\epsilon_\eta/2)]\pi_2f_2(z)dz \\
&\geq&\int_{-\epsilon_\eta<D(z,\eta)<-\epsilon_\eta/2}\epsilon_\eta\exp(-c/2)\pi_2f_2(z)/2dz \\
&\geq&\epsilon_\eta^2\exp(-c/2)\pi_2l_c/4.
\end{eqnarray*}
The lemma is then proved using a same argument as above for $-c\leq\epsilon_\eta< 0$.
\end{proof}
\noindent
Clearly \ref{etalemma2} holds when $c$ is set to be $c_\eta$. Noted from the proof of Lemma \ref{etalemma2} that the bounds do not depend on $\eta$, we can claim that the bounds holds uniformly in $\epsilon_\eta\in [-c_\eta,c_\eta]$. Similarly, it can be shown that:
\begin{Lem}\label{etalemma3}
Suppose $\hat{d}-d=O_p(\Delta_d)$ with $\Delta_d\rightarrow 0$,  we have $R(\hat{d}, e)-R(d, e)= O_p(\Delta_du_c)$ uniformly in $e\in [-c_\eta, c_\eta]$.
\end{Lem}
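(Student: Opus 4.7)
The argument parallels Lemma~\ref{etalemma2} but now the small perturbation comes from replacing $d$ by $\hat d$ rather than from shifting the intercept. The essential observation is that, pointwise in $z$, swapping $d(z)+e$ for $\hat d(z)+e$ only changes the sign of the discriminant where $|d(z)+e|$ is smaller than the perturbation $|\hat d(z)-d(z)|$, and the probability of this sliver is controlled by the density bound $u_c$.

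First I would decompose, conditionally on the training data,
\[
R(\hat d,e)-R(d,e)=\pi_1\bigl[P_1(\hat d(z)+e<0)-P_1(d(z)+e<0)\bigr]+\pi_2\bigl[P_2(\hat d(z)+e>0)-P_2(d(z)+e>0)\bigr],
\]
where $P_i$ is the law of $z\sim N(\mu_i,\Sigma_i)$. For each bracket, the inclusion $\{\hat d(z)+e>0\}\triangle\{d(z)+e>0\}\subset\{|d(z)+e|\le|\hat d(z)-d(z)|\}$ gives the uniform bound $|P_i(\hat d(z)+e>0)-P_i(d(z)+e>0)|\le P_i(|d(z)+e|\le|\hat d(z)-d(z)|)$.

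Next, given any $\varepsilon>0$, the hypothesis $\hat d-d=O_p(\Delta_d)$ produces an $M=M_\varepsilon$ such that $P(|\hat d(z)-d(z)|>M\Delta_d)\le\varepsilon$. Splitting on the event $\{|\hat d(z)-d(z)|\le M\Delta_d\}$ and writing $T=d(z)+\eta$ (whose density under $P_i$ is $F_i$), the key density estimate is
\[
P_i\bigl(|d(z)+e|\le M\Delta_d\bigr)=\int_{\eta-e-M\Delta_d}^{\eta-e+M\Delta_d}F_i(s)\,ds\le 2M\Delta_d\,u_c,
\]
where the last step uses the defining bound $F_i\le u_c$ on the relevant interval. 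Summing over the two classes, I obtain $|R(\hat d,e)-R(d,e)|\le 2M\Delta_d u_c+\varepsilon$ with probability at least $1-\varepsilon$. Because $M$ and the bound depend on $e$ only through $c$, this holds uniformly in $e\in(-c,c)$, giving the advertised $O_p(\Delta_d u_c)$ rate.

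The main obstacle is reconciling the domain $[-c,c]$ of $u_c$ with the shifted integration interval $[\eta-e-M\Delta_d,\eta-e+M\Delta_d]$, since $|\eta-e|$ can be as large as $2c$. The remark preceding the lemma explicitly signals how to handle this: the Lemma~\ref{etalemma2} argument does not depend on $\eta$, so one may enlarge the constant $c$ (replacing $u_c$ by $u_{c'}$ for any fixed $c'\ge 2c$) without changing the form of the bound. A secondary technical point is separating the training-sample randomness from the new-observation randomness in the $O_p$ notation; this is routine via a Markov-type argument showing that on an event of high training probability the conditional probability over $z$ that $|\hat d(z)-d(z)|>M\Delta_d$ is itself small.
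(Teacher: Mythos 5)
Your proof is correct and follows essentially the route the paper intends: the paper gives no written argument beyond ``similar to the proof of Lemma~\ref{etalemma2}'', and your symmetric-difference bound $\{\hat d(z)+e>0\}\triangle\{d(z)+e>0\}\subset\{|d(z)+e|\le|\hat d(z)-d(z)|\}$ combined with the density bound is exactly the mechanism behind that lemma and behind the paper's proof of Theorem~\ref{thm3}. You also correctly flag (and resolve, by enlarging the constant $c$) the mismatch between the domain $[-c,c]$ in the definition of $u_c$ and the shifted integration interval centered at $\eta-e$, a point the paper glosses over.
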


\begin{Lem}\label{etalemma4}
Under the assumptions of Theorem \ref{thm3}, $R_n(\hat{d},e)\rightarrow R(d,e)$ in probability uniformly in $e\in [-c_\eta, c_\eta]$.
\end{Lem}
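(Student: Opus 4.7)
The plan is to decompose the quantity of interest into an approximation error and a standard empirical-process error, writing
\begin{align*}
R_n(\hat{d},e) - R(d,e) = [R_n(\hat{d},e) - R_n(d,e)] + [R_n(d,e) - R(d,e)],
\end{align*}
and to show each piece is $o_p(1)$ uniformly over $e \in (-c, c)$.

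For the first piece, I would use the basic observation that $|I(\hat{d}(z_i) + e > 0) - l_i|$ and $|I(d(z_i) + e > 0) - l_i|$ differ by at most $|I(\hat{d}(z_i) + e > 0) - I(d(z_i) + e > 0)|$, which is nonzero only when $-e$ lies strictly between $d(z_i)$ and $\hat{d}(z_i)$. Setting $\Delta_n = \max_{1\leq i \leq n} |\hat{d}(z_i) - d(z_i)|$ yields
\begin{align*}
\sup_{e \in (-c, c)} |R_n(\hat{d},e) - R_n(d,e)| \leq \sup_{e \in (-c,c)} \frac{1}{n}\sum_{i=1}^n I\bigl(|d(z_i) + e| \leq \Delta_n\bigr).
\end{align*}
Working on the high-probability event from the proof of Theorem \ref{thm3}(ii), where $|z_i|_\infty = O(\sqrt{\log p})$ simultaneously for all $i$, Theorems \ref{thmomega} and \ref{thmdelta} give $\Delta_n = o_p(1)$. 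The expected ``thin-slab'' count is bounded by $2 U_c \Delta_n$ by the density bound $F_i \leq u_c \leq U_c$ near zero, and a uniform-in-$e$ Glivenko--Cantelli argument for the one-parameter family of slabs promotes this to a supremum bound.

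For the second piece, the collection $\{z \mapsto I(d(z) + e > 0) : e \in (-c, c)\}$ is indexed by the one-dimensional threshold $-e$ and forms a VC class of indicators (VC dimension at most $2$); the composite class $\{|I(d(z)+e>0) - l| : e \in (-c,c)\}$ inherits the same VC structure. A standard empirical process (or Glivenko--Cantelli) result therefore delivers $\sup_{e \in (-c,c)} |R_n(d,e) - R(d,e)| = O_p(n^{-1/2}) = o_p(1)$. Combining the two bounds yields the claim.

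The main obstacle is the data dependence in the first piece: because $\hat{d}$ is estimated from the same sample $(z_i, l_i)$ that enters $R_n$, one cannot treat $\hat{d}$ as a fixed function when invoking concentration. I would sidestep this by conditioning on the high-probability event on which $\Delta_n$ is small uniformly (from the tail bounds on $|z_i|_\infty$ together with $\|\hat{\Omega}-\Omega\|_\infty$ and $|\hat{\delta}-\delta|_\infty$ established earlier), thereby reducing the in-sample error to a thin-slab count controlled uniformly in $e$ by the essential-supremum bound $u_c \leq U_c$. This is exactly the role played by the boundedness assumption on $F_i$ inherited from Theorem \ref{thm3}.
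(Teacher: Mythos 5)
Your argument is correct in outline but takes a genuinely different route from the paper. The paper does not add and subtract $R_n(d,e)$; instead it works with a bias--variance decomposition of $R_n(\hat d, e)$ directly and handles the data dependence of $\hat d$ by a leave-one-out/leave-two-out device: it introduces $\hat d_{-i}$ (the estimate refit without sample $i$) and $\hat d_{-(i,j)}$, shows via equation (3.2) and Lemma~\ref{etalemma3} that $E R_n(\hat d, e) - R(d,e) \to 0$, shows that $\mathrm{Cov}\bigl(I\{\hat d(z_i)+e>0\},\, I\{\hat d(z_j)+e>0\}\bigr) = o(1)$ because the leave-two-out version decouples, and concludes by Markov's inequality. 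Your approach instead isolates a plug-in error term $R_n(\hat d, e) - R_n(d, e)$ controlled by a thin-slab count and a stochastic term $R_n(d, e) - R(d, e)$ controlled by a VC/Glivenko--Cantelli bound for threshold indicators. The trade-offs: your route gives a cleaner and more explicit handle on uniformity over $e$ (the one-dimensional VC class makes the $\sup_e$ step automatic, whereas the paper asserts uniformity but leaves it implicit in the leave-one-out calculations), and it sidesteps the need to introduce refit estimators $\hat d_{-i}$; the paper's route is more elementary (no empirical-process machinery beyond Markov) and handles the data dependence of $\hat d$ head-on rather than by conditioning on a high-probability event. One point you should tighten: when you write $\Delta_n = \max_{1 \le i \le n}|\hat d(z_i) - d(z_i)| = o_p(1)$, this requires the bound $|z_i|_\infty = O(\sqrt{\log p})$ to hold simultaneously for all $n$ in-sample points (a union bound over $i$), and the slab width on the right of your inequality should be replaced by a deterministic sequence $\delta_n$ with $P(\Delta_n > \delta_n) \to 0$ before invoking Glivenko--Cantelli, since the class $\{z : |d(z)+e| \le \Delta_n\}$ with random $\Delta_n$ is not directly amenable to a uniform law; you gesture at this in your last paragraph, and it is the right fix.
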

\begin{proof}
Denote all the samples in the two classes as $\{z_i, i=1,\ldots, n_1+n_2\}$ and denote the estimator obtained by leaving the $i$th sample out as $\hat{d}_{-i}$. Similarly we use $\hat{d}_{-(i,j)}$ to denote the estimator obtained by leaving the $i$th and $j$th samples out.
From (\ref{3.2}), we immediately have that for any $e\in[-c_\eta,c_\eta]$,
\begin{eqnarray*}
&&EI\{\hat{d}(z_i)+e>0\}-EI\{\hat{d}_{-i}(z_i)+e>0\}\\
&=&EI\{d(z_i)+e>0\}-EI\{d(z_i)+e>0\}+o(1)\\
&=&o(1).
\end{eqnarray*}
Together with Lemma \ref{etalemma3} we have
\begin{eqnarray}\label{biaseta}
ER_n(\hat{d},e)-R(d,e)=ER(\hat{d}_{-i},e)+o(1)-R(d,e)\rightarrow 0,
\end{eqnarray}
uniformly in $e\in [-c_\eta, c_\eta]$. Note that
\begin{eqnarray*}
Var(I\{\hat{d}(z_i)+e>0\})\leq \frac{1}{4},
\end{eqnarray*}
and for any $(i,j)\in \{(k,l): 1\leq k, l\leq n_1+n_2, i\neq j\}$,
\begin{eqnarray*}
&&Cov(I\{\hat{d}(z_i)+e>0\},I\{\hat{d}(z_j)+e>0\})\\
&=&Cov(I\{\hat{d}_{-(i,j)}(z_i)+e>0\},I\{\hat{d}_{-(i,j)}(z_j)+e>0\})+o(1)\\
&=&Cov(I\{{d}(z_i)+e>0\},I\{d(z_j)+e>0\})+o(1),
\end{eqnarray*}
where the last step can be obtained using (\ref{3.2}) and Lemma \ref{etalemma3} and the $o(1)$ term does not depend on $e$. Since $z_i, z_j$ are independent, we immediately have
\begin{eqnarray}\label{vareta}
Var(R_n(\hat{d},e))\rightarrow 0,
\end{eqnarray}
uniformly in $e\in[-c_\eta, c_\eta]$. The lemma is then proved by Markov's inequality and the uniform convergence of the bias (\ref{biaseta}) and the variance (\ref{vareta}) of $R_n(\hat{d},e)$.
\end{proof}

\noindent{\bf Proof of Theorem \ref{thm4}}\\
The result that $\hat{\eta}\rightarrow \eta$ can be obtained by Proposition \ref{etalemma1}, Lemma \ref{etalemma4} and Theorem 5.7 of \cite{Vardervart}. The second statement immediately follows from Theorem \ref{thm3}.


\vskip 0.2in

\renewcommand{\baselinestretch}{1}
\normalsize
\bibliographystyle{natbib}

\end{document}